\providecommand{\algorithmname}{Algorithm}
\theoremstyle{plain}
\newtheorem{thm}{\protect\theoremname}
\theoremstyle{plain}
\newtheorem{prop}[thm]{\protect\propositionname}
\theoremstyle{remark}
\newtheorem{rem}[thm]{\protect\remarkname}
\theoremstyle{plain}
\newtheorem{lem}[thm]{\protect\lemmaname}
\DeclareMathOperator*{\argmax}{arg\,max}
\global\long\def\s[#1]{\textnormal{\scriptsize #1}}
\global\long\def\st[#1]{\textnormal{\tiny #1}}
\global\long\def\Dkl{\mathrm{D_{KL}}}
\global\long\def\ucb{\mathrm{UCD}}
\global\long\def\P{\mathbb{P}}
\global\long\def\E{\mathbb{E}}
\global\long\def\I{\mathbbm{1}}
\global\long\def\v[#1]{\mathbf{#1}} 
\global\long\def\m[#1]{\boldsymbol{#1}} 
\global\long\def\dtv{\mathrm{d_{TV}}}
\global\long\def\r[#1]{#1}
\global\long\def\d{\mathrm{d}}
\global\long\def\dfn{:=}
\global\long\def\trre[#1,#2]{\overset{{\scriptstyle (#2)}}{#1}} 
\author{Nir Weinberger$^{1}$ and Michal Yemini$^{2}$
\thanks{
$^1$ Nir Weinberger is with the Faculty of 
Electrical and Computer Engineering, Technion - Israel Institute of Technology. (Email: nirwein@technion.ac.il.)\\
$^2$Michal Yemini is  with the Faculty of 
Electrical and Computer Engineering, Princeton University. (Email: myemini@princeton.edu.)\\
This paper was presented in part at the  2022 IEEE International Symposium on Information Theory (ISIT), \cite{ISIT2022_WY}.}
}
\providecommand{\lemmaname}{Lemma}
\providecommand{\corollaryname}{Corollary}
\providecommand{\propositionname}{Proposition}
\providecommand{\remarkname}{Remark}
\providecommand{\theoremname}{Theorem}
\begin{document}


\title{Multi-Armed Bandits with\\ Self-Information  Rewards}

\maketitle
\renewcommand\[{\begin{equation}}
\renewcommand\]{\end{equation}}

\begin{abstract}

   This paper introduces the informational multi-armed bandit (IMAB) model in which at each round, a player chooses an arm, observes a symbol, and receives an unobserved reward in the form of the symbol's self-information. Thus, the expected reward of an arm is the Shannon entropy of the probability mass function of the source that generates its symbols. The player aims to maximize the expected total reward associated with the entropy values of the arms played. 
   Under the assumption that the alphabet size is known, two UCB-based algorithms are proposed for the IMAB model which consider the biases of the plug-in entropy estimator. The first algorithm optimistically corrects the bias term in the entropy estimation. The second algorithm relies on data-dependent confidence intervals that adapt to sources with small entropy values. Performance guarantees are provided by upper bounding the expected regret of each of the algorithms. Furthermore, in the Bernoulli case, the  asymptotic behavior of these algorithms is compared to the Lai-Robbins lower bound for the pseudo regret. Additionally, under the assumption that the \textit{exact} alphabet size is unknown, and instead the player only knows a loose upper bound on it,  a UCB-based algorithm is proposed, in which  the player aims to reduce the regret caused by the unknown alphabet size in a finite time regime.  Numerical results illustrating the expected regret of the algorithms presented in the paper are provided.

\end{abstract}

\begin{IEEEkeywords}
Multi-armed bandits, self-information rewards, entropy estimation, upper confidence bounds, support size estimation.
\end{IEEEkeywords}

\section{Introduction \label{sec:Introduction}}

Multi-armed bandit (MAB) problems are sequential decision problems where a player makes iterative decisions in an unfamiliar environment to optimize a total outcome. More specifically, at every round the player is given a choice of $K$ arms, each affiliated with an unknown probability mass function (PMF) for its reward. The player chooses an arm to play based on its previous arm choices and received rewards, and then receives a random reward generated by the chosen arm. The player's objective is to maximize the total expected reward it receives from all the rounds it has played.
If the player knew the expected reward of each arm, it could maximize its total expected reward by repeatedly choosing the arm with the highest expected reward. However, since the player does not know in advance the expected reward of each arm, it must balance two conflicting acts, namely, \textit{exploration and exploitation}. 
When making an arm choice, the player wants to exploit the knowledge it accumulated and choose the arm with the highest expected reward, however, naively choosing repeatedly the arm with the highest estimated reward can be sub-optimal since this estimate can be erroneous.
To that end, the player periodically dedicates rounds to exploration, aiming to increase the estimation precision of the expected rewards. Balancing the wish to exploit current observations and maximize the immediate reward with the need to explore other arms to increase estimation precision and thus future rewards lies at the heart of MAB decision algorithms; it is known as the exploration-exploitation trade-off. 

In the classical MAB problem \cite{lai1985asymptotically}, the reward of an arm is independently and identically distributed over different rounds, and so the expected reward of each arm is the mean of its reward distribution. Furthermore, it can be estimated by the sample mean of the observed rewards which is an unbiased estimator. 
The classical model has been extended in numerous ways to include, among other models, MAB with linear reward functions \cite{doi:10.1287/moor.1100.0446,7472588,Dani2008StochasticLO,NIPS2011_4417}, Markovian dynamics and rewards \cite{Anantharam_Walrand_Markov_reward,Jaksch:2010:NRB:1756006.1859902,Bartlett2009REGALAR,Fruit:2018,yemini2021restless,gafni2022learning}, and  combinatorial bandits with monotone reward functions \cite{Combinatorial_general_nips2016}.
In these models, the prevalent measure for the performance of the player's arm choices is the total expected regret, which measures the cumulative difference between the expected reward of the optimal arm and that of the arms that were sequentially chosen by the player.

 In this paper, we consider a different reward structure, which is based on the  \textit{informativness} of the arm. If we consider each of the $K$ arms as an information source emitting independent and identically  distributed (IID) symbols, a player may have the goal of sampling from the source which is most informative. Clearly, there could be different ways to measure this quantity, and in this work we focus on the natural choice of \textit{Shannon's entropy}. This is motivated both by the standard interpretation of entropy as a measure of uncertainty, the convenient analytic properties of the entropy functional, and its practical applicability in applications such as anomaly detection \cite{entropy_max_HAIDARSHARIF20122543,entropy_max_ICNC_2017,entropy_max_DBLP:conf/mm/HuPBG17,DBLP:journals/entropy/HowediLP20}. We thus henceforth refer to a MAB problem with entropy rewards as \textit{informational} MAB (IMAB). At each round the player observes a random symbol generated from the PMF of its chosen arm. Letting  the true probability of the generated symbol $x$ playing arm $i$ be $p_i(x)$, the instantaneous reward associated with this symbol is its self-information $-\log p_{i}(x)$. Using the symbol observations from the previously played arms, the player aims to choose the arm with maximal entropy. 
 Evidently, this model is different from standard MAB since the expected reward of each arm, to wit, its entropy, is a \textit{non-linear functional} of the PMF, rather than its mean (which is a linear functional). Moreover, at each round $t$, the instantaneous reward function depends on the \textit{probability} of a symbol and not its value $x(t)$. Therefore, the player does not directly observe the instantaneous rewards $-\log p_{i}(x(t))$, but can only estimate it based on its previous observations. As a result, and as we next discuss, the IMAB problem is intimately related to the problem of confidence bounds in entropy estimation.

A highly successful algorithmic approach to MAB problems is \textit{optimism in the face of uncertainty}, where the uncertainty in the reward estimation of each arm is replaced by an optimistic estimate that is based on upper confidence bounds (UCB)
\cite{Auer:2003:UCB:944919.944941,bubeck2012regret}. The performance guarantees of the expected regret of UCB algorithms are achieved by utilizing concentration inequalities (see  primers in \cite{Dubhashi2009,CIT-064_igal}), which  bound the probability that the unbiased sampled mean of the reward is outside a chosen distance, known as the confidence interval, of the expected reward of an arm. For the entropy functional, the plug-in estimator of the entropy is well-known to be \textit{biased}, and in fact, there are no finite variance unbiased estimators of the entropy in general alphabet discrete settings \cite{Paninski2003}. However, it is also known that the bias of the plug-in estimator is upper bounded by $\log(1+\frac{|\cal{X}|}{n})$ where $|\cal X|$ is the alphabet size and $n$ is the number of samples used for estimation \cite{Paninski2003}. Moreover, the entropy functional satisfies a bounded-difference inequality with respect to (w.r.t.) to the samples, and so an application of McDiarmid's inequality  \cite{Antos2001} resulted in  a concentration inequality bound for the plug-in entropy estimator w.r.t. its (biased) mean. Therefore, our first approach for confidence intervals in entropy estimation is to use a bias-corrected plug-in entropy estimator. Nonetheless, the drawback of this approach is that the additional bias term in the confidence interval leads to a large interval in case the alphabet of the arm is large, even if the entropy of this arm is very low. Therefore, we develop a second type of confidence interval 
bounds that is based on a total variation bound on the entropy difference of a pair of PMFs \cite{Ho_Yeung2010, sason2013entropy}. This bound further hinges on a concentration inequality for the total variation, which depends on a  functional of the arm's PMF (denoted $\zeta(p)$ in what follows), which essentially quantifies the effective alphabet size of the arm 
(given by $\zeta(p)|\cal X|$). We then show that $\zeta(p)$ itself can be estimated from the samples at the previous rounds, and this estimate can be used in a UCB algorithm in lieu of the true value. 
Our upper confidence intervals directly depend on the alphabet size. In practice, the alphabet size may not be known to the player in advance and only a loose upper bound on the support size can be available. Therefore, we additionally propose a UCB algorithm that incorporates support size estimation. For the sake of simplicity of presentation, we derive these bounds for the bias-corrected estimator approach, but it can nonetheless be similarly applied to  confidence interval bounds that are based on a total variation bound. 

\paragraph*{Main Contributions and paper outline}
In Sec. \ref{sec:Problem-Formulation} we formulate the IMAB problem, and  in  Sec. \ref{sec:The-Upper-Confidence} we state a generic UCB algorithm for the IMAB problem, which takes
a choice of an entropy estimator and a choice of an upper confidence
bound (on that estimator) as inputs. In later sections we specify this algorithm for particular choices, and derive regret bounds. First, in Sec. \ref{sec: Bias corrected UCB} we bound the regret of a UCB algorithm that is based on 
a bias-corrected plug-in estimator, and obtain a regret upper bound, which roughly scales as the standard UCB bound (for mean-based rewards), yet only after a large number of rounds $O(\exp{\sqrt{|\cal X|}})$ where here $\cal X$ is the maximal alphabet size of the arms. Then, in Sec. \ref{sec: TV UCB} we present a UCB algorithm that is based on concentration of total variation distance, with the goal of ameliorating this dependence on the alphabet size in case the PMFs of the arms are close to the vertices of the simplex (to wit, there exist a letter whose probability is close to $1$). This regime is where elaborated UCB algorithms may lead to improved regret bounds. From a practical point of view, this fits anomaly detection scenarios, in which the arms are mostly "idle", and thus most of the time emit the high probability symbol, and only occasionally a different symbol ("anomaly"). The player then needs to find the arm which is the "least idle". The UCB algorithms in this section are based on  data-dependent confidence intervals, similarly to variance-UCB, which has the merit of adapting the bound to cases in which the entropy of the source is much smaller compared to its maximal value. While our motivation is the large alphabet case, in order to facilitate ideas in a clean way, we first consider Bernoulli arms, for which the probability of the symbol $'1'$ being close to zero indicates that the source is mostly idle. In addition, this setting can also
be compared with the Lai-Robbins lower bound \cite[Thm. 1]{lai1985asymptotically}, which reveals the asymptotic optimality of the proposed UCB algorithm. We then extend the analysis to alphabets of arbitrary size. 
Afterward, in Sec. \ref{sec:unknown_support} we relax the assumption that the support sizes $|\mathcal{X}_{i}|$ are known to the player, and  assume a bound of $\kappa^{-1}$ on the  minimal probability in the support. This leads to an upper bound of $\kappa$ on the support size, which may significantly overestimate the true support size. In turn, this leads to an overestimate of the bias of the entropy estimator, then to an overestimate of the confidence interval used by the player, and consequently, to an excessively large regret. To ameliorate this phenomenon, we propose a confidence interval bound that is based on online support size estimation, and analyze the corresponding regret. The proposed algorithm leads to improved regret bound that can be smaller by a factor of $\sqrt{\kappa}$ compared to using the naive bias-corrected plug-in estimator (with $\kappa$ used for the alphabet size). 
Finally, in Sec. \ref{sec:Simulations} we provide a few  numerical examples that support the theoretical findings, and in 
Sec. \ref{sec:Summary-and-Future} we summarize the paper. 

\paragraph*{Related work} We conclude the introduction by mentioning related work in the entropy estimation and bandit problem literature. 
The general problem of entropy estimation is well studied \cite{Antos2001,Paninski2003,Grassberger2008,Ho_Yeung2010,Nemenman2011,Chao2013,Archer2013,Valiant2013,Schurmann2015,Jiao2015,Jiao2017}. These papers lead to tight (and even optimal) entropy estimators, and here we build upon their ideas to obtain a confidence interval bound, which is both tailored to the IMAB problem and can also be efficiently estimated from data. 
In the multi-armed bandit literature, information-theoretic functionals  have been used in recent years  
to decrease the expected regret of several MAB models 
\cite{NIPS2014_IDS_Van_Roy,pmlr-v75-kirschner18a,pmlr-v134-kirschner21a}. Using the mutual information of the probabilities for arm sampling at two consecutive rounds, information-directed sampling (IDS) outperforms both UCB-based algorithms and Thompson sampling \cite{THOMPSON_1933,Thompson_JMLR:v17:14-087}  in problems with special structures, such as dependent on prior models \cite{NIPS2014_IDS_Van_Roy}, and bandits with  arm-dependent heteroscedastic noise \cite{pmlr-v75-kirschner18a}. Nonetheless, these works utilize informational measures to create exploration-exploitation trade-offs, and the reward structure is standard. Extending reward estimation for reward functions whose mean depends on the higher moments, or even the complete knowledge of distribution function is the focus on the works  \cite{Mannor_2014_general_reward,Combinatorial_general_nips2016}. However, the work \cite{Mannor_2014_general_reward} is limited to a known parametric family of distributions with unknown parameters. Moreover, \cite{Combinatorial_general_nips2016} relies on  stochastically dominant confidence bounds that requires monotonically increasing instantaneous reward function, however the self-information $-\log(p_i(x))$ is monotonically \textit{decreasing} in $p_i(x)$. 
Furthermore, it is assumed in  \cite{Mannor_2014_general_reward,Combinatorial_general_nips2016} that the instantaneous reward is directly known, however, in our case the instantaneous reward is unknown to the player and is not a function of the symbol outcome but rather of its probability. 

\section{Problem Formulation \label{sec:Problem-Formulation}}
We first define a few notation conventions that will be used in the rest of the paper. For $a,b\in\mathbb{R}$, we denote $\max\{a,b\}\dfn a\vee b$ and $\min\{a,b\}\dfn a\wedge b$, as well as $(t)_{+}=t \vee0$. Furthermore, we denote by $\boldsymbol{X}\cup\boldsymbol{Y}$ the concatenation of the vector $\boldsymbol{X}$ and the vector $\boldsymbol{Y}$.
To focus the reader on the first-order terms  
we denote the linear-times-polylogarithmic function by
\begin{equation}
\Lambda_{k}(s)\dfn s\log^{k}s,\label{eq: linear-times-polylog}
\end{equation}
(which can be thought of as an almost linear function). 
For a discrete alphabet $\cal Y$, we denote the entropy of a PMF $p$ over an alphabet $\cal Y$  by $H(p)\dfn-\sum_{y\in{\cal Y}}p(y)\log p(y)$,
where logarithms are arbitrarily taken to the natural base. We denote the \textit{total variation distance} between two PMFs $p$ and $q$ on a finite alphabet
${\cal Y}$ by $\dtv(p,q)\dfn\sum_{y\in{\cal Y}}\left|p(y)-q(y)\right|$.
We remark at this point that in what follows, as customary, we have opted for  the simplicity of our bounds over obtaining the tightest constants possible.

Consider the following IMAB problem. Let
$\{X_{i}\}_{i=1}^{K}$ be a set of $K\geq2$ memoryless sources, each
defined on a possibly different alphabet ${\cal X}_{i}$, such that
$p_{i}(x)\dfn\P[X_{i}=x]$. We further denote by $p_{i}=\{p_{i}(x)\}_{x\in{\cal X}_{i}}$
the full PMF of the $i$th source. The IMAB problem is a game in which at each
round $t$, the player chooses one of the sources $i\in[K]\dfn\{1,2\ldots,K\}$
and observes the $t$-th symbol $X_{i}(t)$
from that source. In the context of MAB, each of the sources is referred
to as an \emph{arm}. In this paper, we assume that the random reward associated
with this arm choice and this observation is the \emph{self-information}
$-\log p_{i}(X_{i}(t))$, and so the expected reward of sampling only
arm $i$ is $-\E[\log p_{i}(X)]=H(p_{i})\dfn H_{i}$,
which is the entropy of the $i$th source. The goal of the player is to choose the arm with the maximal expected
reward, that is, the maximal entropy, $i^{*}\in\argmax_{i\in[K]}H_{i}$.
The player, which does not know in advance the PMFs $p_{i}$
(and so also not the entropy values $H_{i}$) estimates the expected
reward $H_{i}$ of each arm from its previous actions and observations. The first part of the paper, i.e., Sections \ref{sec: Bias corrected UCB}-\ref{sec: TV UCB} assumes that the player knows in advance the alphabet size of the sampled random variables. Later on, in Section \ref{sec:unknown_support}, the scenario in which the player does not know in advance the exact alphabet size is considered, and instead it only has access to a loose upper bound on it. This models the scenario in which the alphabet size is considerably larger than the actual support size.

We denote the arm choice of the player at round $t$ by $I(t)$, and
we let $N_{i}(t)=\sum_{\tau=1}^{t}\I[I(\tau)=i]$ be the number of
times in which arm $i$ was sampled up to round $t$. To measure the performance of the policies used by the player, we will adopt the standard expected \textit{pseudo-regret} \cite[Ch. 1]{bubeck2012regret}
\begin{equation}
R(t)\dfn t\cdot H_{i^{*}}-\sum_{i\in[K]}\E(N_{i}(t))\cdot H_{i}.\label{eq:exp_regert_def}
\end{equation}
Letting $\Delta_{i}\dfn H_{i^{*}}-H_{i}$ denote
the \textit{gap} of the $i$th arm, we may equivalently represent the expected
pseudo-regret as 
\begin{equation}
R(t)=\sum_{i\in[K]:\Delta_{i}>0}\E(N_{i}(t))\cdot \Delta_{i}.\label{eq: expected pseudo-regret}
\end{equation}

\section{The Upper Confidence Bound Algorithm for Entropy Rewards \label{sec:The-Upper-Confidence}}

In this section, we present a generic UCB algorithm for the IMAB problem.
Similarly to the UCB algorithm with standard rewards \cite[Sec. 2.2]{bubeck2012regret}
\cite[Ch. 1]{slivkins2019introduction}, the algorithm is based on
an entropy estimator for which an upper confidence bound is known
to hold with high probability. In general, let $\boldsymbol{Y}\dfn\{Y_{\ell}\}_{\ell\in[n]}$
be $n$ IID samples from a PMF $p$ over a finite alphabet ${\cal Y}$.
Suppose that there exists an entropy estimator $\hat{H}(\boldsymbol{Y},n)$
and an upper confidence deviation (UCD) function $\ucb(\boldsymbol{Y},n,\delta)$
(for $\delta\in(0,1)$) for which the upper confidence bound 
\begin{equation}
H(p)\leq\hat{H}(\boldsymbol{Y},n)+\ucb(\boldsymbol{Y},\delta,n)\label{eq: UCB for entropy general}
\end{equation}
holds with probability larger than $1-\delta$. Note that both the
estimator $\hat{H}(\boldsymbol{Y},n)$ and the confidence deviation
$\ucb(\boldsymbol{Y},\delta,n)$ may depend on the observed source
symbols $\boldsymbol{Y}$. The algorithm keeps a set of observed samples
from each of the arms up to any round $t$. Based on the samples of
each arm, the algorithm computes the value of the estimator and the
upper confidence deviation for each of the arms. The played arm is
then the one maximizing the estimated entropy plus the confidence
deviation, that is, the right-hand side (RHS) of (\ref{eq: UCB for entropy general})
for the set of observations of each of the arms. The new observed
sample is then added to the set of observations of that played arm.

The algorithm takes as input the following:
\begin{itemize}
\item The parameters of the information sources, namely, the number of arms
$K$ and the alphabet sizes $\{{\cal X}_{i}\}_{i\in[K]}$. When these exact values are not known in advance, with a slight abuse of notations, we use the input parameters  $\{\kappa_{i}\}_{i\in[K]}$ such that $|{\cal X}_{i}|\leq\kappa_{i}$ for all $i\in[K]$. These inputs are used in Section \ref{sec:unknown_support} which considers the unknown alphabet case;
\item A sequence of entropy estimators $\{\hat{H}(\cdot,n)\}_{n\in\mathbb{N}_{+}}$
and a sequence of upper confidence deviation functions $\{\ucb(\cdot,\cdot,n)\}_{n\in\mathbb{N}_{+}}$;
\item A real confidence parameter $\alpha>2$ and a confidence function
$\delta(t)\equiv\delta_{\alpha}(t)$ which determines the required
reliability of the confidence interval at any round $t$. 
\end{itemize}
At each round $t\in\mathbb{N}_{+}$, the algorithm plays a chosen
arm $i\in[K]$ and observes the sample $X_{i}(t)$. The output of
the algorithm is $\{N_{i}(t)\}_{i\in[K],\;t\in\mathbb{N}_{+}}$ the
number of times each of the arm have been played up to each of the
rounds $t$ (or, equivalently, the played arm at each round $\{I(t)\}_{t\in\mathbb{N}_{+}}$).
Given this output, the pseudo-regret at round $t$ is given by $\sum_{i\in[K]:\Delta_{i}>0}N_{i}(t)\Delta_{i}$
whose expected value is $R(t)$, as in (\ref{eq: expected pseudo-regret}).
The input, the actions and the policy of the player are summarized in Algorithm
\ref{alg:A-UCB-general}. Therein, $\boldsymbol{X}_{i}(t)$ is the
set of samples available to the player at round $t$ from the $i$th
arm. 

\begin{algorithm}
\begin{algorithmic}[1]

\Procedure{An Upper Confidence Bound Algorithm}{$K$, $\{{\cal X}_{i}\}_{i\in[K]}$,
$\hat{H}(\cdot,n)$ ,$\ucb(\cdot,\cdot,n)$, $\alpha$, $\delta_{\alpha}(t)$}

\State \textbf{set} $\boldsymbol{X}_{i}(0)=\phi$ and $N_{i}(0)=0$
for all $i\in[K]$ 

\Comment{ The observation set of each arms is empty at round $t=0$}

\For{ $t=1,2,\ldots$}

\State  \textbf{play} $I(t)\in\arg\max_{i\in[K]}\{\hat{H}(\boldsymbol{X}_{i}(t-1),N_{i}(t-1)))+\ucb(\boldsymbol{X}_{i}(t-1),\delta_{\alpha}(t),N_{i}(t-1))\}$

\State  \textbf{set} $\boldsymbol{X}_{I(t)}(t)=\boldsymbol{X}_{I(t)}(t-1)\cup X_{I(t)}(t)$
and $N_{I(t)}(t)=N_{I(t)}(t-1)+1$

\Comment{ The observation of the chosen arm is concatenated to the sequence of
observations}

\State  \textbf{set} $\boldsymbol{X}_{i}(t)=\boldsymbol{X}_{i}(t-1)$
and $N_{i}(t)=N_{i}(t-1)$ for all $i\in[K]\backslash I(t)$

\Comment{The observation set of others arms is unchanged}

\EndFor

\State \textbf{return} $\{N_{i}(t)\}_{i\in[K],\;t\in\mathbb{N}_{+}}$ 

\Comment{The number of times each arm $i\in[K]$ have been played
up to each round $t\in\mathbb{N}_{+}$}

\EndProcedure

\end{algorithmic}

\caption{A general UCB-entropy algorithm \label{alg:A-UCB-general}}
\end{algorithm}

Table \ref{table:notations_def_used_thm} summarizes the entropy estimators $\hat{H}(\cdot,n)$, upper confidence bounds $\text{UCB}(\cdot,\cdot,n)$, and functions $\delta_{\alpha}(t)$ that we develop for Algorithm \ref{alg:A-UCB-general}. Additionally,  Table \ref{table:notations_def_used_thm} refers to the relevant theorems that provide performance guarantees for the chosen inputs. 

\begin{table}[h!]
\centering
\begin{tabular}{||c |c| c| c|c|c|c|c||} 
 \hline
  Thm. &Alphabet &  $\hat{H}(\boldsymbol{Y},n)$ & $\delta_{\alpha}(t)$ & $\text{UCB}(\boldsymbol{Y},\delta,n)$ & \shortstack{PMF based\\ UCB} & \shortstack{Pseudo\\regret}   \\ [0.5ex]
 \hline\hline
Thm. \ref{thm: UCB-Bias regret} & \shortstack{discrete, \\ finite,\\known} & $H(\hat{p}(n))$ & $t^{-\alpha}$&  \eqref{eq: confidence bound bias} & no &  \eqref{eq: regret of bias}  \\ \hline
\rule{0pt}{15pt}Thm. \ref{thm: UCB-Bernoulli regret}& \shortstack{binary,\\known}& $H(\hat{p}(n))$ & $6t^{-\alpha}$ & \eqref{eq: confidence bound Ber} & yes & \eqref{eq: regret bound Ber}\\ \hline
\rule{0pt}{17pt}Thm. \ref{thm: UCB-Bernoulli regret close to half}& \shortstack{binary,\\known} & $H(\hat{p}(n))$ &$4t^{-\alpha}$&  \eqref{eq: confidence bound Ber half}& yes & \eqref{eq: regret bound Ber close to half}\\ \hline
Thm. \ref{thm: UCB-TV regret} & \shortstack{discrete, \\finite,\\known} & $H(\hat{p}(n))$ &$t^{-\alpha}$&  \eqref{eq: confidence bound TV}& yes & \eqref{eq: u tv}\\ \hline
Thm. \ref{thm:upper_regret_uknown_support}& \shortstack{finite but \\ unknown support} & $H(\hat{p}(n))$& $t^{-\alpha}$ & \eqref{eq:ucb_uknown_support}& no & \eqref{eq:upper_pregret_uknown_support}\\
[1ex]
 \hline
\end{tabular}
\vspace{0.1cm}
\caption{Summary of inputs for Algorithm \ref{alg:A-UCB-general} and results. }
\label{table:notations_def_used_thm}
\end{table}

\section{Upper Confidence Bounds with Bias-Corrected Entropy Estimation\label{sec: Bias corrected UCB}}

A straightforward idea for estimating entropy is the plug-in estimator,
in which the PMF of the source is estimated via the empirical PMF
of the samples, and then the entropy of the empirical PMF is used
to estimate the entropy of the source. As discussed in the introduction, the plug-in
estimator for the entropy concentrates around its expected value \cite{Antos2001},
yet suffers from a negative bias \cite{Paninski2003}. Thus, a natural
method of obtaining an upper confidence bound is by correcting this bias. Specifically, let $\boldsymbol{Y}=\{Y_{\ell}\}_{\ell\in[n]}$
be a sequence of IID samples from some distribution $p$ over the
alphabet ${\cal Y}$, and let (with a slight abuse of notation) $\hat{p}(n)=\{\hat{p}(y,n)\}_{y\in{\cal Y}}$
be the empirical mean of the $n$ samples, where $\hat{p}(y,n)\dfn\frac{1}{n}\sum_{\ell=1}^{n}\I\{Y_{\ell}=y\}$
for all $y\in{\cal Y}$.  
Then, the plug-in estimator $H(\hat{p}(n))$
is biased, and, as was proved in \cite{Paninski2003}, 
\begin{equation}
H(p)-B(n)\leq\E[H(\hat{p}(n))]\leq H(p),\label{eq:bias_negative_lower_bound}
\end{equation}
where 
\begin{equation}
B(n)\dfn\log\left(1+\frac{|{\cal Y}|-1}{n}\right)
\end{equation}
for $n\geq1$.
Therefore, the bias-corrected estimator $H(\hat{p}(n))+B(n)$ has
a nonnegative bias. Let 
\begin{flalign}\label{eq: confidence bound bias}
\ucb_{\text{bias}}(\delta,n)\dfn B(n)+\sqrt{\frac{2\log^{2}(n)}{n}\log\left(\frac{2}{\delta}\right)}.
\end{flalign}
The concentration result of the plug-in estimator from \cite[p. 168]{Antos2001}
implies the following confidence interval bound. For the sake of clarity of exposition, we present all the relevant proofs for this section in Appendix \ref{append:proofs sec Bias corrected UCB}.
\begin{prop}
\label{prop: bias corrected plug-in estimator}Let $\boldsymbol{Y}=\{Y_{\ell}\}_{\ell\in[n]}$
be IID from a discrete distribution $p$ over a \emph{finite} alphabet
${\cal Y}$ such that $p(y)\dfn\P[Y=y]$. Then, assuming $n\geq2$,
it holds for any $\delta\in(0,1)$ that
\[
\left|H(\hat{p}(n))-H(p)\right|\leq\ucb_{{\text{\emph{bias}}}}(\delta,n),
\]
with probability larger than $1-\delta$. 
\end{prop}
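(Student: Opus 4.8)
The plan is to decompose the deviation $H(\hat p(n)) - H(p)$ into a \emph{fluctuation} of the estimator around its own mean and a \emph{bias} term, to control the fluctuation by a concentration inequality and the bias by the already-established bound \eqref{eq:bias_negative_lower_bound}, and finally to combine the two so that the one-sided bias plus the symmetric concentration radius covers both tails of the absolute value.

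First I would establish that the empirical entropy $H(\hat p(n))$, viewed as a function $f(Y_1,\dots,Y_n)$ of the $n$ i.i.d. samples, satisfies a bounded-difference (McDiarmid) condition: replacing a single sample $Y_\ell$ perturbs the empirical PMF $\hat p(n)$ in exactly two coordinates by $\pm 1/n$, and the resulting change in $H(\hat p(n))$ is at most $c_n := \frac{2\log n}{n}$ for every coordinate. This is the bounded-difference estimate used by Antos and Kontoyiannis in \cite[p. 168]{Antos2001}; it follows from the concavity and the Lipschitz behaviour of $u\mapsto -u\log u$ on $[0,1]$ (one can check it holds with equality already at $n=2$, which is consistent with the hypothesis $n\ge 2$), and it is the source of the $\log(n)/n$ scaling that appears in $\ucb_{\text{bias}}$.

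Next I would apply McDiarmid's inequality with the uniform constant $c_n$, for which $\sum_{\ell=1}^n c_n^2 = \frac{4\log^2 n}{n}$, yielding the two-sided tail bound $\P[\,|H(\hat p(n)) - \E H(\hat p(n))| \geq \epsilon\,] \leq 2\exp(-n\epsilon^2/(2\log^2 n))$. Choosing $\epsilon = \sqrt{\frac{2\log^2 n}{n}\log(2/\delta)}$ makes the right-hand side equal to $\delta$, so that with probability larger than $1-\delta$ the estimator lies within $\epsilon$ of its own mean.

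Finally I would convert this deviation-from-the-mean statement into the desired deviation-from-$H(p)$ statement using \eqref{eq:bias_negative_lower_bound}, which gives the deterministic bracketing $0 \le H(p) - \E H(\hat p(n)) \le B(n)$. Writing $H(\hat p(n)) - H(p) = (H(\hat p(n)) - \E H(\hat p(n))) + (\E H(\hat p(n)) - H(p))$, on the high-probability event the first summand lies in $[-\epsilon,\epsilon]$ while the second, deterministic, summand lies in $[-B(n),0]$; hence $H(\hat p(n)) - H(p) \in [-(B(n)+\epsilon),\,\epsilon]$, whose absolute value is at most $B(n)+\epsilon = \ucb_{\text{bias}}(\delta,n)$. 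The main obstacle is the first step, namely pinning down the bounded-difference constant for the empirical entropy down to $n\ge 2$; once that $\log(n)/n$ constant is in hand, the McDiarmid step and the bias combination are routine, and it is precisely the one-sidedness of the bias in \eqref{eq:bias_negative_lower_bound} that allows the symmetric radius $\epsilon$ together with $B(n)$ to dominate the deviation on both sides.
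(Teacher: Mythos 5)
Your proposal is correct and follows essentially the same route as the paper: the concentration step is exactly the Antos--Kontoyiannis/McDiarmid bound $2\exp(-n\epsilon^2/(2\log^2 n))$ that the paper invokes, and the bias is absorbed using the one-sided bracketing $0\le H(p)-\E[H(\hat p(n))]\le B(n)$ from \eqref{eq:bias_negative_lower_bound}, just as in the paper's two one-sided tail computations. The only cosmetic difference is that you phrase the combination as a deterministic interval containment while the paper shifts the bias directly inside each tail probability; these are equivalent.
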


We may now specify the general Algorithm \ref{alg:A-UCB-general}
to the upper confidence bound of Proposition \ref{prop: bias corrected plug-in estimator},
and obtain the following guarantee on the expected regret. To this
end, let us denote
\begin{equation}
\Gamma_{\text{bias}}(\alpha,\beta,{\cal Y},\Delta,t)\dfn\max\left\{ \frac{|{\cal Y}|-1}{e^{\beta\cdot\Delta/2}-1},\;15\cdot\Lambda_{2}\left(\frac{8\cdot\log(2t^{\alpha})}{(1-\beta)^{2}\Delta^{2}}\right)\right\} .\label{eq: u bias}
\end{equation}

\begin{thm}
\label{thm: UCB-Bias regret}Assume that Algorithm \ref{alg:A-UCB-general}
is run with a plug-in entropy estimator $\hat{H}(\boldsymbol{Y},n)\equiv H(\hat{p}(n))$,
and upper confidence deviation $\ucb(\boldsymbol{Y},\delta,n)\equiv\ucb_{\text{\emph{bias}}}(\delta,n)$
with $\delta\equiv\delta_{\alpha}(t)=t^{-\alpha}$ and $\alpha>2$.
Let $\beta\in(0,1)$ be given. Then, the pseudo-regret is bounded
as 
\begin{equation}
R(t)\leq\sum_{i\in[K]:\Delta_{i}>0}\left[\Gamma_{\text{\emph{bias}}}(\alpha,\beta,{\cal X}_{i},\Delta_{i},t)\cdot\Delta_{i}+\frac{2(\alpha-1)}{\alpha-2}\cdot\Delta_{i}\right].\label{eq: regret of bias}
\end{equation}
\end{thm}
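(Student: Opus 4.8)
The plan is to follow the standard optimism-based regret decomposition. Starting from \eqref{eq: expected pseudo-regret}, it suffices to bound $\E[N_i(t)]$ for each suboptimal arm $i$ (those with $\Delta_i>0$) and then weight by $\Delta_i$. The heart of the argument is a pull-count threshold $u_i$, depending on the horizon $t$, beyond which arm $i$ can be selected only if a confidence bound fails. I would set $u_i\dfn\Gamma_{\text{bias}}(\alpha,\beta,{\cal X}_i,\Delta_i,t)$ from \eqref{eq: u bias} and write
\begin{equation*}
N_i(t)\le u_i+\sum_{\tau=1}^{t}\I[I(\tau)=i,\;N_i(\tau-1)\ge u_i],
\end{equation*}
the first term bounding the pulls made while $N_i(\tau-1)<u_i$.

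To control the indicator, note that $I(\tau)=i$ forces the UCB index of arm $i$ to be at least that of the optimal arm $i^*$. I would condition on the clean event $G_\tau$ that the two-sided bound of Proposition~\ref{prop: bias corrected plug-in estimator} holds simultaneously for $i$ and $i^*$ at round $\tau$, namely $|\hat H(\boldsymbol{X}_j(\tau-1),N_j(\tau-1))-H_j|\le\ucb_{\text{bias}}(\delta_\alpha(\tau),N_j(\tau-1))$ for $j\in\{i,i^*\}$; here the bias correction $B(n)$ built into $\ucb_{\text{bias}}$ is exactly what renders $\hat H_{i^*}+\ucb_{\text{bias}}$ a valid upper bound on $H_{i^*}$ despite the negative bias \eqref{eq:bias_negative_lower_bound}. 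On $G_\tau$ one has both $H_{i^*}\le\hat H_{i^*}+\ucb_{\text{bias}}$ and $\hat H_i\le H_i+\ucb_{\text{bias}}$, and chaining these with the selection inequality collapses to $\Delta_i\le 2\,\ucb_{\text{bias}}(\delta_\alpha(\tau),N_i(\tau-1))$. It therefore suffices to choose $u_i$ so that $2\,\ucb_{\text{bias}}(\delta_\alpha(\tau),n)\le\Delta_i$ for all $n\ge u_i$ and all $\tau\le t$; since $\delta_\alpha(\tau)=\tau^{-\alpha}\ge t^{-\alpha}$ makes the deviation largest at $\tau=t$, I only need to enforce this at the horizon confidence level $\log(2t^\alpha)$.

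The explicit form of $\Gamma_{\text{bias}}$ then emerges by splitting the requirement $2B(n)+2\sqrt{2\log^2(n)\log(2t^\alpha)/n}\le\Delta_i$ through the free parameter $\beta\in(0,1)$ into $2B(n)\le\beta\Delta_i$ and $2\sqrt{2\log^2(n)\log(2t^\alpha)/n}\le(1-\beta)\Delta_i$. Inverting $B(n)=\log(1+(|{\cal X}_i|-1)/n)$ gives the first branch $n\ge(|{\cal X}_i|-1)/(e^{\beta\Delta_i/2}-1)$ at once. The second reduces to the transcendental inequality $n\ge c\log^2 n$ with $c\dfn 8\log(2t^\alpha)/((1-\beta)^2\Delta_i^2)$, and I expect the \textbf{main obstacle} to be proving that $n=15\,\Lambda_2(c)=15\,c\log^2 c$ already satisfies it over the relevant range of $c$. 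This is an elementary but delicate monotonicity estimate that is precisely what pins down the constant $15$ and the second branch of the maximum in \eqref{eq: u bias}; taking the larger of the two branches gives $u_i=\Gamma_{\text{bias}}$.

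It remains to bound the expected number of clean-event failures. Taking expectations, the residual sum is at most $\sum_{\tau=1}^{t}\P[G_\tau^c]$. A union bound over the two arms and over the random sample counts $N_j(\tau-1)\in\{0,\dots,\tau-1\}$, each term controlled by $\delta_\alpha(\tau)=\tau^{-\alpha}$ through Proposition~\ref{prop: bias corrected plug-in estimator}, gives $\P[G_\tau^c]\le 2(\tau-1)\tau^{-\alpha}\le 2\tau^{1-\alpha}$. As $\alpha>2$, the comparison $\sum_{\tau\ge1}\tau^{1-\alpha}\le 1+\int_1^\infty x^{1-\alpha}\,\d x=\frac{\alpha-1}{\alpha-2}$ turns this into the additive constant $\frac{2(\alpha-1)}{\alpha-2}$. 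Hence $\E[N_i(t)]\le\Gamma_{\text{bias}}(\alpha,\beta,{\cal X}_i,\Delta_i,t)+\frac{2(\alpha-1)}{\alpha-2}$, and multiplying by $\Delta_i$ and summing over the suboptimal arms produces exactly \eqref{eq: regret of bias}.
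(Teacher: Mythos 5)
Your proposal is correct and follows essentially the same route as the paper: the same pull-count threshold $u_i=\Gamma_{\text{bias}}$, the same $\beta$-split of the confidence width into the bias branch and the concentration branch (with the polylog inversion lemma supplying the constant $15$), and the same union bound over arms and sample counts yielding $2\tau^{1-\alpha}$ and hence the additive $\frac{2(\alpha-1)}{\alpha-2}$. The only cosmetic difference is that you phrase the argument via a single clean event $G_\tau$ rather than the paper's three-event case analysis, which is an equivalent presentation.
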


We remark that $\beta$ is a parameter that is used in our analysis of the pseudo regret for the sake of serving tighter upper bounds and is not part of Algorithm \ref{alg:A-UCB-general}. The bound on the regret of Algorithm \ref{alg:A-UCB-general}
with a bias corrected entropy estimator in (\ref{eq: regret of bias}) is comprised
of a few terms. However, for any $i\in[K]$, there is only a single term that blows-up
as $\Delta_{i}\downarrow0$, given by
\[
\frac{c_{1}\cdot\log(t)}{\Delta_{i}}\cdot\log^{2}\left(\frac{c_{2}\cdot\log(t)}{\Delta_{i}^{2}}\right),
\]
for some constants $c_{1},c_{2}$. Thus, the regret scales as $\tilde{O}(\frac{\log(t)}{\Delta_{i}})$,
where the only difference from the standard UCB \cite[Thm. 2.1]{bubeck2012regret}
is the additional poly-logarithmic term. Then, if we consider for
simplicity the two-arm case ($K=2$) with $\Delta_{1}=0$ and $\Delta_{2}\equiv\Delta$,
since $\Delta t$ is always an upper bound on the pseudo-regret, we
may obtain the $\tilde{O}(\frac{1}{\Delta}\wedge \Delta t)=\tilde{O}(\sqrt{t})$,
which roughly matches the gap-independent bound in the standard MAB problem 
(e.g., \cite[Thm. 2.10]{slivkins2019introduction}).

Nonetheless, from a different perspective, assuming that the gaps are
all constants, then if $\log^{2}(t)=\tilde{O}(|{\cal X}_{i}|)$ the
regret will be determined by the first term in (\ref{eq: u bias}),
and so the regret bound is large as long as $t=O(\max_{i\in[K]}\exp(\sqrt{|{\cal X}_{i}|}))$. In the next section we develop a UCB algorithm which ameliorates this unfavorable behavior. 

\section{Upper Confidence Bounds with a Total Variation Bound \label{sec: TV UCB}}

As we have seen in Theorem \ref{thm: UCB-Bias regret}, the upper
bound on the regret of the UCB algorithm with a bias corrected entropy estimator is severely affected by the size of the alphabets ${\cal X}_{i}$. 
Therefore, a natural question is whether improved bounds can be obtained whenever the entropy of sources is much less than the alphabet size.
In this section, we propose algorithms that adapt to arms with very low entropy. The idea is similar to
variance-UCB \cite{auer2002finite,audibert2009exploration} that replaces the distribution-independent confidence interval of the standard UCB algorithm (which hinges, e.g., on  Hoeffding's inequality, assuming bounded rewards), with a  distribution-dependent confidence interval (which hinges, e.g., on Bernstein's inequality).
Our next proposed algorithms will similarly use a data-dependent UCD. For such algorithms, the confidence interval, which in principle depends on the unknown distribution, is also required to be estimated from the given observations. For the sake
of illustration, let us first consider the simpler case of Bernoulli arms, for which $p_{i}(1)=\P[X_{i}=1]$ is close to $0$ for some arm $i\in[K]$. The entropy of this arm is much smaller than the maximal possible
value of $\log|{\cal X}_{i}|=\log2$. A multiplicative
Chernoff's inequality (or Bernstein's inequality, see Lemma \ref{lem: empirical mean concentration Ber})
results a confidence interval of $O(\sqrt{\frac{p_{i}(1)\log(1/\delta)}{n}})$
in the estimation of $p_{i}(1)$ using $n$ samples from the source.
Since $p_{i}(1)\ll1$, this is much smaller than the $O(\sqrt{\frac{\log(1/\delta)}{n}})$
which stems from standard Chernoff's bound (or Hoeffding's inequality).
This confidence interval on $p_{i}(1)$ then leads to an improved
confidence interval bound on the error of the plug-in estimator of
the entropy. Since this confidence interval bound depends
on the unknown $p_{i}(1)$, it should also be estimated by the player, using its estimation of $p_{i}(1)$. The estimation error of the confidence interval is then another source of error that is addressed by our analysis. 

Thus, in what follows, we begin with the Bernoulli
case in Sec. \ref{subsec:The-Bernoulli-Case}, which leads to a more
transparent bound than the general case, and can also be compared
to the Lai-Robbins impossibility result \cite[Thm. 1]{lai1985asymptotically}.
We later on generalize this type of algorithm to arbitrary arm alphabets
in Sec. \ref{subsec:The-General-Alphabet}. As in the Bernoulli case, the confidence interval of 
arms with low entropy is smaller than ones with large entropy. The PMF of these arms is close to the vertices of the probability simplex, which in the Bernoulli case implies a low value of $p_{i}(1)$. In the general alphabet case, the value of $p_i(1)$ is replaced by a functional\footnote{We define this functional explicitly in \eqref{eq:zeta_def}.} $\zeta(p)\in[0,1]$, 
which satisfies that low values of $\zeta(p)$ are indicator of being close to the vertices of the simplex, 
and that can also be efficiently be estimated from the data. As a result, we additionally show that the effective alphabet size for the IMAB problem is $\zeta_{i}|{\cal X}_{i}|$, which demonstrates the utility of this functional.

\subsection{The Bernoulli Case \label{subsec:The-Bernoulli-Case}}

In this section, we consider the Bernoulli case, in which ${\cal X}_{i}=\{0,1\}$
for all the $K$ arms, $i\in[K]$. 
For brevity we use $h_{b}(p)\dfn-p\log p-(1-p)\log(1-p)$
to denote the binary entropy function. Furthermore, we assume for
simplicity of exposition that $p_{i}(1)=\P[X_{i}=1]\leq1/2$ for all
$i\in[K]$. The results can be extended in a straightforward manner
to remove this assumption. 
The proofs of the theoretical results presented in Section \ref{subsec:The-Bernoulli-Case} are included in  Appendix \ref{append:proofs subsec The-Bernoulli-Case}.
The proposed UCB algorithm and its regret
analysis are based on the following confidence deviation function
\begin{equation}
\ucb_{\text{ber}}(q,\delta,n)\dfn\sqrt{\frac{12q\log(\frac{6}{\delta})}{n}}\log\left(\frac{n}{q\log(\frac{6}{\delta})}\right)+\frac{18\log(\frac{6}{\delta})\log(n)}{n},\label{eq: confidence bound Ber}
\end{equation}
and the corresponding confidence interval bound for the plug-in entropy
estimator:
\begin{prop}
\label{prop: Bernoulii plug-in estimator}Let $\boldsymbol{Y}=\{Y_{\ell}\}_{\ell\in[n]}$
be IID from a Bernoulli with parameter $p=\P[Y_{i}=1]$, and let $\hat{p}(n)=\frac{1}{n}\sum_{\ell=1}^{n}\I\{Y_{\ell}=1\}$
be the empirical probability of $'1'$. Let $\delta\in[0,\frac{1}{2}]$
be given. If $n\geq200\cdot\log(\frac{4}{\delta})$ then 
\[
\left|h_{b}(\hat{p}(n))-h_{b}(p)\right|\leq\ucb_{\text{\emph{ber}}}(\hat{p}(n),\delta,n),
\]
with probability larger than $1-\delta$. 
\end{prop}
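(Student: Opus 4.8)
The plan is to combine a variance-sensitive (Bernstein-type) concentration bound for the empirical frequency $\hat{p}(n)$ with the fact that, in the binary case, the entropy difference is controlled by the binary entropy of the total variation gap. First I would invoke the multiplicative concentration inequality of Lemma \ref{lem: empirical mean concentration Ber} to bound the deviation $|\hat{p}(n)-p|$. Because $p\le 1/2$ may be arbitrarily small, the additive Hoeffding width $\sqrt{\log(1/\delta)/n}$ is wasteful, so I use the variance-sensitive form, giving a width of order $\sqrt{p\log(1/\delta)/n}+\log(1/\delta)/n$. Since the target deviation $\ucb_{\text{ber}}$ is expressed through the \emph{observable} $\hat{p}(n)$ rather than the unknown $p$, the first nontrivial step is a self-bounding argument: I substitute $p\le\hat{p}(n)+|\hat{p}(n)-p|$ into the width and solve the resulting inequality to obtain a purely empirical width, i.e.\ an estimate of the form $|\hat{p}(n)-p|\le W:=\sqrt{c_1\,\hat{p}(n)\log(1/\delta)/n}+c_2\log(1/\delta)/n$ holding with probability at least $1-\delta$.

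The second ingredient is a continuity bound on the entropy. In the binary case the classical \CK-type total-variation estimate collapses to $|h_b(a)-h_b(b)|\le h_b(|a-b|)$, valid whenever $|a-b|\le 1/2$, since the alphabet term $\dtv\cdot\log(|{\cal Y}|-1)$ vanishes for $|{\cal Y}|=2$. Under the concentration event I then have $|h_b(\hat{p}(n))-h_b(p)|\le h_b(|\hat{p}(n)-p|)\le h_b(W)$, where the last inequality uses that $h_b$ is nondecreasing on $[0,1/2]$ together with $W\le 1/2$, a bound guaranteed by the hypothesis $n\ge 200\log(4/\delta)$.

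It remains to bound $h_b(W)$ by $\ucb_{\text{ber}}(\hat{p}(n),\delta,n)$. I would use $h_b(W)\le W\log(1/W)+W$ and split $W$ into its two summands. The dominant piece $\sqrt{\hat{p}(n)\log(1/\delta)/n}$, multiplied by $\log(1/W)\approx\tfrac12\log\bigl(n/(\hat{p}(n)\log(1/\delta))\bigr)$, reproduces the first term $\sqrt{12\hat{p}(n)\log(6/\delta)/n}\,\log\bigl(n/(\hat{p}(n)\log(6/\delta))\bigr)$ of $\ucb_{\text{ber}}$, while the additive $\log(1/\delta)/n$ piece, together with the $+W$ slack, is absorbed into the second term $18\log(6/\delta)\log(n)/n$. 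Because $x\mapsto x\log(1/x)$ is monotone only on $[0,1/e]$, I again rely on $n\ge 200\log(4/\delta)$ to keep $W$ below that threshold so that the substitution into $h_b(W)$ is legitimate.

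The main obstacle I anticipate is bookkeeping rather than conceptual: converting the $p$-dependent Bernstein width into the clean $\hat{p}(n)$-dependent closed form $W$ while tracking the failure probability (the several events being union-bounded are what produce the factor $6$ inside $\log(6/\delta)$ and the standing conditions $\delta\le 1/2$ and $n\ge 200\log(4/\delta)$), and then verifying that the explicit constants $12$ and $18$ are large enough to dominate $h_b(W)$ uniformly over the admissible range of $\hat{p}(n)$. Both reductions are elementary monotonicity and algebraic estimates, but arranging a single self-consistent set of constants is where the care lies.
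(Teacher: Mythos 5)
Your proposal is correct and follows essentially the same route as the paper's proof: a variance-sensitive multiplicative Chernoff bound on $|\hat{p}(n)-p|$ (Lemma \ref{lem: empirical mean concentration Ber}), conversion to an observable $\hat{p}(n)$-dependent width, a total-variation continuity bound for the binary entropy, and the hypothesis $n\geq 200\log(4/\delta)$ to stay in the monotone regime of $s\mapsto s\log(1/s)$ before matching constants. The only cosmetic differences are that the paper obtains the observable width from a second concentration event $p\leq 2\hat{p}(n)+12\log(1/\delta)/n$ followed by a case split (rather than your self-bounding quadratic), and it uses the bound $\dtv\log(2/\dtv)$ in place of $h_{b}(|\hat{p}(n)-p|)$; both variants close the argument.
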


\begin{rem}
The confidence interval of Proposition \ref{prop: Bernoulii plug-in estimator}
follows from the relation 
\begin{equation}
\left|H(p)-H(q)\right|\leq\dtv(p,q)\log\left(\frac{|{\cal Y}|}{\dtv(p,q)}\right),\label{eq: entropy difference bound via total variation simple}
\end{equation}
where $\dtv(p,q)$ is the total variation distance between PMFs $p$
and $q$ defined on a common alphabet ${\cal Y}$, and that holds
as long as $\dtv(p,q)\leq\frac{1}{2}$ \cite[Lemma 2.7]{csiszar2011information}.
The bound (\ref{eq: entropy difference bound via total variation simple})
is not the sharpest known bound, and, e.g., it also holds that \cite[Thm. 6]{Ho_Yeung2010}
\begin{equation}
\left|H(p)-H(q)\right|\leq\dtv(p,q)\log\left(|{\cal Y}|-1\right)+h_{b}\left(\dtv(p,q)\right),\label{eq: entropy difference bound via local and total variation}
\end{equation}
(see also an additional refinement in \cite{sason2013entropy}). In
our proofs this type of bounds is utilized in the regime $\dtv(p,q)=o(1)$,
for which both (\ref{eq: entropy difference bound via total variation simple})
and (\ref{eq: entropy difference bound via local and total variation})
are of the same order of $\Theta\left(\dtv(p,q)\log\frac{|{\cal Y}|}{\dtv(p,q)}\right)$.
Thus, we exclusively use the simpler bound (\ref{eq: entropy difference bound via total variation simple}). 
\end{rem}

\begin{rem}
In the special case of a binary alphabet, the binary entropy $h_b(p)$ is maximized when $p=1/2$. Additionally, it is symmetric around $p=1/2$ and strictly increases in the interval $[0,1/2]$.
Therefore, instead of utilizing UBC for entropy estimation we can consider the following strategy where we optimistically look for the arm with the minimal $|p-1/2|$ in the set $p\in[\hat{p}_i(1)- \ucb,\hat{p}_i(1)+\ucb]$.  
However, since we examine the binary alphabet to gain insights regarding the general alphabet case, we focus below on regret bounds which we can extend to the general alphabet case. These regret bounds are achieved by UCB strategies for entropy estimation.
\end{rem}

Next, we state the regret bound on Algorithm \ref{alg:A-UCB-general},
based on the confidence interval of Proposition \ref{prop: Bernoulii plug-in estimator}.
To this end, let us denote
\begin{multline}
\Gamma_{\text{ber}}(\alpha,\beta,q,\Delta,t)\dfn\\
\max\left\{ 6\cdot\Lambda_{1}\left(\frac{36\alpha\log(t)}{(1-\beta)\Delta}\right),\frac{5120q\alpha\log(t)}{\beta^{2}\Delta^{2}}\cdot\log^{2}\left(\frac{48}{\beta^{2}\Delta^{2}}\right),\frac{88\sqrt{\alpha\log(t)}}{\beta\Delta}\cdot\log\left(\frac{48}{\beta^{2}\Delta^{2}}\right)\right\} \label{eq: u ber},
\end{multline}
and use the notation $\hat{p}(\boldsymbol{Y},n)$ for the empirical
probability of $'1'$ in $\boldsymbol{Y}=\{Y_{\ell}\}_{\ell\in[n]}$. 
\begin{thm}
\label{thm: UCB-Bernoulli regret}Assume that ${\cal X}_{i}=\{0,1\}$
for all $i\in[K]$. Further assume that Algorithm \ref{alg:A-UCB-general}
is run with the plug-in entropy estimator $\hat{H}(\boldsymbol{Y},n)\equiv H(\hat{p}(\boldsymbol{Y},n))$
and upper confidence deviation 
\[
\ucb(\boldsymbol{Y},\delta,n)\equiv\ucb_{\text{\emph{ber}}}(\hat{p}(\boldsymbol{Y},n),\delta,n),
\]
(as defined in (\ref{eq: confidence bound Ber})) with $\delta\equiv\delta_{\alpha}(t)=6t^{-\alpha}$
with $\alpha>2$. Then, 
\begin{equation}
R(t)\leq\sum_{i\in[K]:\Delta_{i}>0}\inf_{\beta\in(0,1)}\left[\Gamma_{\text{\emph{ber}}}(\alpha,\beta,p_{i}(1),\Delta_{i},t)\cdot\Delta_{i}+\frac{16(\alpha-1)}{\alpha-2}\cdot\Delta_{i}\right].\label{eq: regret bound Ber}
\end{equation}
\end{thm}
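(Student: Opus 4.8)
The plan is to run the standard UCB regret decomposition used for Theorem \ref{thm: UCB-Bias regret}, but with the additional care forced by the fact that the confidence deviation $\ucb_{\text{ber}}(\hat p(\boldsymbol Y,n),\delta,n)$ in \eqref{eq: confidence bound Ber} is \emph{data-dependent}: it is a function of the empirical probability $\hat p(\boldsymbol Y,n)$ rather than of the unknown $p_i(1)$. Fixing a suboptimal arm $i$ (with $\Delta_i>0$) and a parameter $\beta\in(0,1)$, I would set the threshold $u_i\dfn\Gamma_{\text{ber}}(\alpha,\beta,p_i(1),\Delta_i,t)$ and start from $\E[N_i(t)]\le u_i+\sum_{s=1}^t\P[I(s)=i,\,N_i(s-1)\ge u_i]$, obtained exactly as in the mean-reward case.

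For the per-round probability, write the index of arm $j$ at round $s$ as $U_j(s)\dfn H(\hat p(\boldsymbol X_j(s-1),N_j(s-1)))+\ucb_{\text{ber}}(\cdot,\delta_\alpha(s),N_j(s-1))$, and recall that $I(s)=i$ forces $U_i(s)\ge U_{i^*}(s)$. I would split this into four sub-events: (a) the optimal-arm bound fails, $H_{i^*}>U_{i^*}(s)$; (b) the arm-$i$ lower bound fails, $H(\hat p_i(s-1))>H_i+\ucb_{\text{ber}}(\hat p_i(s-1),\delta_\alpha(s),N_i(s-1))$; (c) the empirical probability overshoots, $\hat p_i(s-1)>c\,p_i(1)$ for an absolute constant $c$; and (d) none of these, in which case combining the two valid one-sided bounds yields $\Delta_i=H_{i^*}-H_i\le 2\,\ucb_{\text{ber}}(\hat p_i(s-1),\delta_\alpha(s),N_i(s-1))$ with $\hat p_i(s-1)\le c\,p_i(1)$. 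The choice of $u_i$ will make (d) impossible once $N_i(s-1)\ge u_i$, so only (a)--(c) contribute.

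The probabilities of (a) and (b) are each controlled by Proposition \ref{prop: Bernoulii plug-in estimator}, after a union bound over the at-most-$s$ possible values of $N_j(s-1)$ and using $\delta_\alpha(s)=6s^{-\alpha}$ with $\log(6/\delta_\alpha(s))=\alpha\log s$; the probability of (c) is controlled by the multiplicative concentration inequality of Lemma \ref{lem: empirical mean concentration Ber}. Summing these tails via $\sum_{s\ge1}s^{1-\alpha}\le\frac{\alpha-1}{\alpha-2}$ produces the additive $\frac{16(\alpha-1)}{\alpha-2}\cdot\Delta_i$ term, the prefactor accounting for the $6$ in $\delta_\alpha$ and the handful of failure events. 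It then remains to invert the data-dependent inequality of (d) into the deterministic count bound $N_i(s-1)<u_i$. Splitting $\Delta_i=(1-\beta)\Delta_i+\beta\Delta_i$, I would force each summand of $\ucb_{\text{ber}}$ to exceed its share: inverting $\frac{18\log(6/\delta)\log n}{n}\ge(1-\beta)\Delta_i/2$ yields the $\Lambda_1$-type first branch of $\Gamma_{\text{ber}}$ in \eqref{eq: u ber}, while inverting $\sqrt{12q\log(6/\delta)/n}\,\log(n/(q\log(6/\delta)))\ge\beta\Delta_i/2$, after substituting $q=\hat p_i\le c\,p_i(1)$, yields the second, $q/\Delta_i^2$-type branch. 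Taking $\inf_{\beta\in(0,1)}$, multiplying by $\Delta_i$, and summing over the suboptimal arms via \eqref{eq: expected pseudo-regret} gives \eqref{eq: regret bound Ber}.

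The hard part will be this last inversion, precisely because $\ucb_{\text{ber}}$ depends on the random $\hat p_i$. Two subtleties arise. First, the map $q\mapsto\sqrt q\,\log(c/q)$ is non-monotone (it increases only for $q$ below a threshold), so the substitution $\hat p_i\le c\,p_i(1)$ must be paired with a verification that the relevant regime is the increasing one. Second, the fluctuation of $\hat p_i$ about $p_i(1)$ feeds, through this leading term, an extra $\frac{\sqrt{\alpha\log t}}{\beta\Delta_i}\log(\cdot)$ scaling, which is exactly the third branch of the maximum in $\Gamma_{\text{ber}}$ and is invisible to the non-adaptive argument behind Theorem \ref{thm: UCB-Bias regret}. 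Reconciling the concentration event for $\hat p_i$ in (c) with the inversion of the confidence-interval condition in (d) — rather than any single routine computation — is therefore the main obstacle.
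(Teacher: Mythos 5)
Your proposal is correct and follows essentially the same route as the paper's proof: the same three-way event decomposition augmented with concentration of $\hat p_i$ (the paper folds your event (c) into its events \ref{item:first_error Ber} and \ref{item:second_error Ber} as ``or'' clauses and handles the monotonicity of $q\mapsto\ucb_{\text{ber}}(q,\delta,n)$ via the side condition $N_i(t-1)\geq 200\alpha\log(t)$), the same use of Proposition \ref{prop: Bernoulii plug-in estimator} and Lemma \ref{lem: empirical mean concentration Ber}, and the same $\beta$-split inversion via Lemma \ref{lem: inverting polylog over linear}. Your final paragraph also correctly identifies the origin of the third branch of $\Gamma_{\text{ber}}$ (the additive $\log(1/\delta)/n$ fluctuation in the overestimate of $\hat p_i$, substituted into the leading term and resolved self-referentially), which is exactly how the paper derives it.
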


Let us inspect the regret bound of (\ref{eq: regret bound Ber}) of
Theorem \ref{thm: UCB-Bernoulli regret} in the regime of small gaps.
By inserting the definition of the $\Gamma_{\text{ber}}(\cdot)$,
the dominating term as $\Delta_{i}\downarrow0$ is on the order of
$\tilde{O}(\frac{p_{i}\log(t)}{\Delta_{i}})$ and all other terms
are $O(\log^{c}(\frac{1}{\Delta_{i}}))$. Thus, e.g., in case $\Delta_{i}=\Theta(p_{i})$
(as $t\to\infty$), then the regret is only $O(\log(t)\cdot\log^{c}(\frac{1}{\Delta_{i}}))$.
This is a similar behavior to the variance-UCB algorithm \cite{auer2002finite,audibert2009exploration}
with standard bounded rewards. Nonetheless, in general, the bound of Theorem \ref{thm: UCB-Bernoulli regret} is not optimal. Indeed, recall that in the standard Bernoulli MAB problem,  say with two arms ($K=2$), the regret bound depends on the difference $p_1(1)-p_2(1)$ between the $'1'$-probability of each arm, which is exactly the gap between their rewards. However, in the IMAB problem, the gap is $h_b(p_1(1))-h_b(p_2(1))$, and due to the non-linearity of the entropy functional this gap depends on both the difference $p_1(1)-p_2(1)$ as well as the location of $p_1(1)$. 

To further elucidate this phenomenon, next we state
the Lai-Robbins lower bound \cite{lai1985asymptotically} on the pseudo-regret. We follow the clear statement made
in \cite[Thms. 2.14 and 2.16]{slivkins2019introduction}:
\begin{thm}[Lai-Robbins lower bound]
\label{thm:An asymptotic lower bound} Consider the IMAB problem with
$K$ arms. A problem instance ${\cal I}$ is the collection $\{p_{i}\}_{i\in[K]}$
with $p_{i}\equiv p_{i}(1)\in[0,1/2)$. Suppose that a IMAB algorithm
is such that $R(t)=O(C_{{\cal I},a}t^{a})$ for each problem instance
$I$ and $a>0$. Fix an arbitrary problem instance ${\cal I}$. Then,
\[
\liminf_{t\to\infty}\frac{R(t)}{\log(t)}\geq\sum_{i\in[K]\colon\Delta_{i}>0}\frac{\Delta_{i}}{\Dkl(p_{i}||p_{i^{*}})},
\]
where $\Dkl(p||q)\dfn p\log(p/q)+(1-p)\log((1-p)/(1-q))$ is the binary
Kullback-Leibler divergence, and $\Delta_i=\max_{j\in[K]}h_b(p_j)-h_b(p_i)$.\footnote{If $q=0$ or $q=1$ and $p\neq q$ then $\Dkl(p||q)\dfn\infty$. }
\end{thm}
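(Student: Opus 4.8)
The plan is to follow the classical change-of-measure argument for the Lai-Robbins bound (as in \cite{lai1985asymptotically} and the exposition of \cite{slivkins2019introduction}), adapting it to the fact that here the reward is the binary entropy $h_b$ while the player observes the raw Bernoulli \emph{symbols}. The overall reduction is standard: since $R(t)=\sum_{i:\Delta_i>0}\E(N_i(t))\Delta_i$, it suffices to prove that for every suboptimal arm $i$,
$$\liminf_{t\to\infty}\frac{\E(N_i(t))}{\log t}\geq\frac{1}{\Dkl(p_i\|p_{i^*})},$$
and then multiply by $\Delta_i$ and sum over $\{i:\Delta_i>0\}$.

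First I would fix a suboptimal arm $i$, so that $h_b(p_i)<h_b(p_{i^*})$, which by strict monotonicity of $h_b$ on $[0,1/2)$ is equivalent to $p_i<p_{i^*}$. I would then construct an alternative instance $\mathcal{I}'$ that agrees with $\mathcal{I}$ on every arm except arm $i$, whose parameter is raised to some $p_i'\in(p_{i^*},1/2)$. Because $h_b$ is strictly increasing on $[0,1/2)$, in $\mathcal{I}'$ arm $i$ becomes the \emph{unique} optimal arm. The structural point specific to the IMAB setting is that the statistical cost of distinguishing $\mathcal{I}$ from $\mathcal{I}'$ is governed by the raw binary KL divergence $\Dkl(p_i\|p_i')$ between the symbol-generating laws, even though the reward gap itself is measured in entropy.

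Next I would invoke the divergence-decomposition identity (the chain rule for KL along the interaction sequence), which gives $\Dkl(\P_{\mathcal{I}}^t\|\P_{\mathcal{I}'}^t)=\E_{\mathcal{I}}(N_i(t))\cdot\Dkl(p_i\|p_i')$ since the two instances differ only in arm $i$. Combining this with the Bretagnolle-Huber inequality applied to the event $A=\{N_i(t)>t/2\}$ yields
$$\P_{\mathcal{I}}(N_i(t)>t/2)+\P_{\mathcal{I}'}(N_i(t)\leq t/2)\geq\tfrac12\exp\left(-\E_{\mathcal{I}}(N_i(t))\,\Dkl(p_i\|p_i')\right).$$
I would bound both probabilities on the left using the consistency hypothesis $R(t)=O(C_{\mathcal{I},a}t^a)$: under $\mathcal{I}$ arm $i$ is suboptimal, so $\P_{\mathcal{I}}(N_i(t)>t/2)\leq 2R_{\mathcal{I}}(t)/(t\Delta_i)=O(t^{a-1})$; under $\mathcal{I}'$ arm $i$ is optimal, so every other arm is suboptimal and the same consistency applied to $\mathcal{I}'$ (through the minimal gap of the remaining arms) forces $\P_{\mathcal{I}'}(N_i(t)\leq t/2)=O(t^{a-1})$. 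Taking logarithms, rearranging, and sending $a\downarrow0$ and $t\to\infty$ gives $\liminf_t \E_{\mathcal{I}}(N_i(t))/\log t\geq 1/\Dkl(p_i\|p_i')$.

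Finally I would send $p_i'\downarrow p_{i^*}$. By continuity of the binary KL divergence in its second argument, $\Dkl(p_i\|p_i')\to\Dkl(p_i\|p_{i^*})$, and since $\Dkl(p_i\|\cdot)$ is increasing for arguments exceeding $p_i$, this limit is in fact the infimum of the KL cost over all perturbations rendering arm $i$ optimal; this produces the claimed denominator. Summing $\Delta_i/\Dkl(p_i\|p_{i^*})$ over all suboptimal arms completes the argument. The main obstacle I anticipate is not any individual step, each of which is classical, but rather verifying the \emph{decoupling} between the entropy gap in the numerator and the raw-parameter KL in the denominator: one must check that the perturbation making arm $i$ optimal in the \emph{entropy} ordering has a symbol-level KL cost converging exactly to $\Dkl(p_i\|p_{i^*})$. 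This relies crucially on $h_b$ being strictly increasing on $[0,1/2)$, so that the entropy-optimal perturbation $p_i'\downarrow p_{i^*}$ coincides with the KL-minimizing one, and on the assumption $p_{i^*}\in(0,1/2)$ guaranteeing that $\Dkl(p_i\|p_{i^*})$ is finite (cf.\ the footnote handling $q\in\{0,1\}$).
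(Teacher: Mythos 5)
Your proposal is correct and matches the route the paper intends: the paper omits this proof entirely, stating that it is essentially identical to the standard Lai--Robbins change-of-measure argument for Bernoulli bandits (citing \cite{lai1985asymptotically}, \cite{bubeck2012regret}, \cite{slivkins2019introduction}), which is precisely the argument you carry out, with the only IMAB-specific observation being the one you correctly identify --- that the entropy ordering and the parameter ordering coincide on $[0,1/2)$, so the perturbation $p_i'\downarrow p_{i^*}$ simultaneously makes arm $i$ entropy-optimal and drives the symbol-level KL cost to $\Dkl(p_i\|p_{i^*})$.
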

The proof of Theorem \ref{thm:An asymptotic lower bound} for the
IMAB problem is omitted since it is essentially  identical to the proof
of the standard Lai-Robbins lower bound for Bernoulli bandits, which 
can be found in \cite[Thm. 1]{lai1985asymptotically}\cite[Thm. 2.2]{bubeck2012regret}\cite[Ch. 2]{slivkins2019introduction}.

Strictly speaking, Theorem \ref{thm:An asymptotic lower bound} is asymptotic and does not specify the minimal $t$ required for its validity, and is also valid for a \textit{fixed} set of gaps. Nonetheless, we next informally compare the order of convergence it implies with the one attained in Theorem \ref{thm: UCB-Bernoulli regret} while considering the effect of varying the gap. To simplify
the next discussion, we will assume $K=2$ with $p_{2}<p_{1}<\frac{1}{2}$.
Let $\Delta\equiv\Delta_{2}=h_{b}(p_{1})-h_{b}(p_{2})$. In the standard
bandit case, one approximates $\Dkl(p_{2}||p_{1})=\Theta(\Delta^{2})$
(e.g., using Pinsker's inequality and its reverse version), and then
the lower bound is $\Omega(\frac{1}{\Delta}\log(t))$. This lower bound is roughly achieved by the basic UCB algorithm \cite[Thm. 2.1]{bubeck2012regret}
(and the variance-UCB algorithm leads to data-dependent improvement
in the constant). Before comparing this result to the upper bound of Theorem  \ref{thm: UCB-Bernoulli regret}, we note that the latter has an extra multiplicative logarithmic factor, which
can be as large as $\Theta(\log(\frac{\log(t)}{\Delta}))$. To focus on the first-order terms in the regret bound, we next ignore these additional factors in the discussion. We next consider a few different regimes.

To begin, let us assume that $p_{1}=p$ and $p_{2}=p-\Lambda$ with $p$ fixed and
$\Lambda\downarrow0$, then $\Delta=h_{b}(p_{1})-h_{b}(p_{2})=\Theta(\Lambda)$
and $\Dkl(p_{2}||p_{1})=\Theta(\Lambda^{2})$, and the ratio in the
lower bound is $\Theta(\frac{\log t}{\Lambda})$ as in standard bandits.
This roughly matches the upper bound of Theorem \ref{thm: UCB-Bernoulli regret}
on the pseudo-regret achieved by the algorithm, and no significant improvements are anticipated.

Next, we consider the regime in which the probabilities of the arms are close to $1/2$. The binary entropy function "flattens" in this region, and is markedly different from the standard linear reward function. Thus, on an intuitive level, this is not a difficult instance of the problem. More explicitly, let us assume that both  $p_{1}=\frac{1}{2}-\Lambda$ and $p_{2}=\frac{1}{2}-2\Lambda$. Then, both $\Delta=h_{b}(p_{1})-h_{b}(p_{2})=\Theta(\Lambda^{2})$
and $\Dkl(p_{2}||p_{1})=\Theta(\Lambda^{2})$, and the ratio in the
lower bound is asymptotically $\Theta(\log t)$ even if $\Lambda\downarrow0$ and
so also $\Delta\downarrow0$. In this regime, the regret of Algorithm \ref{alg:A-UCB-general}
is upper bounded as $\tilde{O}(\frac{\log t}{\Delta})$ whereas the lower bound is only $\Theta(\log t)$, and so the bounds do not match. 
However, as we next show, the fact that the binary entropy function at $1/2$ is quadratic leads to an ameliorated confidence interval bound. Specifically, consider the following upper confidence deviation 
\begin{equation}
\ucb_{\text{ber}}^{(1/2)}(q,\delta,n)\dfn7\left|\frac{1}{2}-q\right|\cdot\sqrt{\frac{\log(\frac{4}{\delta})}{n}}+\frac{9\log(\frac{4}{\delta})}{n}.\label{eq: confidence bound Ber half}
\end{equation}
We now have the following confidence interval bound.
\begin{prop}
\label{prop: Bernoulii plug-in estimator close to half}Let $\boldsymbol{Y}=\{Y_{\ell}\}_{\ell\in[n]}$
be IID from a Bernoulli with parameter $p=\P[Y_{i}=1]$, let $\hat{p}(n)=\frac{1}{n}\sum_{\ell=1}^{n}\I\{Y_{\ell}=1\}$
be the empirical probability of $'1'$. Assume that $p\in[\frac{2}{5},\frac{1}{2}]$,
that $n\geq60\log(\frac{4}{\delta})$, and let $\delta\in[0,\frac{1}{2}]$
be given. Then 
\begin{equation}
\left|h_{b}(\hat{p}(n))-h_{b}(p)\right|\leq\ucb_{\text{\emph{ber}}}^{(1/2)}(\hat{p}(n),\delta,n),\label{eq: entropy confidene bound close to half empirical}
\end{equation}
with probability larger than $1-\delta$, and 
\begin{equation}
\left|h_{b}(\hat{p}(n))-h_{b}(p)\right|\leq\ucb_{\text{\emph{ber}}}^{(1/2)}(p,\delta,n),\label{eq: entropy confidene bound close to half population}
\end{equation}
with probability larger than $1-\delta$. 
\end{prop}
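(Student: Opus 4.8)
The plan is to exploit the fact that the binary entropy $h_b$ is \emph{quadratic} at its maximizer $1/2$: since $h_b'(1/2)=0$ and $h_b''(q)=-1/(q(1-q))$, the first-order Taylor coefficient $h_b'(p)$ is itself of order $|1/2-p|$, and this is exactly the source of the improved $|1/2-q|$ factor appearing in $\ucb_{\text{ber}}^{(1/2)}$ in \eqref{eq: confidence bound Ber half}. Both claimed inequalities \eqref{eq: entropy confidene bound close to half empirical} and \eqref{eq: entropy confidene bound close to half population} will be established on a single high-probability event.

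First I would set $\epsilon_0\dfn\sqrt{\log(4/\delta)/(2n)}$ and invoke Hoeffding's inequality for the empirical mean of $n$ IID Bernoulli samples, giving $\P[|\hat p(n)-p|>\epsilon_0]\le 2e^{-2n\epsilon_0^2}=\delta/2$. I then work on the event $\mathcal{E}\dfn\{|\hat p(n)-p|\le\epsilon_0\}$, which has probability larger than $1-\delta$. The hypothesis $n\ge 60\log(4/\delta)$ forces $\epsilon_0\le\sqrt{1/120}<1/10$, so on $\mathcal{E}$ the points $p$, $\hat p(n)$, and any point between them lie in an interval bounded away from $0$ and $1$, on which $|h_b''|=1/(q(1-q))$ is at most an absolute constant (say $5$). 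A second-order Taylor expansion then gives $h_b(\hat p(n))-h_b(p)=h_b'(p)(\hat p(n)-p)+\tfrac12 h_b''(\xi)(\hat p(n)-p)^2$ for some $\xi$ between $p$ and $\hat p(n)$, and the quadratic remainder is at most $\tfrac52\epsilon_0^2=O(\log(4/\delta)/n)$, which will be absorbed into the additive $9\log(4/\delta)/n$ term of the deviation.

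The heart of the argument is to linearize the first-order coefficient. Since $h_b'(1/2)=0$ and $|h_b''|\le 25/6$ on $[2/5,1/2]$, the mean value theorem yields $|h_b'(p)|=|h_b''(\eta)|\,|1/2-p|\le\tfrac{25}{6}|1/2-p|$, whence $|h_b'(p)(\hat p(n)-p)|\le\tfrac{25}{6}|1/2-p|\,\epsilon_0=\tfrac{25}{6\sqrt{2}}|1/2-p|\sqrt{\log(4/\delta)/n}$. Combining with the quadratic remainder and using $\tfrac{25}{6\sqrt{2}}<7$ and $\tfrac54<9$ proves the population-referenced bound \eqref{eq: entropy confidene bound close to half population}. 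For the empirical-referenced bound \eqref{eq: entropy confidene bound close to half empirical} I would substitute $|1/2-p|\le|1/2-\hat p(n)|+\epsilon_0$; the leading term then carries the factor $|1/2-\hat p(n)|$ as required, and the extra $\tfrac{25}{6}\epsilon_0^2$ is once more $O(\log(4/\delta)/n)$ and merges into the additive term, which stays below $9\log(4/\delta)/n$ with room to spare.

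The main obstacle is this linearization step together with keeping $\hat p(n)$ (hence $\xi$ and $\eta$) inside a region where $h_b''$ is uniformly bounded: this is precisely where the sample-size condition $n\ge 60\log(4/\delta)$ is used, since without it $\hat p(n)$ could stray toward $0$ or $1$, where $h_b''$ blows up and the quadratic approximation fails. Everything else---the Hoeffding bound, the Taylor expansion, and checking that the numerical constants $25/(6\sqrt{2})$ and the various remainder terms fit under the stated $7$ and $9$---is routine given the deliberate slack in the constants.
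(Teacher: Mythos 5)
Your proof is correct and follows essentially the same route as the paper's: a second-order Taylor expansion of $h_{b}$ combined with the linearization $|h_{b}'(p)|\lesssim|\tfrac{1}{2}-p|$ (which the paper verifies directly as $|\log\tfrac{1-q}{q}|\le 5(\tfrac{1}{2}-q)$ on $[\tfrac{2}{5},\tfrac{1}{2}]$, and you obtain via the mean value theorem from $h_{b}'(\tfrac{1}{2})=0$). The only substantive difference is that you concentrate $\hat{p}(n)$ with a single Hoeffding event rather than the paper's two multiplicative-Chernoff events from Lemma~\ref{lem: empirical mean concentration Ber}; since $p\ge\tfrac{2}{5}$ the multiplicative refinement buys nothing here, so your version is marginally cleaner and avoids the final $\delta\to 2\delta$ substitution.
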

We restricted $p_i$ to $[\frac{2}{5},\frac{1}{2}]$ since the regime of current interest is such that the arm probabilities are close to $1/2$, and this restriction simplifies the exposition. With this result, the following regret bound can be easily derived using the same
methods used in the proof of Theorem \ref{thm: UCB-Bernoulli regret}, and so its proof is omitted. For simplicity, we only state it for $K=2$ arms. 
\begin{thm} \label{thm: UCB-Bernoulli regret close to half} Assume that ${\cal X}_{i}=\{0,1\}$
and that $p_{i}\in[\frac{2}{5},\frac{1}{2}]$ for $i\in\{1,2\}$
where $\Delta=h_{b}(p_{1})-h_{b}(p_{2})$ with $p_{2}<p_{1}<\frac{1}{2}$.
Further assume that Algorithm \ref{alg:A-UCB-general} is run with
the plug-in entropy estimator $\hat{H}(\boldsymbol{Y},n)\equiv H(\hat{p}(\boldsymbol{Y},n))$
and upper confidence deviation  
\[
\ucb(\boldsymbol{Y},\delta,n)\equiv\ucb_{\text{\emph{ber}}}^{(1/2)}(\hat{p}(\boldsymbol{Y},n),\delta,n),
\]
(as defined in (\ref{eq: confidence bound Ber half})) with $\delta\equiv\delta_{\alpha}(t)=4t^{-\alpha}$
with $\alpha>2$. Then, 
\begin{equation}
R(t)\leq\frac{784\left(\frac{1}{2}-p_{2}\right)^{2}\alpha\log(t)}{\Delta}+60\alpha\log(t)+\frac{8(\alpha-1)}{\alpha-2}\cdot\Delta.\label{eq: regret bound Ber close to half}
\end{equation}
\end{thm}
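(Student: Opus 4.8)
The plan is to follow the standard UCB regret template already used for Theorem \ref{thm: UCB-Bernoulli regret}, now fed with the data-dependent deviation $\ucb_{\text{ber}}^{(1/2)}$ of \eqref{eq: confidence bound Ber half} and the two-sided guarantees of Proposition \ref{prop: Bernoulii plug-in estimator close to half}. First I would observe that since $h_b$ is strictly increasing on $[0,1/2]$ and $p_2<p_1<1/2$, the optimal arm is $i^*=1$ with $H_1=h_b(p_1)>h_b(p_2)=H_2$, so $\Delta_1=0$, $\Delta_2=\Delta$, and by \eqref{eq: expected pseudo-regret} it suffices to bound $\E[N_2(t)]$, since $R(t)=\Delta\cdot\E[N_2(t)]$. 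I would then fix a deterministic threshold $m$ and use the usual decomposition $N_2(t)\le m+\sum_{s=1}^{t}\I\{I(s)=2,\,N_2(s-1)\ge m\}$, reducing the task to (i) choosing $m$ so the indicator events are rare and (ii) summing their probabilities.

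The core algebraic step is the pull condition. If arm $2$ is played at round $s$, then $H(\hat p_2)+\ucb_2\ge H(\hat p_1)+\ucb_1$, where $\ucb_i=\ucb_{\text{ber}}^{(1/2)}(\hat p_i,\delta_\alpha(s),N_i(s-1))$. Invoking the empirical bound \eqref{eq: entropy confidene bound close to half empirical} for arm $1$ (so that $H(\hat p_1)+\ucb_1\ge H_1$ with probability at least $1-\delta_\alpha(s)$) and again for arm $2$ (so that $H(\hat p_2)\le H_2+\ucb_2$), these two good events force $2\,\ucb_2\ge H_1-H_2=\Delta$, i.e. $\ucb_2\ge\Delta/2$. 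Hence, outside the two failure events (each of probability at most $\delta_\alpha(s)=4s^{-\alpha}$), arm $2$ can be played only while its random radius still satisfies $\ucb_{\text{ber}}^{(1/2)}(\hat p_2,\delta_\alpha(s),N_2(s-1))\ge\Delta/2$.

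The main obstacle, exactly as in Theorem \ref{thm: UCB-Bernoulli regret}, is that this radius is random through $|\tfrac12-\hat p_2|$, so $m$ cannot be read off directly. Here I would exploit that $\ucb_{\text{ber}}^{(1/2)}(q,\delta,n)$ is increasing in $|\tfrac12-q|$, together with a Bernoulli concentration event (Lemma \ref{lem: empirical mean concentration Ber}) guaranteeing $|\hat p_2-p_2|\le\tfrac12-p_2$, hence $|\tfrac12-\hat p_2|\le 2(\tfrac12-p_2)$, once $N_2(s-1)$ exceeds a threshold dominated by the leading term below; the population bound \eqref{eq: entropy confidene bound close to half population} is available to make this replacement transparent. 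On this event $\ucb_2\le 14(\tfrac12-p_2)\sqrt{\alpha\log s/N_2(s-1)}+9\alpha\log s/N_2(s-1)$, so requiring the leading term to fall below $\Delta/2$ forces $N_2(s-1)\ge \frac{784(\frac12-p_2)^2\alpha\log t}{\Delta^2}$ (the factor $784=(2\cdot2\cdot7)^2$ arising from the constant $7$ in \eqref{eq: confidence bound Ber half}, the factor $2$ in $|\tfrac12-\hat p_2|\le2(\tfrac12-p_2)$, and the factor $2$ from $\ucb_2\ge\Delta/2$), while the additive term together with the validity requirement $n\ge 60\log(4/\delta)$ of Proposition \ref{prop: Bernoulii plug-in estimator close to half} is absorbed into a second summand $\frac{60\alpha\log t}{\Delta}$. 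Taking $m$ to be the sum of these two quantities makes the indicator vanish on the good events.

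Finally I would collect the terms. Multiplying $m$ by $\Delta$ yields the first two summands $\frac{784(\frac12-p_2)^2\alpha\log t}{\Delta}+60\alpha\log t$ of \eqref{eq: regret bound Ber close to half}. The remaining contribution $\sum_{s}\I\{\cdots\}$ is controlled by a union bound over the two confidence-failure events, $\sum_{s\ge1}2\delta_\alpha(s)=8\sum_{s\ge1}s^{-\alpha}\le \frac{8(\alpha-1)}{\alpha-2}$ for $\alpha>2$, and multiplying by $\Delta$ produces the last summand $\frac{8(\alpha-1)}{\alpha-2}\Delta$; the probability of the auxiliary concentration event failing is summable and negligible, since it is invoked only for $N_2(s-1)\ge m$, where it decays faster than any polynomial in $s$. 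Because every constant mirrors those in the proof of Theorem \ref{thm: UCB-Bernoulli regret}, the only genuinely new ingredient is the quadratic-at-$1/2$ estimate underlying \eqref{eq: confidence bound Ber half}, which is already supplied by Proposition \ref{prop: Bernoulii plug-in estimator close to half}, and this is why the authors can omit the proof.
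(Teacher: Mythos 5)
Your proposal is correct and follows essentially the same route the paper intends: the authors omit this proof, stating it ``can be easily derived using the same methods used in the proof of Theorem~\ref{thm: UCB-Bernoulli regret}'', and your three-event decomposition, the use of Proposition~\ref{prop: Bernoulii plug-in estimator close to half}, the de-randomization of the data-dependent radius via $|\tfrac12-\hat p_2|\le 2(\tfrac12-p_2)$ on a Bernoulli concentration event, and the union-bound tail $\frac{8(\alpha-1)}{\alpha-2}\Delta$ are precisely that method. The one caveat is constant bookkeeping: forcing \emph{both} summands of $\ucb_{\text{ber}}^{(1/2)}$ to lie below $\Delta/2$ simultaneously strictly requires splitting the budget (e.g.\ $\Delta/4$ each, which would inflate $784$ to $3136$), but this looseness is inherited from the theorem statement itself, whose proof the paper does not supply, so it does not constitute a gap in your argument relative to the paper's.
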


In the regime above, $(\frac{1}{2}-p_2)^2=\Lambda^2=\Theta(\Delta)$ and so the regret of the algorithm is $\Theta(\log(t))$, just as the Lai-Robbins lower bound. This result can be easily extended to multiple arms, and can also be combined with the result of Theorem \ref{thm: UCB-Bernoulli regret} by taking the minimal confidence bound out of those used in Theorem \ref{thm: UCB-Bernoulli regret}
 and that of Theorem  \ref{thm: UCB-Bernoulli regret close to half}. This will result the minimum of the  regret upper bounds of both theorems. 

Finally, we consider the other extremal regime for the arms' Bernoulli probabilities, to wit, the small value regime, in which the derivative of the binary entropy function is unbounded. For concreteness, assume that $p_{1}=\gamma>0$ and $p_{2}=0$ for
a small $\gamma>0$. It then can be easily derived that the gap is $\Delta=h_{b}(p_{1})-h_{b}(p_{2})=\Theta(\gamma\log\frac{1}{\gamma})$ and that $\Dkl(p_{2}||p_{1})=\Theta(\gamma)$ (adopting the convention that $0\log(0/q)=0$ in the definition of the KL divergence, which agrees with continuity assumptions). So, the lower bound is $\Omega(\log\frac{1}{\gamma}\cdot\log t)$.
According to Theorem \ref{thm: UCB-Bernoulli regret}, the pseudo-regret
bound achieved by the algorithm is also $O(\log\frac{1}{\gamma} \cdot \log t)$, and this agrees with the lower bound. 

We may also compare the gap-independent lower bound with the gap-independent regret bound that Theorem \ref{thm: UCB-Bernoulli regret} implies. This bound is given by
\[
R(t)= O\left( \min\left\{\log t \cdot \log\left(\frac{1}{\gamma}\right),\left(\gamma\log\frac{1}{\gamma}\right)\cdot t\right\}\right)
= O\left(\log^2(t)\right),
\]
where the maximum regret is achieved for the gap $\gamma=\Theta(\frac{\log(t)}{t})$ (as can be shown by equating both terms and evaluating the order of the solution). This result \textit{roughly} matches the gap-independent lower bound of $R(t) = \Omega(\log(t))$, which can be established by modifying, e.g., the argument in \cite[Sec. 2.3 and 2.4]{slivkins2019introduction}. We next briefly describe this argument. Therein, the standard MAB problem is reduced to a best-arm identification problem (essentially, a binary hypothesis testing problem), between a uniform Bernoulli source $p_1(1)=\frac{1}{2}$ and an $\epsilon$-biased source, that is, an arm for which $p_1(1)=\frac{1}{2}-\epsilon$ for some $\epsilon>0$. Since the KL divergence is $\Theta(\epsilon^2)$, then as long as $t=\Theta(\epsilon^{-2})$, the arm identifier resulting from any MAB algorithm will have a constant fraction of errors. Consequently, since the gap in this standard MAB problem is $\Theta(\epsilon)$, a lower bound on the pseudo-regret is given by $\Theta(\epsilon t)$ which can be chosen as large as $R(t)=\Omega(\sqrt{t})$, to obtain the gap-independent bound. In the IMAB problem and the regime considered here, the KL divergence is on the order of $\Theta(\gamma)$ (instead of $\Theta(\gamma^2)$), and the gap is $\Theta(\gamma\log(\frac{1}{\gamma}))$. Repeating the same argument then leads to a lower bound of $R(t)=\Omega(\log(t))$. Intuitively, this is also not a difficult instance of the problem (just as in the regime in which the probabilities are close to half) since here it is easy to statistically distinguish between the arms with the larger entropy (the KL divergence between the distributions is linear $\Dkl(p_1(1)||p_2(1))=\Theta(\gamma)$ rather than quadratic, while the gap is only logarithmically above linear $\Theta(\gamma\log(1/\gamma))$).

\subsection{The General Alphabet Case \label{subsec:The-General-Alphabet}}
In this section, we extend the data-dependent UCD bound of the previous section to general alphabets, of cardinality larger than $2$. To this end, let $p$ and $q$ be two probability mass functions over an alphabet
${\cal Y}$. We consider the distribution-dependent functional
\begin{flalign}\label{eq:zeta_def}
\zeta(p)\dfn1-\sum_{y\in{\cal Y}}p^{2}(y),
\end{flalign}
which can be easily seen to equal $\zeta(p)=1-e^{-H^{(2)}(p)}$, 
where $H^{(2)}(p)$ is the second-order R\'{e}nyi entropy. Note that if $H^{(2)}(p)\ll1$ then
$\zeta(p)\approx H^{(2)}(p)\ll1$ too. As we shall see, $\zeta(p_{i})|{\cal X}_{i}|$ is
a measure of the effective alphabet size of the $i$th arm. In addition, $\zeta(p_{i})$ can also be accurately estimated from the data, and thus can be used by player in determining its confidence interval. The proofs of the theoretical results of this Section \ref{subsec:The-General-Alphabet} are relegated to Appendix \ref{append: subsec The-General-Alphabet}.

Let the plug-in estimator of $\zeta(p)$ be given by 
\begin{flalign}
\hat{\zeta}(n)\equiv\hat{\zeta}(\boldsymbol{Y},n)\dfn1-\sum_{y\in{\cal Y}}\hat{p}^{2}(n,y).
\end{flalign}
The proposed UCB algorithm and its regret analysis are based on the
following confidence interval function
\begin{flalign}\label{eq: confidence bound TV}
&\ucb_{\text{tv}}(\zeta,{\cal Y},\delta,n)\dfn\nonumber\\
&\hspace{1cm}3\sqrt{\frac{\zeta|{\cal Y}|}{n}}\log\left(\frac{n|{\cal Y}|}{36\zeta}\right)+\frac{3}{2}\sqrt{\frac{\log\left(\frac{2}{\delta}\right)}{n}}\log\left(\frac{n|{\cal Y}|^{2}}{9}\right)+\frac{2|{\cal Y}|^{1/2}\log^{1/4}\left(\frac{2}{\delta}\right)\log\left(n|{\cal Y}|^{2/3}\right)}{n^{3/4}},
\end{flalign}
and the following confidence interval bound for the plug-in entropy estimator:
\begin{prop}
\label{prop: TV plug-in estimator} Let $\boldsymbol{Y}=\{Y_{\ell}\}_{\ell\in[n]}$
be IID from a PMF $p$ over an alphabet ${\cal Y}$, and let $\hat{p}(n)=\{\hat{p}(n,y)\}_{y\in{\cal Y}}$
with $\hat{p}(n,y)=\frac{1}{n}\sum_{\ell=1}^{n}\I\{Y_{\ell}=y\}$
be the empirical PMF of $\boldsymbol{Y}$. Let $\delta\in[0,0.2]$
be given. Then, if $n\geq112\cdot\log\left(\frac{2}{\delta}\right)$ it holds that 
\[
\left|H(\hat{p}(n))-H(p)\right|\leq\ucb_{\text{\emph{tv}}}(\hat{\zeta}(n),{\cal Y},\delta,n),
\]
with probability larger than $1-\delta$. 
\end{prop}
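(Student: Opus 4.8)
The plan is to reduce the entropy deviation to a total-variation deviation, control the latter by its mean plus a bounded-difference fluctuation, and then replace the unknown population quantity $\zeta(p)$ by its empirical counterpart $\hat\zeta(n)$. First I would invoke the entropy difference bound \eqref{eq: entropy difference bound via total variation simple}, which gives $|H(\hat p(n))-H(p)|\le\dtv(\hat p(n),p)\log(|\mathcal{Y}|/\dtv(\hat p(n),p))$ whenever $\dtv(\hat p(n),p)\le\tfrac12$. Since $g(x)\dfn x\log(|\mathcal{Y}|/x)$ is increasing on $(0,|\mathcal{Y}|/e)$, any high-probability upper bound on $\dtv$ transfers to a bound on $|H(\hat p(n))-H(p)|$. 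The structural fact I would exploit repeatedly is subadditivity, $g(a+b)\le g(a)+g(b)$, which follows from $a\log\frac{a}{a+b}+b\log\frac{b}{a+b}\le 0$; this lets a two-term bound on $\dtv$ split cleanly into two entropy-deviation contributions.

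Next I would bound the mean, $\E[\dtv(\hat p(n),p)]=\sum_y\E|\hat p(n,y)-p(y)|\le\sum_y\sqrt{p(y)(1-p(y))/n}$, using $\E|Z-\E Z|\le\sqrt{\V Z}$ on each binomial coordinate, and then apply Cauchy--Schwarz together with $\sum_y p(y)(1-p(y))=\zeta(p)$ to get $\E[\dtv]\le\sqrt{\zeta(p)|\mathcal{Y}|/n}$. This is precisely where the effective alphabet size $\zeta(p)|\mathcal{Y}|$ emerges. For the fluctuation, $\dtv(\hat p(n),p)$ changes by at most $2/n$ when one sample is altered, so McDiarmid's inequality yields $\dtv\le\sqrt{\zeta(p)|\mathcal{Y}|/n}+\sqrt{2\log(2/\delta)/n}$ with probability at least $1-\delta/2$. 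Feeding this $A+B$ bound through the subadditive $g$ gives a leading term $g(A)=\sqrt{\zeta(p)|\mathcal{Y}|/n}\cdot\tfrac12\log(n|\mathcal{Y}|/\zeta(p))$ and a deviation term $g(B)$ that already reproduces the middle UCD term $\tfrac32\sqrt{\log(2/\delta)/n}\log(n|\mathcal{Y}|^2/9)$ up to the deliberately loose constants.

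The remaining task is to rewrite $g(A)$ in terms of $\hat\zeta(n)$. I would compute $\E[\hat\zeta(n)]=(1-1/n)\zeta(p)$ and apply McDiarmid once more (the statistic $1-\sum_y\hat p^2(n,y)$ also has bounded differences of order $1/n$) to obtain $\zeta(p)\le\tfrac{n}{n-1}(\hat\zeta(n)+\sqrt{c\log(2/\delta)/n})$ with probability at least $1-\delta/2$, a union bound over the two McDiarmid events costing $\delta$ in total. Because $A(\zeta)=\sqrt{\zeta|\mathcal{Y}|/n}$ and $g\circ A$ are increasing in $\zeta$, substituting this bound and splitting once more via $\sqrt{a+b}\le\sqrt a+\sqrt b$ and subadditivity of $g$ produces the leading UCD term $3\sqrt{\hat\zeta(n)|\mathcal{Y}|/n}\log(n|\mathcal{Y}|/(36\hat\zeta(n)))$ together with a term in which the $\zeta$-fluctuation $\sqrt{\log(2/\delta)/n}$ sits under the outer square root; computing $\sqrt{|\mathcal{Y}|/n}\cdot(\log(2/\delta)/n)^{1/4}$ then yields exactly the $n^{-3/4}$ scaling of the third UCD term $\tfrac{2|\mathcal{Y}|^{1/2}\log^{1/4}(2/\delta)\log(n|\mathcal{Y}|^{2/3})}{n^{3/4}}$.

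The hard part will be the last step, and specifically the replacement of $\zeta(p)$ by $\hat\zeta(n)$ \emph{inside} the logarithm $\log(n|\mathcal{Y}|/\zeta)$: since that logarithm is decreasing in its denominator, additive concentration of $\hat\zeta$ around $\zeta$ does not translate into multiplicative control when both are tiny, and the constant $36$ in the UCD is what absorbs this slack. I also anticipate needing to dispatch the regime $\dtv(\hat p(n),p)>\tfrac12$, where \eqref{eq: entropy difference bound via total variation simple} is inapplicable; there I would argue that the leading UCD term already exceeds the trivial bound $|H(\hat p(n))-H(p)|\le\log|\mathcal{Y}|$, so the claim holds regardless. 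Finally, the hypothesis $n\ge 112\log(2/\delta)$ (which, with $\delta\le 0.2$, forces $n\gtrsim 258$) would be used to keep every logarithmic argument larger than one, to remain in the increasing regime of $g$, and to subsume the $\tfrac{n}{n-1}$ factors and lower-order $1/n$ corrections into the stated constants.
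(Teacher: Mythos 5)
Your proposal is correct and follows essentially the same route as the paper: bound $\E[\dtv(p,\hat{p}(n))]$ by $\sqrt{\zeta(p)|\mathcal{Y}|/n}$ via Jensen/Cauchy--Schwarz, apply McDiarmid's inequality both to $\dtv(p,\hat{p}(n))$ and to $\hat{\zeta}(n)$ (accounting for the $(1-1/n)$ bias), union-bound the two events, and push the resulting three-term total-variation bound through $x\mapsto x\log(|\mathcal{Y}|/x)$ in its monotone regime, with the hypothesis $n\geq 112\log(2/\delta)$ and $\delta\leq 0.2$ guaranteeing that regime. The only local difference is at the final step, where you split $g(a_1+a_2+a_3)$ using subadditivity of $g$ while the paper bounds the sum by $3\max_i a_i$ and treats three cases before summing the results; both are valid, and your version merely requires verifying that the resulting constants are dominated by the stated $36$, $9$, and $|\mathcal{Y}|^{2/3}$ factors, which the margin afforded by $n\geq 112\log(2/\delta)$ provides.
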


To state the upper bound on the regret, we define, as before
\begin{multline}
\Gamma_{\text{tv}}(\alpha,\zeta(p_{i}),\Delta_{i},t)=\max\Bigg\{288\frac{\zeta(p_{i})}{|{\cal Y}|}\Lambda_{1}^{2}\left(\frac{2|{\cal Y}|}{3\Delta_{i}}\right),\;36230\frac{\alpha^{1/3}\log^{1/3}(t)}{|{\cal Y}|^{2/3}}\Lambda_{1}^{4/3}\left(\frac{2|{\cal Y}|}{3\Delta_{i}}\right),\\
\frac{135}{|{\cal Y}|^{2}}\Lambda_{2}\left(\frac{9|{\cal Y}|^{2}\alpha\log(t)}{\Delta_{i}^{2}}\right),\;\frac{3}{|{\cal Y}|^{2/3}}\Lambda_{4/3}\left(\frac{27|{\cal Y}|^{4/3}\alpha^{1/3}\log^{1/3}(t)}{\Delta_{i}^{4/3}}\right),30\cdot\alpha\log\left(2^{1/\alpha}t\right),119\zeta(p_{i})|{\cal Y}|\Bigg\}.\label{eq: u tv}
\end{multline}

\begin{thm}
\label{thm: UCB-TV regret}Assume that Algorithm \ref{alg:A-UCB-general}
is run with the plug-in entropy estimator
\[
\hat{H}(\boldsymbol{Y},n)\equiv H(\hat{p}(\boldsymbol{Y},n)),
\]
and upper confidence deviation  
\[
\ucb(\boldsymbol{Y},\delta,n)\equiv\ucb_{\text{\emph{tv}}}(\hat{\zeta}(\boldsymbol{Y},n),{\cal Y},\delta,n),
\]
with $\delta\equiv\delta_{\alpha}(t)=t^{-\alpha}$ with $\alpha>2$.
Then,
\begin{equation}
R(t)\leq\sum_{i\in[K]:\Delta_{i}>0}\left[\Gamma_{\text{\emph{tv}}}(\alpha,\zeta(p_{i}),\Delta_{i},t)\cdot\Delta_{i}+\frac{4(\alpha-1)}{\alpha-2}\cdot\Delta_{i}\right].\label{eq: regret bound general}
\end{equation}
\end{thm}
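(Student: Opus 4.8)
The plan is to follow the canonical UCB regret decomposition, as in the proof of Theorem~\ref{thm: UCB-Bias regret}, and to bound $\E[N_i(t)]$ separately for each suboptimal arm $i$ via the representation \eqref{eq: expected pseudo-regret}. For a fixed suboptimal arm $i$, I would argue that at any round the event $\{I(t)=i\}$ forces the optimistic index of $i$ to dominate that of $i^*$. Intersecting with the \emph{good event} on which (i) the upper confidence bound of Proposition~\ref{prop: TV plug-in estimator} holds for $i^*$ and (ii) its two-sided version holds for $i$, this domination yields $H_i + 2\,\ucb_{\text{tv}}(\hat\zeta_i,{\cal X}_i,\delta,N_i) \ge H_{i^*}$, i.e. $2\,\ucb_{\text{tv}}(\hat\zeta_i,{\cal X}_i,\delta,N_i) \ge \Delta_i$. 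Hence, on the good event, arm $i$ can only be played while $N_i$ is below a deterministic threshold $u_i$, the smallest $n$ for which $2\,\ucb_{\text{tv}}$ is guaranteed strictly below $\Delta_i$. The regret contribution then splits into $u_i\Delta_i$ plus the cost of the complementary bad events summed over the rounds.

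Computing $u_i$ is where the argument departs from the bias-corrected case, because the width $\ucb_{\text{tv}}(\hat\zeta_i,{\cal X}_i,\delta,n)$ in \eqref{eq: confidence bound TV} is itself random through the plug-in estimate $\hat\zeta_i$. To obtain a deterministic threshold in terms of the population quantity $\zeta(p_i)$, I would add to the good event a third component on which $\hat\zeta_i$ is controlled by $\zeta(p_i)$ (up to a constant and a lower-order term), via a concentration inequality for $\hat\zeta(n)$; this both lets me replace $\hat\zeta_i$ by $\zeta(p_i)$ inside the logarithms and explains the term $119\,\zeta(p_i)|{\cal X}_i|$ in \eqref{eq: u tv}, which encodes the minimal sample size for this control (the effective alphabet size $\zeta(p_i)|{\cal X}_i|$). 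I would then solve $2\,\ucb_{\text{tv}}(\zeta(p_i),{\cal X}_i,\delta,n)\le\Delta_i$ by forcing each of the three summands of \eqref{eq: confidence bound TV} to be at most $\Delta_i/6$. Inverting the first summand $3\sqrt{\zeta|{\cal X}_i|/n}\,\log(n|{\cal X}_i|/36\zeta)$ yields a bound of the form $\frac{\zeta(p_i)}{|{\cal X}_i|}\Lambda_1^2\!\left(\frac{2|{\cal X}_i|}{3\Delta_i}\right)$; inverting the third summand, whose $n$-dependence is $n^{-3/4}\log^{1/4}(2/\delta)$, produces the $\tfrac{4}{3}$-power terms carrying $\alpha^{1/3}\log^{1/3}(t)$; and inverting the middle, $\delta$-dependent summand produces the $\Lambda_2$ and $\Lambda_{4/3}$ terms in $\log(t)/\Delta_i^2$. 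Taking the maximum over these requirements, together with the validity condition $n\ge112\log(2/\delta)$ of Proposition~\ref{prop: TV plug-in estimator} (the source of $30\,\alpha\log(2^{1/\alpha}t)$, since $\log(2/\delta)=\alpha\log(2^{1/\alpha}t)$ for $\delta=t^{-\alpha}$), recovers exactly $\Gamma_{\text{tv}}(\alpha,\zeta(p_i),\Delta_i,t)$.

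The remaining contribution comes from the bad events. At each round the failure probability of the finitely many confidence statements is controlled by $\delta_\alpha(t)=t^{-\alpha}$ (using $\delta\in[0,0.2]$ so the propositions apply). Summing these probabilities over $t$, weighted as dictated by the decomposition, yields a convergent series in $\sum_t t^{-\alpha}$ with $\alpha>2$; bounding it by an integral produces the additive constant $\frac{4(\alpha-1)}{\alpha-2}\Delta_i$ of \eqref{eq: regret bound general}, exactly as in Theorem~\ref{thm: UCB-Bias regret}. Assembling $u_i\Delta_i=\Gamma_{\text{tv}}(\cdot)\Delta_i$ with this constant and summing over all suboptimal arms gives \eqref{eq: regret bound general}.

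The main obstacle is precisely the data-dependence of the confidence width: unlike the bias-corrected estimator, here both the estimator and its radius depend on the samples through $\hat\zeta_i$, so the threshold argument cannot rest on a deterministic interval. The crux is therefore to establish a sufficiently tight one-sided concentration of $\hat\zeta(n)$ around $\zeta(p_i)$ and to carry it through the several logarithmic factors of \eqref{eq: confidence bound TV} without degrading the dependence on the effective alphabet size $\zeta(p_i)|{\cal X}_i|$. The subsequent term-by-term inversion of the three-summand UCD, and the reconciliation of its validity window with that of the $\hat\zeta$-concentration, are the constant-heavy bookkeeping that ultimately produces the six-way maximum in \eqref{eq: u tv}.
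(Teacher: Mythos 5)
Your proposal follows essentially the same route as the paper: the same three-part good/bad event decomposition (entropy confidence failures for $i^*$ and $i$ plus a concentration event for $\hat{\zeta}$, the latter being exactly how the paper handles the data-dependence of the confidence width), the same term-by-term inversion of the three summands of \eqref{eq: confidence bound TV} via Lemma \ref{lem: inverting polylog over linear} to obtain the deterministic threshold $\Gamma_{\text{tv}}$, and the same summation of failure probabilities to get the $\frac{4(\alpha-1)}{\alpha-2}\Delta_i$ constant. The only cosmetic discrepancy is the provenance of the $30\,\alpha\log(2^{1/\alpha}t)$ and $119\,\zeta(p_i)|{\cal X}_i|$ terms, which in the paper arise from ensuring monotonicity of $\ucb_{\text{tv}}$ in its $\zeta$-argument (so the overestimate of $\hat{\zeta}$ can be propagated) rather than from the validity window of Proposition \ref{prop: TV plug-in estimator}; this does not affect correctness.
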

To show the improvement of  using $\ucb_{\text{tv}}(\zeta,{\cal Y},\delta,n)$ of (\ref{eq: confidence bound TV}) over $\ucb_{\text{bias}}(\delta,n)$ of  (\ref{eq: confidence bound bias}), we observe that for a non-asymptotic time $t$ that is upper bounded by a polynomial function of
$|\mathcal{X}_i|$,  the regret bound of Theorem \ref{thm: UCB-TV regret} scales as 
\[
\tilde{O}\left(\frac{\zeta(p_{i})|{\cal X}_{i}|+1}{\Delta_{i}}+ \frac{|{\cal X}_{i}|^{2/3}}{\Delta_{i}^{1/3}}+\zeta(p_{i})|{\cal X}_{i}|\Delta_{i}+\Delta_{i}\right),
\]
where the $\tilde{O}(\cdot)$ hides logarithmic terms in the gap,
the alphabet size and the number of rounds. For a fixed gap, the dependence
on the alphabet size is $|{\cal X}_{i}|^{2/3}\vee \zeta(p_{i})|{\cal X}_{i}|$ 
which can be much smaller than the $|{\cal X}_{i}|$ dependence obtained
in Theorem \ref{thm: UCB-Bias regret} for the biased-based UCB. For a gap-independent bound,
we only need to consider the terms which blow-up as $\Delta_{i}\downarrow0$,
and this leads to a bound of the order $\tilde{O}(\sqrt{(\zeta(p_{i})|{\cal X}_{i}|+1)t}\vee \sqrt{|{\cal X}_{i}|}\cdot t^{1/4})$, for which the leading term $\sqrt{t}$ is multiplied by roughly $\sqrt{\zeta(p_{i})|{\cal X}_{i}|}$ rather than $|{\cal X}_{i}|$ itself. 

\section{Multi-Armed Bandits with Entropy Rewards and Support Estimation}\label{sec:unknown_support}

The previous sections assume that the player knows in advance the alphabet sizes of the sampled random variables. In practice, the exact alphabet size may not be known by the player in advance. Additionally, even when the player does know in advance the alphabet size, the actual support size can be  drastically smaller than the alphabet size, and this leads to unnecessarily large confidence bounds which increase the player's regret. To address this problem, we consider the case where the alphabet size is unknown, however, it is loosely upper bounded by a parameter $\kappa$ via the assumption $\min_{x\in\mathbb{N}_+}\{p_i(x): p_i(x) > 0\} \geq \frac{1}{\kappa}$ for every arm $i\in[K]$. We next derive upper confidence bounds for unknown support and present a UCD with support size estimation that can be input to Algorithm \ref{alg:A-UCB-general}. The proof of the theoretical results derived in this section are relegated to Appendix \ref{append: proofs sec unknown_support}.

\subsection{A Concentration Inequality for Arm's Support Size}
Let $Y \sim P$ be a discrete random variable over an $\mathbb{N}_+$, such that $p(a) = P[Y = a]$ for $a \in \mathbb{N}_+$.
Assume that 
\[\min_{a\in\mathbb{N}_+}\{P(a): P(a) > 0\} \geq \frac{1}{\kappa}\]
for some given $\kappa \geq 1$, and so it holds that the support
size of $Y$ is at most $\kappa$. 
Let $\mathcal{S}(P):=\{a \in \mathbb{N}_+: P(a) > 0\}$   be the support of $P$.
Given $n$ IID samples $\boldsymbol{Y}=(Y_i)_{i\in[n]}$ from $P$, our goal is to find confidence
interval for the support size $S(P) := |\mathcal{S}(P)|$. 

For a given dataset $(Y_i)_{i\in[n]}$, let 
\[N_a(n):= \sum_{i=1}^n\mathbbm{1}[Y_i=a]\] 
be the number of appearances of $a$ in $(Y_i)_{i\in[n]}$. The simplest intuitive estimator is given by the plug-in estimator
\begin{flalign}
\hat{S}(\boldsymbol{Y},n) = \sum_{a\in\mathbb{N}_+}\mathbbm{1}[N_a(n)>0],
\end{flalign}
which is the number of different symbols that appeared in $(Y_i)_{i\in[n]}$.

\begin{prop}\label{prop:concent_ineq_support}
Let $\boldsymbol{Y}=\{Y_{\ell}\}_{\ell\in[n]}$
be IID from a discrete distribution $p$ over a \emph{finite} alphabet
${\cal Y}$ such that $p(y)\dfn\P[Y=y]$. Then, assuming $n\geq1$,
it holds for any $\delta\in(0,1)$ that 
\begin{flalign}
\hat{S}(\boldsymbol{Y},n) \leq S(P)\leq \left(\hat{S}(\boldsymbol{Y},n)+\sqrt{\frac{1}{2}\log\left(\frac{1}{\delta}\right)}\right)\cdot\left(1-e^{-\frac{n}{\kappa}}\right)^{-1},
\end{flalign}
with probability larger than $1-\delta$.
\end{prop}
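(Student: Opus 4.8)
The plan is to treat the two inequalities separately: the left one holds deterministically, while the right one holds on a high-probability event. The lower bound $\hat{S}(\boldsymbol{Y},n)\le S(P)$ needs no probability at all, since the set of distinct symbols observed in $\boldsymbol{Y}$ is a subset of $\mathcal{S}(P)$; consequently the two-sided statement will hold with probability at least $1-\delta$ as soon as the upper bound does.

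For the upper bound I would first control $\E[\hat{S}(\boldsymbol{Y},n)]$ and then concentrate $\hat{S}$ around it. Writing $\hat{S}(\boldsymbol{Y},n)=\sum_{a\in\mathcal{S}(P)}\mathbbm{1}[N_a(n)>0]$, each indicator has mean $\P[N_a(n)>0]=1-(1-p(a))^n$. Invoking the elementary inequality $(1-x)^n\le e^{-nx}$ and the support assumption $p(a)\ge 1/\kappa$ for every $a\in\mathcal{S}(P)$ gives $\P[N_a(n)>0]\ge 1-e^{-np(a)}\ge 1-e^{-n/\kappa}$. Summing over the support (a finite sum, since $S(P)\le\kappa$) yields $\E[\hat{S}(\boldsymbol{Y},n)]\ge S(P)\,(1-e^{-n/\kappa})$, i.e.\ $S(P)\le \E[\hat{S}(\boldsymbol{Y},n)]\,(1-e^{-n/\kappa})^{-1}$. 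This is the deterministic ``population'' version of the desired bound, already exhibiting the factor $(1-e^{-n/\kappa})^{-1}$.

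It remains to replace the expectation by the observed value $\hat{S}(\boldsymbol{Y},n)$ through a lower-tail concentration inequality. The structural fact that makes this work is that $\hat{S}$, seen as a function of the IID coordinates $(Y_1,\dots,Y_n)$, has the bounded-difference property with constant one: changing a single $Y_i$ alters the number of distinct observed symbols by at most one. A bounded-difference (McDiarmid-type) argument then controls $\E[\hat{S}(\boldsymbol{Y},n)]-\hat{S}(\boldsymbol{Y},n)$ from above, and calibrating the confidence level against $\delta$ converts this into the additive correction $\sqrt{\tfrac12\log(1/\delta)}$ appearing in the statement. On the intersection of this event (probability $\ge 1-\delta$) with the deterministic facts above, $S(P)\le \E[\hat{S}(\boldsymbol{Y},n)]\,(1-e^{-n/\kappa})^{-1}\le (\hat{S}(\boldsymbol{Y},n)+\sqrt{\tfrac12\log(1/\delta)})\,(1-e^{-n/\kappa})^{-1}$, which is precisely the claim.

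The step I expect to be the crux is this concentration: a direct application of McDiarmid with unit bounded differences yields a slack of order $\sqrt{(n/2)\log(1/\delta)}$, whereas the statement's $\sqrt{(1/2)\log(1/\delta)}$ carries no such factor, so obtaining it requires a sharper tool --- e.g.\ exploiting the negative association of the occupancy indicators $\mathbbm{1}[N_a(n)>0]$ (so that a Hoeffding/Chernoff bound applies as if over the at-most-$\kappa$ bins) or the self-bounding structure of the distinct-count functional. Pinning down the precise constant in the slack, rather than the overall $\sqrt{\log(1/\delta)}$ scaling, is where the careful work lies; the expectation bound and the deterministic lower bound are routine.
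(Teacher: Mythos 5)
Your outline is the same route the paper takes: the deterministic inclusion $\hat{S}(\boldsymbol{Y},n)\leq S(P)$, the expectation bound $\E[\hat{S}(\boldsymbol{Y},n)]\geq S(P)(1-e^{-n/\kappa})$ obtained from $\P[N_a(n)=0]=(1-p(a))^n\leq e^{-n/\kappa}$ (this is Lemma \ref{lem:bound_exp_support_est}), and a bounded-difference concentration of $\hat{S}$ about its mean (Lemma \ref{lem:concent_ineq_suppor_size}), assembled by the triangle inequality. So there is no difference of approach to report.

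The step you single out as the crux, however, is precisely where the paper's own argument breaks, and your suspicion is correct rather than a defect of your proposal. Lemma \ref{lem:concent_ineq_suppor_size} asserts $\P(|\hat{S}(\boldsymbol{Y},n)-\E[\hat{S}(\boldsymbol{Y},n)]|\geq\epsilon)\leq\exp(-2\epsilon^{2})$ ``directly from McDiarmid's inequality'' with unit bounded differences, but McDiarmid over $n$ coordinates with $c_{i}=1$ gives $\exp(-2\epsilon^{2}/n)$, i.e.\ exactly the $\sqrt{(n/2)\log(1/\delta)}$ slack you computed; the stated bound silently drops the $\sum_{i}c_{i}^{2}=n$ normalization. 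Nor can the dimension-free constant be rescued by the sharper tools you suggest: negative association of the occupancy indicators or the self-bounding property of the distinct-count functional yield tails of the form $\exp(-c\,\epsilon^{2}/S(P))$ or $\exp(-\epsilon^{2}/(2\E[\hat{S}]))$, which still scale with the support size. Indeed, for the uniform distribution on $\kappa$ symbols with $n=\kappa$ one has $\mathrm{Var}(\hat{S}(\boldsymbol{Y},n))=\Theta(\kappa)$ and $\hat{S}$ is asymptotically normal, so $\P(\E[\hat{S}]-\hat{S}\geq 1)\to 1/2$, contradicting the claimed $e^{-2}$ bound and, by the same token, falsifying the high-probability upper bound of Proposition \ref{prop:concent_ineq_support} for fixed $\delta$ and large $\kappa$. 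The honest conclusion is that your proof stalls exactly where the paper's is wrong: the additive correction in the proposition must grow like $\sqrt{n\log(1/\delta)}$ (or at least like $\sqrt{S(P)\log(1/\delta)}$), and no completion of the outline will produce the constant $\sqrt{\tfrac{1}{2}\log(1/\delta)}$ as stated.
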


\subsection{A UCB Regret Bound with Support Size Estimation}

Next, we present  a UCB policy with support size estimation and an upper bound on the resulting expected regret. 
We apply our analysis to the finite time regime where $\sqrt{\alpha\log(t)/2}<S(P)<\kappa$. In this regime the number of available samples for an arm is $o(\kappa\log \kappa)$ for which it is known that the plug-in estimator for the support size is known to grossly underestimate the true support size (see \cite{estimation_unseen_Wu_2019} for a detailed discussion).
Furthermore, the regime $\sqrt{\alpha\log(t)/2}<S(P)$ considers the case in which the bound of Proposition \ref{prop:concent_ineq_support} is non-trivial, and allows for $\delta$, chosen as $t^{-\alpha}$, to be positive.

In the results of the previous sections, both the biased-corrected entropy estimator and the confidence bounds depend on the alphabet size. Additionally, the  pseudo-regret terms depend (monotonically) on the alphabet size. So, if the true support size is much smaller than the alphabet size, and if the player is aware of this, it can significantly reduce its pseudo-regret (bound). Nonetheless, when the player does not necessarily know the support size in advance, it can  estimate it to reduce its pseudo-regret. Next, for simplicity of presentation, we focus on  the biased-corrected entropy estimation presented in Section \ref{sec: Bias corrected UCB}.  

Define
\begin{flalign}
B_{\text{SE}}(\boldsymbol{Y},\delta,n)&\dfn\log\left(1+\frac{1}{n}\left[\left(\hat{S}(\boldsymbol{Y},n)+\sqrt{\frac{1}{2}\log\left(\frac{1}{\delta}\right)}\right)\cdot\left(1-e^{-\frac{n}{\kappa}}\right)^{-1}-1\right]\right),
\end{flalign}
and 
\begin{flalign}\label{eq:ucb_uknown_support}
\ucb_{\text{SE}}(\boldsymbol{Y},\delta,n)\dfn B_{\text{SE}}(\boldsymbol{Y},\delta,n)+\sqrt{\frac{2\log^{2}(n)}{n}\log\left(\frac{2}{\delta}\right)}.
\end{flalign}

Now, we present the following upper confidence interval bound for the bias-corrected entropy estimator with an unknown support size.

\begin{prop}\label{prop:UCB_error_prob_etropy_est_unkown_support}
Let $\boldsymbol{Y}=\{Y_{\ell}\}_{\ell\in[n]}$
be IID from a discrete distribution $p$ over a \emph{finite} alphabet
${\cal Y}$ such that $p(y)\dfn\P[Y=y]$. Then, assuming $n\geq2$,
it holds for any $\delta\in(0,1)$ that
\[
\left|H(\hat{p}(n))-H(p)\right|\leq\ucb_{{\text{\emph{SE}}}}(\boldsymbol{Y},\delta,n),
\]
with probability larger than $1-2\delta$. 
\end{prop}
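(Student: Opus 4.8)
The plan is to combine two high-probability events via a union bound. Proposition~\ref{prop:UCB_error_prob_etropy_est_unkown_support} asserts that the \emph{bias-corrected} plug-in entropy estimator, where the bias correction now uses the \emph{estimated} (rather than known) support size, still furnishes a valid two-sided confidence interval, at the cost of doubling the failure probability from $\delta$ to $2\delta$. First I would identify the two sources of randomness that must be controlled simultaneously: the concentration of the plug-in entropy estimator $H(\hat{p}(n))$ around its (biased) mean, and the concentration of the support-size estimator $\hat{S}(\boldsymbol{Y},n)$ around the true support size $S(P)$. Define the event $E_1$ to be the success event of Proposition~\ref{prop: bias corrected plug-in estimator}, namely that $|H(\hat{p}(n))-H(p)|\leq \ucb_{\text{bias}}(\delta,n)$, which holds with probability at least $1-\delta$; here $\ucb_{\text{bias}}$ is built from the \emph{true} bias bound $B(n)=\log(1+\frac{|\mathcal{Y}|-1}{n})$. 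Define the event $E_2$ to be the success event of Proposition~\ref{prop:concent_ineq_support}, namely that $\hat{S}(\boldsymbol{Y},n)\leq S(P)\leq (\hat{S}(\boldsymbol{Y},n)+\sqrt{\tfrac{1}{2}\log(\tfrac{1}{\delta})})(1-e^{-n/\kappa})^{-1}$, which holds with probability at least $1-\delta$.

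Next I would show that on the intersection $E_1\cap E_2$, the claimed bound holds. The key observation is that the data-dependent quantity $B_{\text{SE}}(\boldsymbol{Y},\delta,n)$ is, by construction, the value of $B(n)$ obtained by substituting the upper confidence bound on $S(P)$ (from Proposition~\ref{prop:concent_ineq_support}) in place of the true support size $|\mathcal{Y}|$. Formally, writing $\bar{S}\dfn (\hat{S}(\boldsymbol{Y},n)+\sqrt{\tfrac{1}{2}\log(\tfrac{1}{\delta})})(1-e^{-n/\kappa})^{-1}$ for the upper endpoint, we have $B_{\text{SE}}(\boldsymbol{Y},\delta,n)=\log(1+\frac{\bar{S}-1}{n})$. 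On $E_2$ we have $S(P)\leq \bar{S}$, and since the true alphabet support has size $S(P)=|\supp(p)|$, the \emph{effective} alphabet for the bias bound (\ref{eq:bias_negative_lower_bound}) is the support, so the true bias satisfies $B(n)=\log(1+\frac{S(P)-1}{n})\leq \log(1+\frac{\bar{S}-1}{n})=B_{\text{SE}}(\boldsymbol{Y},\delta,n)$ by monotonicity of $x\mapsto\log(1+\frac{x-1}{n})$. Consequently $\ucb_{\text{bias}}(\delta,n)\leq \ucb_{\text{SE}}(\boldsymbol{Y},\delta,n)$, since the two differ only in their first (bias) term while sharing the identical fluctuation term $\sqrt{\frac{2\log^2(n)}{n}\log(\frac{2}{\delta})}$. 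Chaining this with the inequality guaranteed by $E_1$ gives $|H(\hat{p}(n))-H(p)|\leq \ucb_{\text{bias}}(\delta,n)\leq \ucb_{\text{SE}}(\boldsymbol{Y},\delta,n)$ on $E_1\cap E_2$.

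Finally I would close with the union bound: $\P(E_1^c\cup E_2^c)\leq \P(E_1^c)+\P(E_2^c)\leq \delta+\delta=2\delta$, so $\P(E_1\cap E_2)\geq 1-2\delta$, which yields the stated probability guarantee of $1-2\delta$. A subtle point worth flagging is that Proposition~\ref{prop: bias corrected plug-in estimator} is stated with $B(n)$ using the \emph{alphabet size} $|\mathcal{Y}|$, whereas the bias bound (\ref{eq:bias_negative_lower_bound}) from \cite{Paninski2003} in fact depends only on the true support size; I would want to confirm that this sharper support-based form of the bias bound is what is being invoked, since the whole point of the support-estimation machinery is to replace the potentially enormous $\kappa$ (the loose upper bound) by the much smaller estimated support.

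I expect the main obstacle to be a purely \emph{conceptual/bookkeeping} one rather than a hard analytic step: one must be careful that the monotonicity argument tying $B_{\text{SE}}$ to the true bias is legitimate, i.e.\ that on $E_2$ the estimated upper bound $\bar{S}$ genuinely dominates the quantity governing the true bias, and that no additional estimation error in $\hat{\zeta}$ or elsewhere sneaks in. The only place requiring genuine care is ensuring the \emph{direction} of the confidence interval is exploited correctly — we need the \emph{upper} bound on $S(P)$ (to make $B_{\text{SE}}$ an over-estimate of the bias, preserving the one-sided optimism needed for the UCB), and we must verify that the lower bound $\hat{S}\leq S(P)$ in Proposition~\ref{prop:concent_ineq_support} plays no role here (it is only needed for regret analysis, not for validity of this confidence interval). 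Everything else reduces to the union bound and monotonicity, so no new concentration inequality is required.
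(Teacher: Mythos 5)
Your proposal is correct and matches the paper's own proof: both condition on the success event of Proposition~\ref{prop:concent_ineq_support}, use monotonicity of $x\mapsto\log(1+\frac{x-1}{n})$ to conclude that $B_{\text{SE}}$ dominates the true (support-size-based) bias term, invoke the concentration of Proposition~\ref{prop: bias corrected plug-in estimator} applied with the support as the effective alphabet, and finish with a union bound giving $2\delta$. The subtlety you flag — that the Paninski bias bound depends only on the support size, not the nominal alphabet — is exactly the reading the paper relies on, so your argument is complete as written.
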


Using this confidence interval, we may adapt Algorithm \ref{alg:A-UCB-general} to include upper confidence bounds on the support size estimation, for each arm, and then derive an upper bound on the pseudo-regret of Algorithm \ref{alg:A-UCB-general} when the support size is estimated. To this end, denote
\begin{flalign}
&\Gamma_{\text{SE}}(\alpha,\beta,S,\kappa,\Delta,t)\dfn\max\left\{\frac{2\sqrt{\kappa}\left(\sqrt{S}+(\alpha\log(t)/2)^{\frac{1}{4}}\right)}{\{e^{\beta\Delta/2}-1\}\vee \sqrt{e^{\beta\Delta/2}-1}} ,\;15\cdot\Lambda_{2}\left(\frac{8\cdot\log(2t^{\alpha})}{(1-\beta)^{2}\Delta^{2}}\right)\right\} .\label{eq: u bias SE}
\end{flalign}

\begin{thm}\label{thm:upper_regret_uknown_support}
Let $\beta\in(0,1)$ be given,  $\delta\equiv\delta_{\alpha}(t)=t^{-\alpha}$ and $\alpha>2$. Additionally, denote by $S(P_i)$ the  support size of arm $i$ and assume that $\{\kappa_i\}_{i\in[K]}$ are given and are such that $S(P_i)\leq \kappa_i$ for all $i\in[K]$. 
Assume that Algorithm \ref{alg:A-UCB-general}
is run with a plug-in entropy estimator $\hat{H}(\boldsymbol{Y},n)\equiv H(\hat{p}(n))$, 
and upper confidence deviation $\ucb(\boldsymbol{Y},\delta,n)\equiv\ucb_{\text{\emph{SE}}}(\boldsymbol{Y},\delta,n)$.
Then, the pseudo-regret is bounded
as 
\begin{equation}\label{eq:upper_pregret_uknown_support}
R(t)\leq\sum_{i\in[K]:\Delta_{i}>0}\left[\Gamma_{\text{\emph{SE}}}(\alpha,\beta,S_i,\kappa,\Delta_i,t)\cdot\Delta_{i}+\frac{4(\alpha-1)}{\alpha-2}\cdot\Delta_{i}\right].
\end{equation}
\end{thm}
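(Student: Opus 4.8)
The plan is to run the standard UCB regret decomposition used for Theorem \ref{thm: UCB-Bias regret}, changing only the two steps where the support-estimation deviation $\ucb_{\text{SE}}$ differs from $\ucb_{\text{bias}}$. Fix a suboptimal arm $i$ with $\Delta_i>0$ and split $\E[N_i(t)]$ into rounds on which the two-sided bound of Proposition \ref{prop:UCB_error_prob_etropy_est_unkown_support} holds for both arm $i$ and arm $i^*$, and rounds on which it fails. On the good event, whenever $I(\tau)=i$, optimism for the best arm gives $\hat H(\boldsymbol X_{i^*})+\ucb_{\text{SE},i^*}\geq H_{i^*}$, the two-sided bound for arm $i$ gives $\hat H(\boldsymbol X_i)\leq H_i+\ucb_{\text{SE},i}$, and the greedy rule gives $\hat H(\boldsymbol X_i)+\ucb_{\text{SE},i}\geq \hat H(\boldsymbol X_{i^*})+\ucb_{\text{SE},i^*}$. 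Chaining these yields $2\,\ucb_{\text{SE}}(\boldsymbol X_i(\tau-1),\delta_\alpha(\tau),N_i(\tau-1))\geq\Delta_i$, so on the good event arm $i$ can be played only while $\ucb_{\text{SE}}(\cdot,\cdot,N_i)\geq\Delta_i/2$. It then suffices to exhibit a sample threshold $N^\star$ beyond which $\ucb_{\text{SE}}<\Delta_i/2$; that threshold is precisely $\Gamma_{\text{SE}}$.

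The core computation is inverting $\ucb_{\text{SE}}<\Delta_i/2$, which I would split multiplicatively through $\beta$ by demanding separately that $B_{\text{SE}}\leq\beta\Delta_i/2$ and that the concentration term $\sqrt{2\log^2(n)\log(2/\delta)/n}\leq(1-\beta)\Delta_i/2$. The second requirement is literally the one from Theorem \ref{thm: UCB-Bias regret}: with $\delta=t^{-\alpha}$ it reduces to $n/\log^2 n\geq 8\log(2t^\alpha)/((1-\beta)^2\Delta_i^2)$, whose solution is the $15\cdot\Lambda_2(\cdot)$ entry of $\Gamma_{\text{SE}}$. For the first requirement, since $B_{\text{SE}}=\log(1+x)$ I would pass to the exact equivalent $x\leq e^{\beta\Delta_i/2}-1=:c$, replace the empirical $\hat S(\boldsymbol Y,n)$ by $S_i$ using the \emph{deterministic} inclusion $\hat S(\boldsymbol Y,n)\leq S(P_i)$ together with monotonicity of $B_{\text{SE}}$ in $\hat S$, and separate the two scales via $\sqrt{a+b}\leq\sqrt a+\sqrt b$, which peels off $\sqrt{S_i}$ from $(\alpha\log(t)/2)^{1/4}$ (the latter being $\sqrt{\tfrac12\log\tfrac1\delta}$ raised to the half power).

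The main obstacle is the factor $(1-e^{-n/\kappa})^{-1}$, which is exactly what forces both the $\sqrt{\kappa}$ in the numerator and the $c\vee\sqrt c$ in the denominator of the first argument of $\Gamma_{\text{SE}}$. In the regime of interest the per-arm sample count is $o(\kappa\log\kappa)$, so one is typically at $n\lesssim\kappa$, where $1-e^{-n/\kappa}\geq n/(2\kappa)$ gives $(1-e^{-n/\kappa})^{-1}\leq 2\kappa/n$ and the constraint $x\leq c$ becomes quadratic in $1/n$, namely $2\kappa\bigl(S_i+\sqrt{\tfrac12\log\tfrac1\delta}\bigr)/n^2\lesssim c$; solving gives $n\gtrsim \sqrt{\kappa}\,(\sqrt{S_i}+(\alpha\log t/2)^{1/4})/\sqrt c$, the $\sqrt c$ branch. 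For large gaps one instead lands at $n\gtrsim\kappa$, where $(1-e^{-n/\kappa})^{-1}\leq 2$ makes the constraint linear in $1/n$ and produces the sharper $c$ branch; taking the smaller admissible threshold (equivalently, $c\vee\sqrt c$ in the denominator) gives the stated form. Here I would use the regime hypothesis $\sqrt{\alpha\log(t)/2}<S_i<\kappa$ to keep the numerators consistent, e.g. $S_i=\sqrt{S_i}\sqrt{S_i}\leq\sqrt{\kappa S_i}$ and $\sqrt{\tfrac12\log\tfrac1\delta}\leq\sqrt{\kappa}\,(\alpha\log t/2)^{1/4}$. The cases where the first term would underestimate the true bias threshold (very small $\Delta_i$) are harmless, because there the concentration entry $\Lambda_2(\cdot)\sim\Delta_i^{-2}$ dominates the maximum and already exceeds the $\Theta(\Delta_i^{-1})$ bias threshold.

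Finally I would assemble the bound. On the good event $N_i(t)\leq\Gamma_{\text{SE}}(\alpha,\beta,S_i,\kappa,\Delta_i,t)$, while the contribution of the bad events is controlled by the same convergent sum $\sum_{\tau}\tau^{1-\alpha}\leq\frac{\alpha-1}{\alpha-2}$ as in the proof of Theorem \ref{thm: UCB-Bias regret}, now carrying an extra factor of two because Proposition \ref{prop:UCB_error_prob_etropy_est_unkown_support} holds with probability $1-2\delta$ (a union bound over the support-size event of Proposition \ref{prop:concent_ineq_support} and the entropy-concentration event) rather than $1-\delta$. This doubling yields the additive term $\frac{4(\alpha-1)}{\alpha-2}\Delta_i$, exactly twice the constant of the known-alphabet case. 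Multiplying the bound on $\E[N_i(t)]$ by $\Delta_i$ and summing over $i$ with $\Delta_i>0$ delivers \eqref{eq:upper_pregret_uknown_support}.
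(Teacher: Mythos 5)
Your proposal is correct and follows essentially the same route as the paper's proof: the standard three-event UCB decomposition, a deterministic inversion of $\ucb_{\text{SE}}\leq\Delta_i/2$ split through $\beta$ into the bias and concentration terms (with the two-regime treatment of $(1-e^{-n/\kappa})^{-1}$ via $1-e^{-x}\geq x/2$ for $n\leq\kappa$ and $1-e^{-n/\kappa}\geq 1/2$ for $n\geq\kappa$, exactly as in the paper's Lemma \ref{lemma:lower_bound_arm_sample_support}), and the doubled failure probability from Proposition \ref{prop:UCB_error_prob_etropy_est_unkown_support} producing the $\frac{4(\alpha-1)}{\alpha-2}$ constant. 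No gaps worth noting.
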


Note that $\Gamma_{\text{SE}}(\alpha,\beta,S,\kappa,\Delta,t)=O(\kappa^{2/3})$, for given $\beta$ and $\Delta$,
whenever $S\vee\alpha\log(t)=o(\kappa^{2/3})$. In this case, estimating the support significantly decreases the pseudo-regret by an order of $\kappa^{1/3}$ with respect to that of the bias-correction entropy estimator we present in Section \ref{sec: Bias corrected UCB} where $|{\cal Y}|=\kappa$, cf. \eqref{eq: u bias}.

\section{Numerical Experiments \label{sec:Simulations}}
In this section we present numerical experiments that illustrate the average total regret achieved by Algorithm \ref{alg:A-UCB-general} for the various upper confidence bounds we develop.
We examined the setups summarized in Table \ref{table:numerical_results_setups}, each includes two arms, the subscript 1 denotes quantities of the first arm, similarly, the subscript 2 denotes quantities of the second arm.
\begin{table}[h!]
\centering
\begin{tabular}{||c |c| c| c|c|c||} 
 \hline
  &  Alphabet & PMF & Entropy [nats] & $\kappa$  values\\ [0.5ex] 
 \hline\hline
 Setup 1  & binary & \shortstack{$p_1(1)=0.25$ \\ $p_2(1)=0.01$} & \shortstack{$H_1= 0.5623$\\$H_2=0.0560$}  & $2,10,10^3,10^5$ \\ \hline
 Setup 2  & binary &  \shortstack{$p_1(1)=0.1$ \\ $p_2(1)=0.01$}& \shortstack{$H_1=0.3251$\\$H_2=0.0560$} & $2,10,10^3,10^5$\\\hline
  Setup 3  & binary &  \shortstack{$p_1(1)=0.3$ \\ $p_2(1)=0.15$}&\shortstack{$H_1=0.6109$\\$H_2=0.4227$} & $2,10,10^3,10^5$\\\hline
 Setup 4  & ternary &  \shortstack{$p_1(0)=p_1(1)=0.125$ \\ $p_2(0)=p_2(1)=0.005$} &\shortstack{$H_1=0.7356$\\$H_2=0.0629$} & $3,10,10^3,10^5$\\\hline
 Setup 5  & ternary & \shortstack{$p_1(0)=p_1(1)=0.05$ \\ $p_2(0)=p_2(1)=0.005$} & \shortstack{$H_1=0.3944$\\$H_2= 0.0629$} & $3,10,10^3,10^5$\\ \hline
  Setup 6  & ternary & \shortstack{$p_1(0)=p_1(1)=0.15$ \\ $p_2(0)=p_2(1)=0.075$} & \shortstack{$H_1= 0.8188$\\$H_2= 0.5267$} & $3,10,10^3,10^5$\\ \hline
\end{tabular}
\vspace{0.1cm}
\caption{Numerical result setups.}
\label{table:numerical_results_setups}
\end{table}

We set $\alpha=2.1$ and ran each setup for $1.5\times 10^6$ rounds. Additionally, for each setup, we averaged the total regret across $100$ Monte Carlo realizations.   

Figure \ref{fig:binary_alphabet} presents  numerical results for the binary alphabet, i.e., Setups 1-3. Additionally, Figure \ref{fig:ternary_alphabet} presents the numerical results for the ternary alphabet, i.e., Setups 4-6. 
The lines \textit{`Bias'} depict  the average total regret of  the bias-corrected confidence interval used in \eqref{eq: confidence bound bias} with the plug-in entropy estimator. To evaluate the sensitivity  of \eqref{eq: confidence bound bias} to the alphabet size compared with the support size, we examine multiple alphabet sizes $\kappa$ as is described in Table \ref{table:numerical_results_setups}.  
The lines \textit{`TV'} denote the PMF-based confidence intervals that are used with the plug-in entropy estimator. Here too  $\kappa$ denotes the alphabet size known to the player.
In the case of a binary alphabet $\kappa=2$, we take the minimum between the confidence interval \eqref{eq: confidence bound Ber} and the confidence interval \eqref{eq: confidence bound Ber half}. In the case of a larger alphabet, i.e., $\kappa\geq3$, we use the general alphabet confidence interval \eqref{eq: confidence bound TV}.
Finally, The lines \textit{`Bias SE'} depict  the average total regret of  the bias-corrected confidence interval with support estimation used in \eqref{eq:ucb_uknown_support} along with the plug-in entropy estimator.

For the case of binary support size that is known to the player, i.e., $\kappa=2$, it is evident from Figure \ref{fig:binary_alphabet} that the Bernoulli PMF-based confidence intervals  \eqref{eq: confidence bound Ber} and \eqref{eq: confidence bound Ber half} provide a significant reduction in the average total regret in comparison to the combination of the bias corrected estimator and confidence interval used in \eqref{eq: confidence bound bias}. Furthermore, we can see that as expected, the bias correction approach suffers from significantly increased regret values as the probabilities of drawing the symbol `1', i.e., $p(1)$, of the arms get closer to the  boundary points of the interval $\left[0,\frac{1}{2}\right]$. The Bernoulli PMF-based confidence intervals exhibit robustness in these regimes and have not suffered from such stark degradation in performance. It is also evident from Figure \ref{fig:binary_alphabet} that the resulted regrets of both the bias corrected and the PMF based approaches increase as the known alphabet size, i.e., $\kappa$  increases and the support size remains fixed. We note that since the alphabet size affects the biased corrected approach through a logarithmic term in \eqref{eq: confidence bound bias} it is robust to small variations in the alphabet size. In fact, even when $\kappa$ is set to $10^3$ the resulted increase in regret is small. Nonetheless, as we continue to increase $\kappa$ to $10^5$ the regret increases significantly. Interestingly, Figure \ref{fig:binary_alphabet} shows that the PMF approach is very sensitive to the choice of the alphabet size $\kappa$ where the increase in regret is noticeable even for $\kappa=10$. Finally, we can see from Figure \ref{fig:binary_alphabet} that regret of the bias corrected approach with support estimation is robust to the changes in $\kappa$ even for very large values of $\kappa$ such as $10^5$. 

As we increase the alphabet size from two (binary) to three (ternary), Figure \ref{fig:ternary_alphabet} shows that the general PMF-based confidence interval \eqref{eq: confidence bound TV} does not perform as well as the binary ones, i.e.,  \eqref{eq: confidence bound Ber} and \eqref{eq: confidence bound Ber half}. 
This occurs since the general PMF-based confidence interval  \eqref{eq: confidence bound TV} targets scenarios where   $\zeta(p_{i})\cdot|{\cal X}_{i}|$ is sufficiently small.
Furthermore, we can see that  $\kappa$ impacts the experience regret of sources with ternary support size, i.e.,  Figure \ref{fig:ternary_alphabet} in a very similar way to its impact on the regret in the case of a binary support size, see Figure \ref{fig:binary_alphabet}. 

\begin{figure*}
\centering
 \subfloat[][(a):\:\:Setup 1]{\includegraphics[scale=0.6]{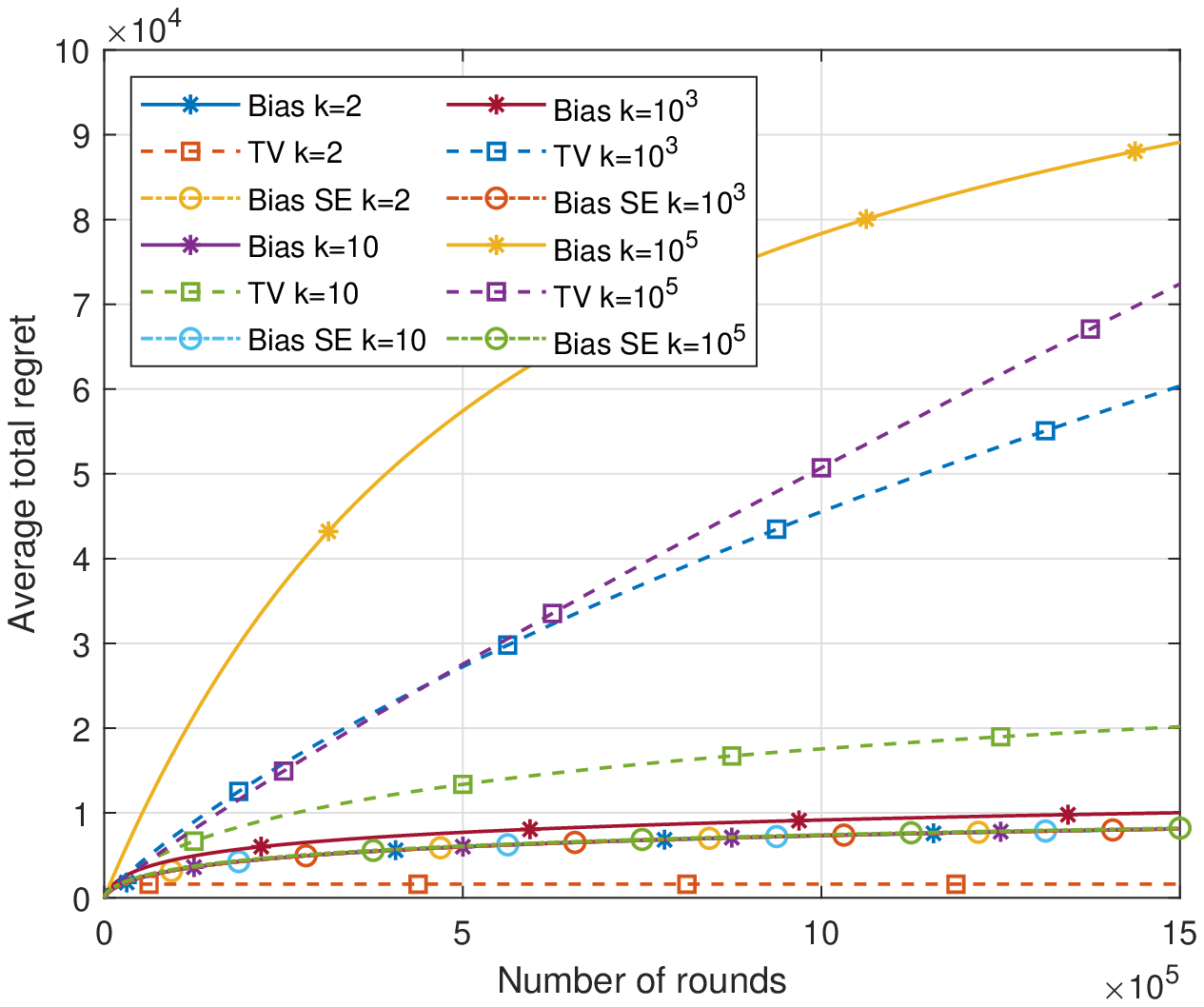}\label{fig:binary_setup1}}\hspace{0.5cm} 
 \subfloat[][(b):\:\: Setup 2]{\includegraphics[scale=0.6]{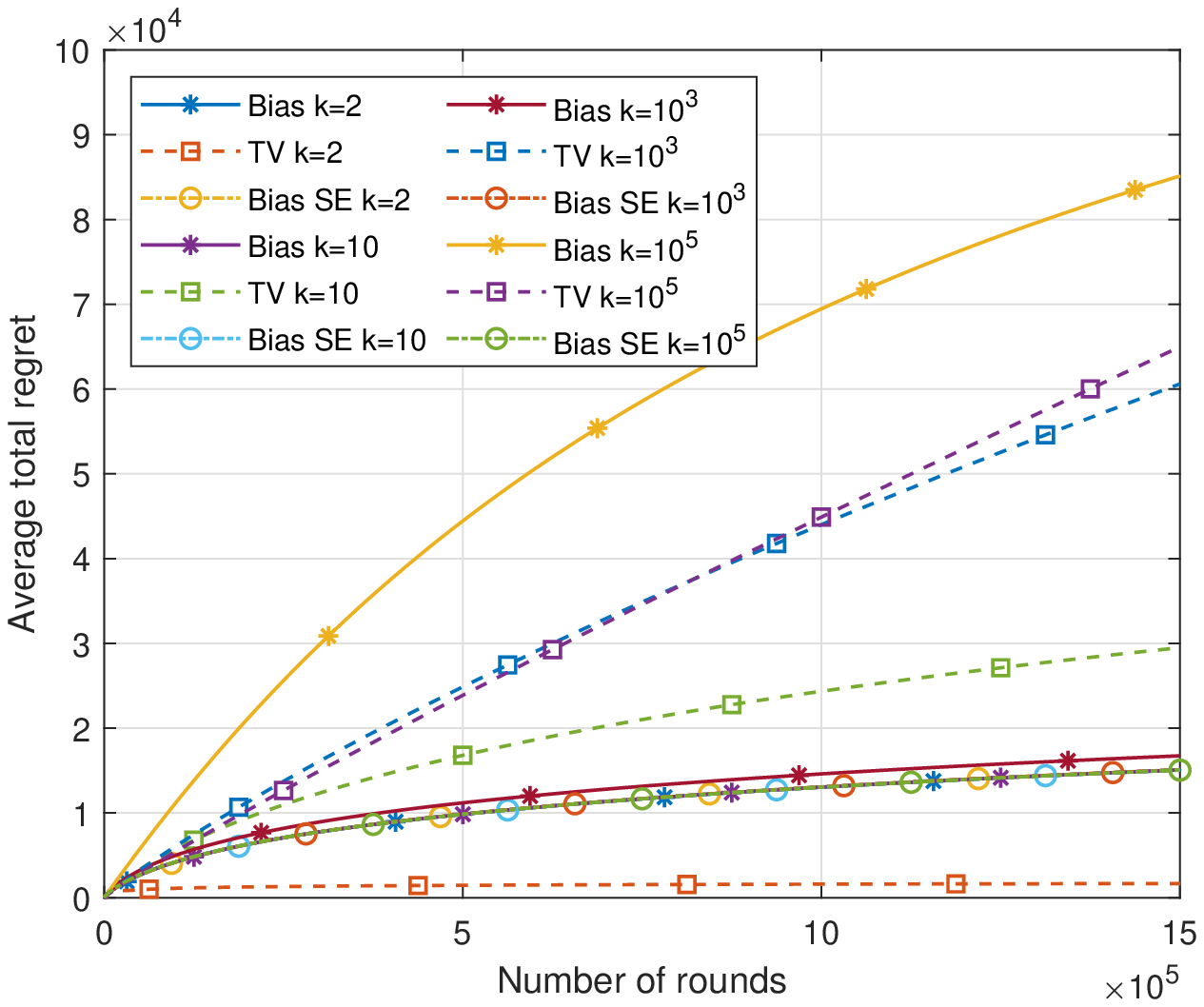}\label{fig:binary_setup2}}\\
 \subfloat[][(c):\:\:Setup 3 ]{\includegraphics[scale=0.6]{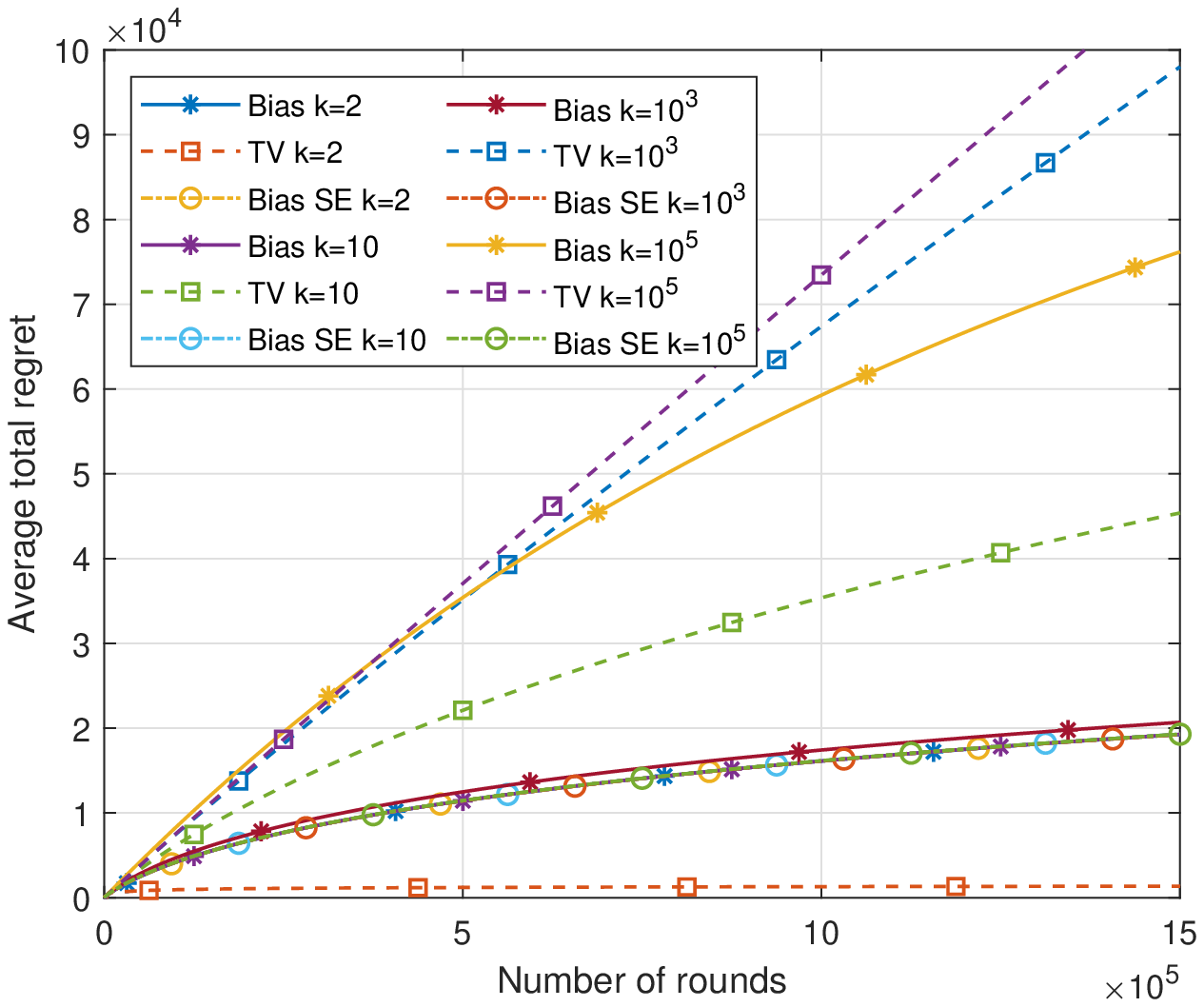}\label{fig:binary_setup3}}
 \caption{Average total regret as a function of the number of rounds for a two-armed bandit model for arms with a binary alphabet.}\label{fig:binary_alphabet}
  \vspace{-0.1in}
\end{figure*}

 \begin{figure*}
\centering
 \subfloat[][(a):\:\:Setup 4]{\includegraphics[scale=0.6]{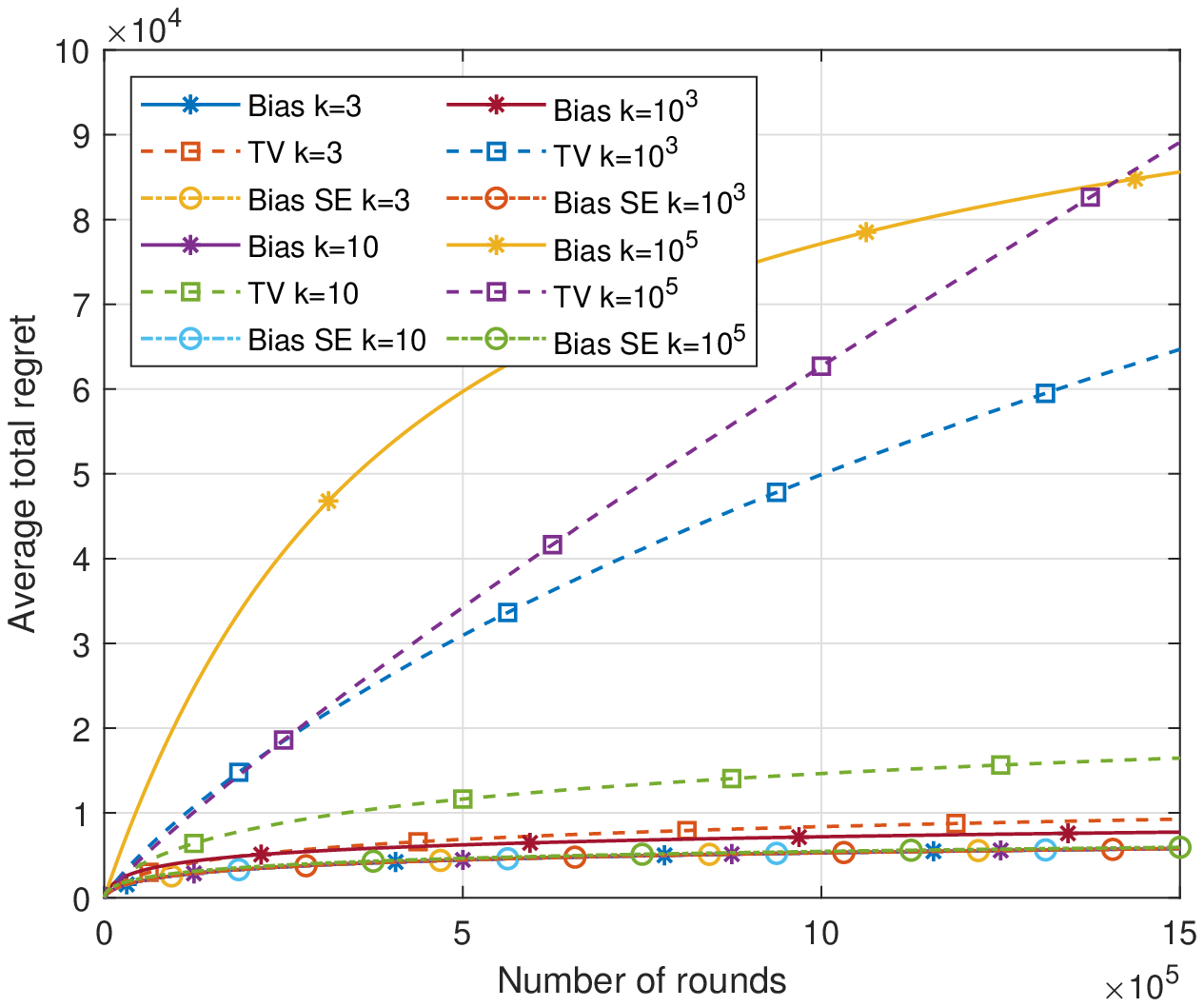}\label{fig:ternary_setup4}}\hspace{0.5cm} 
 \subfloat[][(b):\:\: Setup 5]{\includegraphics[scale=0.6]{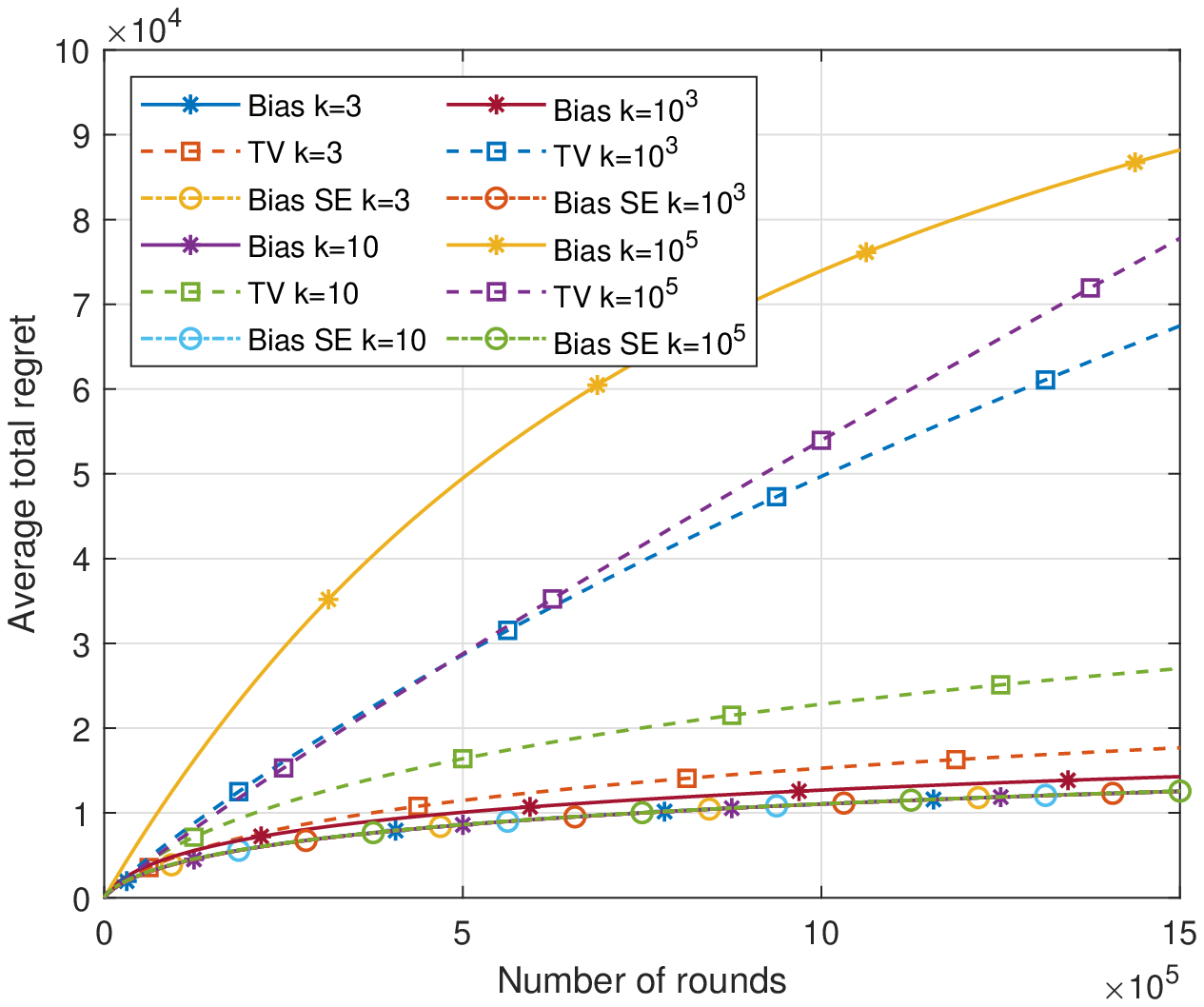}\label{fig:ternary_setup5}}\\
 \subfloat[][(c):\:\:Setup 6 ]{\includegraphics[scale=0.6]{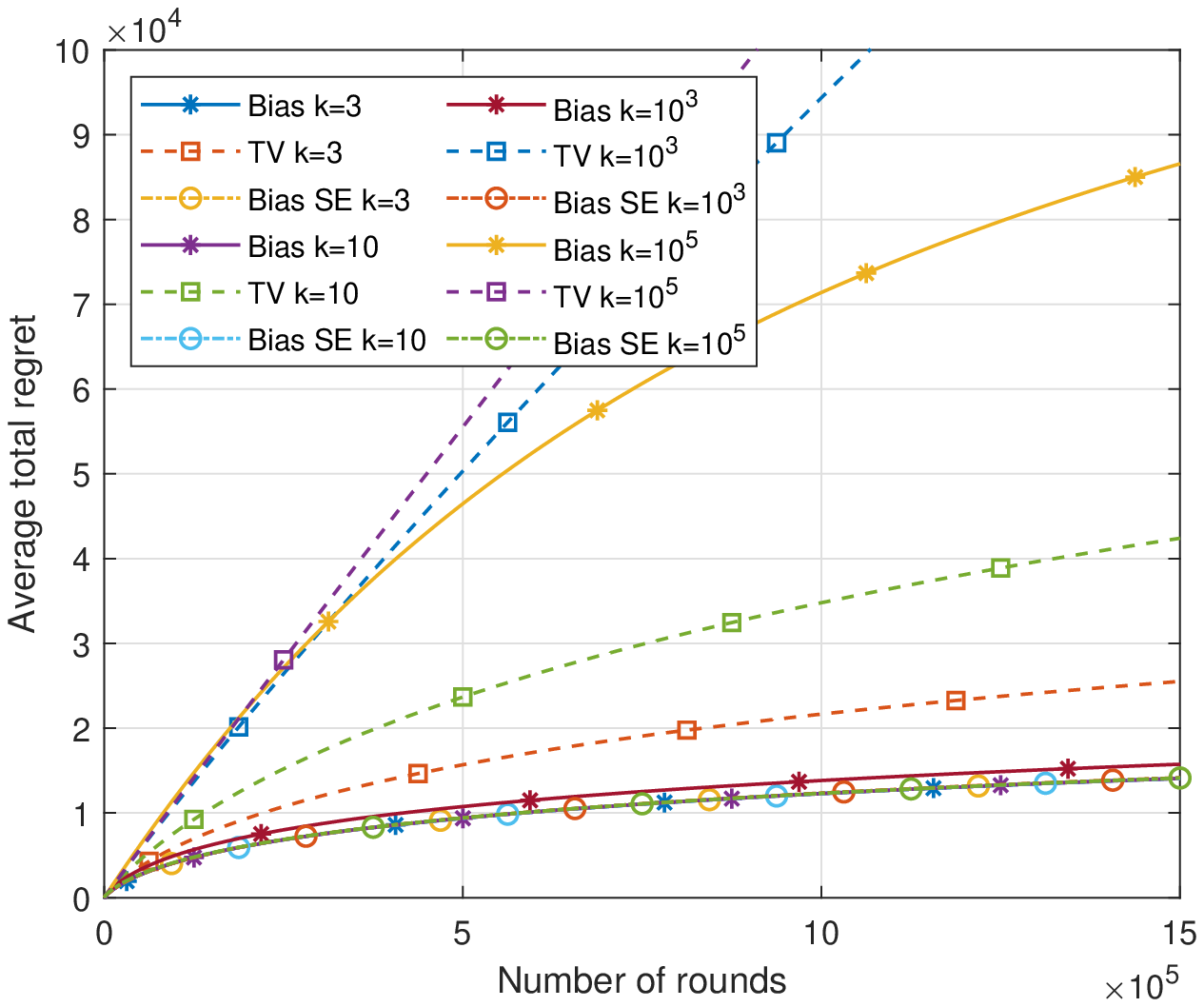}\label{fig:ternary_setup6}}
 \caption{Average total regret as a function of the number of rounds for a two-armed bandit model for arms with a ternary alphabet.}\label{fig:ternary_alphabet}
  \vspace{-0.1in}
\end{figure*}

In addition to Setups 1-6 that capture scenarios with small alphabet sizes, we consider a scenario with a large alphabet size, namely, one with $10^4$ symbols.  
For the first arm, the total probability of the first $10^4-1$ symbols is $5\times10^{-3}$, these probabilities are chosen randomly by generating $10^4-1$ IID uniform random numbers over the interval $[0,1]$ and then normalizing them to have a total probability of $5\times10^{-3}$; the probability of the last symbol is $1-5\times10^{-3}$.
For the second arm, the total probability of the first $10^4-1$ symbols is $10^{-4}$, these probabilities are chosen randomly similarly to the way they are generated in the first arm; the probability of the last symbol is $1-10^{-4}$.
Thus, it must be for all generated PFMs that $\zeta_1 \leq 0.01$ and $\zeta_2 \leq 2\times10^{-4}$.
We refer to this setup as Setup 7.  Figure \ref{fig:large_alphabet_setup_7} demonstrates the reduction in the average regret that the  general PMF-based confidence interval \eqref{eq: confidence bound TV} leads to in a non-asymptotic time regime with a large alphabet size and small total variance values. 

\begin{figure}
    \centering
    \includegraphics[scale=0.9]{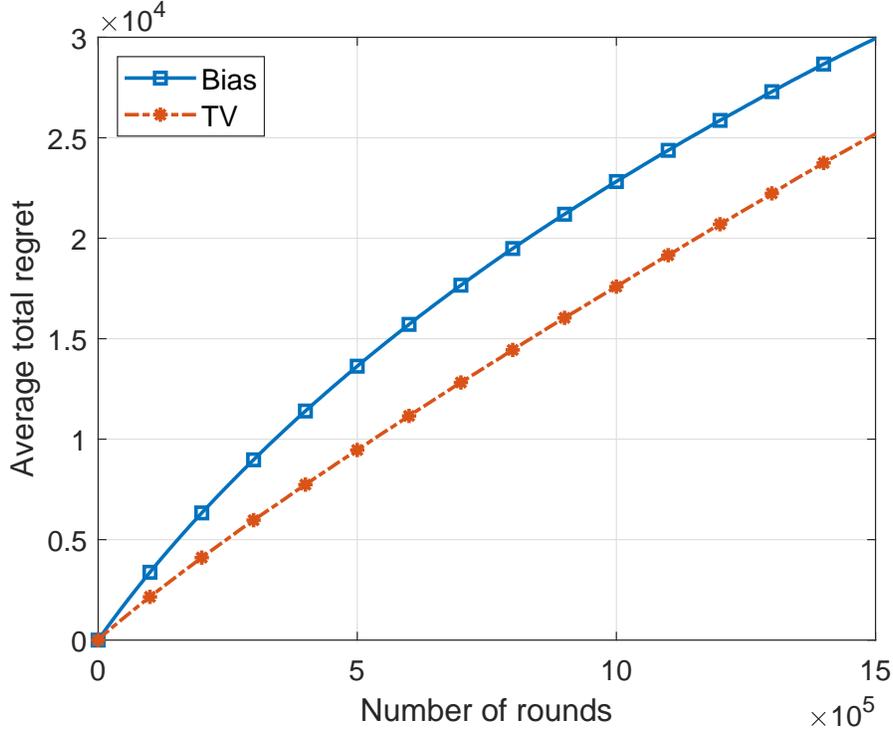}
    \caption{\centering Average total regret as a function of the number of rounds for a two-armed  \protect\linebreak bandit model for arms with an alphabet size of $10^{4}$ and a support size of $10^{4}$.\protect\linebreak\small{(Setup 7)}} 
    \label{fig:large_alphabet_setup_7}
\end{figure}

\section{Summary  \label{sec:Summary-and-Future}}
In this paper, we have introduced the IMAB problem, in which a player aims to maximize the information it observes from a set of possible sources, and concretely focused on the entropy functional. We have proposed a basic bias-corrected UCB algorithm, and showed that it is inefficient  when the entropy is very low compared to the log-alphabet size. For this regime, we have proposed a UCB algorithm that is based on data-dependent UCD, and which significantly improves upon the bias-corrected UCB algorithm. Additionally, its pseudo-regret bound agrees order-wise with Lai-Robbins impossibility lower bound in the binary alphabet case, for the gap-dependent regret bound, and almost agrees for the gap-independent regret bound. 
The first part of the paper assumes that the player knows in advance the alphabet of each arm. In practice, the alphabet of the arm may be very large compared to the support of the PMF, thus it is of interest to develop UCB algorithms for this case. To that end, we additionally developed a bias-corrected UCB algorithm with support estimation which implements a UCB approach for estimating both the support size and the resulted entropy of each arm. 

\section*{Acknowledgement}
The first author thanks Itay Ron for multiple discussions at the early stages of this research.

\appendices

\section{Inverting Polylogarithmic Functions over Linear Functions}\label{append:inverting_polylogarithmic_func}
\begin{lem}
\label{lem: inverting polylog over linear}
Let $r\in[1,2]$ be given.
There exists a constant $c_{r}>0$ so that if $x\geq c_{r}\log^{r}(1/y)/y=\Lambda_{r}(1/y)$
then $\frac{\log^{r}x}{x}\leq y$. This bound is orderwise tight as
$y\downarrow0$. Specifically, this holds for the constants $c_{1}=2$,
$c_{4/3}=3$ and $c_{2}=15$. 
\end{lem}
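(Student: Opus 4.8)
The plan is to substitute $s \dfn 1/y$ and recast the claim as follows: whenever $x \geq x_0 \dfn c_r\,s\log^r s = c_r\Lambda_r(s)$, the function $f(x)\dfn \log^r(x)/x$ satisfies $f(x)\leq 1/s$. The first step I would take is to record that $f$ is eventually decreasing: differentiating gives $f'(x) = \log^{r-1}(x)\,(r-\log x)/x^2$, so $f$ is nonincreasing on $[e^r,\infty)$. Consequently, provided $x_0\geq e^r$, it suffices to verify the bound at the single point $x=x_0$, because $f(x)\leq f(x_0)$ for every $x\geq x_0$. I will work in the regime where $y$ is small (equivalently $L\dfn\log s=\log(1/y)$ is large, say $L\geq r$, i.e. $y\leq e^{-r}$), which is the regime in which the lemma is invoked and in which the threshold is asymptotically tight; there the monotonicity hypothesis $x_0\geq e^r$ is automatic since $\log x_0\geq L\geq r$.

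Next I would reduce the pointwise bound $f(x_0)\leq 1/s$ to an elementary one-variable inequality. Since $\log x_0 = \log c_r + L + r\log L$, and since both sides are positive once $L\geq r\geq 1$, multiplying through by $x_0=c_r s L^r$ and taking $r$-th roots shows that $f(x_0)\leq 1/s$ is equivalent to $\log x_0 \leq c_r^{1/r}L$, i.e. to
\[
g(L)\dfn (c_r^{1/r}-1)\,L - r\log L - \log c_r \geq 0.
\]
The same computation also certifies $x_0\geq e^r$, so the two requirements collapse into the single inequality $g(L)\geq 0$.

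The core of the argument is then to show $g(L)\geq 0$ for all $L>0$ for each listed constant, which is a routine minimization: $g'(L)=(c_r^{1/r}-1)-r/L$ vanishes at $L^\star = r/(c_r^{1/r}-1)$, and I would simply evaluate $g(L^\star)$. For $c_1=2$ ($r=1$) one gets $g(L)=L-\log L-\log 2\geq 1-\log 2>0$; for $c_{4/3}=3$ and $c_2=15$ the minima are small but strictly positive (for $c_2=15$ we have $c_r^{1/r}=\sqrt{15}$ and $g(L^\star)\approx 0.017>0$). I expect this boundary case $c_2=15$ --- where the margin is only a couple of percent --- to be the main point requiring care, since a noticeably smaller constant would make $g(L^\star)$ negative and the implication would fail; the non-integer exponent $r=4/3$ is handled by the same calculus with no additional difficulty.

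Finally, for the order-tightness claim I would take $x=\epsilon\,\Lambda_r(1/y)$ with a small fixed $\epsilon>0$ and let $y\downarrow 0$. Then $\log x = L + r\log L + \log\epsilon = (1+o(1))\,L$, whence $f(x)=\log^r(x)/x = (1+o(1))\,L^r/(\epsilon s L^r)=(1+o(1))\,y/\epsilon$, which exceeds $y$ once $\epsilon<1$ and $y$ is small. Hence the threshold cannot be lowered below $\Theta(\Lambda_r(1/y))$, establishing that the bound is orderwise tight as $y\downarrow 0$.
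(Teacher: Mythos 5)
Your proof is correct and follows essentially the same route as the paper: both exploit that $x\mapsto\log^{r}(x)/x$ is decreasing on $[e^{r},\infty)$ and then verify the bound at the threshold point $x_{0}=c_{r}\Lambda_{r}(1/y)$. Where you differ is in how the constants are certified: your reduction of $f(x_{0})\le y$ to the exact one-variable inequality $g(L)=(c_{r}^{1/r}-1)L-r\log L-\log c_{r}\ge0$, followed by an explicit minimization at $L^{\star}=r/(c_{r}^{1/r}-1)$, is cleaner than the paper's treatment, which loosens $r\log\log(1/y)\le r\log(1/y)$ (a step that by itself would \emph{not} certify $c_{2}=15$, since the resulting bracket exceeds $1$ there) and then falls back on a numerical check of the unloosened inequality --- which is precisely your $g(L)\ge0$. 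Your computation of the tight margin $g(L^{\star})\approx0.017$ for $c_{2}=15$ is the right place to focus attention, and your tightness argument matches the paper's. One small caveat: by restricting to $y\le e^{-r}$ you leave the window $e^{-r}<y<(r/e)^{r}$ (nonempty for $r>1$, e.g.\ $y\in(0.135,0.541)$ when $r=2$) formally unaddressed, whereas the paper disposes of $y\ge(r/e)^{r}$ trivially via the maximum value of $f$ and covers the rest by substitution; since you in fact prove $g(L)\ge0$ for \emph{all} $L>0$ and $\log x_{0}=\log c_{r}+L+r\log L$ is increasing in $L$ and exceeds $r$ throughout that window for the listed constants, the patch is a one-line check, but it should be stated to cover the full range of $y$ for which the lemma is invoked.
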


\begin{proof}
On $\mathbb{R}_{+}$, the function $x\to\frac{\log^{r}x}{x}$ has a
unique maximum at $x=e^{r}$, and its maximal value is $\left(\frac{r}{e}\right)^{r}$
(which is less than $1$ for any $r\in[1,2]$). So, $\frac{\log^{r}x}{x}$
is monotonic strictly decreasing for $x\geq e^{r}$. If $y\geq\left(\frac{r}{e}\right)^{r}$
then $\frac{\log x}{x}\leq y$ for all $x\in\mathbb{R}_{+}$ and the claim of the lemma trivially holds. Otherwise, if $y\in[0,\left(\frac{r}{e}\right)^{r}]$
then setting $x=c_{r}\log^{r}(1/y)/y$ results
\begin{align}
\frac{\log^{r}x}{x} & =\frac{y\log^{r}\left(\frac{c_{r}\log^{r}(1/y)}{y}\right)}{c_{r}\log^{r}(1/y)} \\
 & \leq y\cdot\left[\frac{\log(c_{r})+r\log\log(1/y)+\log(1/y)}{c_{r}^{1/r}\cdot\log(1/y)}\right]^{r}\label{eq: condition for inverse of poly-log for numerical evaluation}\\
 & \leq y\cdot\left[\frac{\log(c_{r})+(r+1)\log(1/y)}{c_{r}^{1/r}\cdot\log(1/y)}\right]^{r} \\
 & \leq y\cdot\sup_{y'\in[0,(\frac{r}{e})^{r}]}\left[\frac{\log(c_{r})+(r+1)\log(1/y')}{c_{r}^{1/r}\cdot\log(1/y')}\right]^{r}\\
 & =y\cdot\left[\frac{\log(c_{r})}{c_{r}^{1/r}r\log(\frac{e}{r})}+\frac{(r+1)}{c_{r}^{1/r}}\right]^{r}. 
\end{align}
For any given power $r$, the term inside the square brackets
can be made arbitrarily small by taking $c_{r}\uparrow\infty$, and
specifically, can be made less than $1$, which results $\frac{\log^{r}x}{x}\leq y$
for the aforementioned choice of $x$, with some numerical constant
$c_{r}$. The minimal constant can be found by checking (\ref{eq: condition for inverse of poly-log for numerical evaluation})
numerically, and this leads to the constants in the claim of the lemma.
Finally, this value of $x$ is orderwise tight since if $x=o(\log^{r}(1/y))/y$
(where the asymptotic-$o$ notation is as $y\downarrow0$), then,
$\frac{\log x}{x}/y=\omega(1)$.
\end{proof}

\section{Proofs for Section \ref{sec: Bias corrected UCB}}\label{append:proofs sec Bias corrected UCB}
\begin{proof}[Proof of Proposition \ref{prop: bias corrected plug-in estimator}]
 For the upper confidence bound it holds that 
\begin{align}
 & \P\left(H(\hat{p}(n))-H(p)>B(n)+\epsilon\right)\nonumber \\
 & =\P\left(H(\hat{p}(n))-\E(H(\hat{p}(n)))>\epsilon+\underbrace{H(p)-\E(H(\hat{p}(n)))}_{\geq0}+\underbrace{B(n)}_{\geq0}\right)\\
 & \trre[\leq,a]\P\left(H(\hat{p}(n))-\E(H(\hat{p}(n)))>\epsilon\right) \\
 & \trre[\leq,b] \exp\left[-\frac{n}{2}\left(\frac{\epsilon}{\log(n)}\right)^{2}\right],\label{eq: upper coinfidence bound on plug-in entropy-1}
\end{align}
where $(a)$ follows from the bound on the bias in  (\ref{eq:bias_negative_lower_bound}), and
$(b)$ follows from \cite[p. 168]{Antos2001}. Similarly, for the lower confidence bound
it holds that
\begin{align}
 & \P\left(H(\hat{p}(n))-H(p)<-\epsilon-B(n)\right)\nonumber \\
 & =\P\left(H(\hat{p}(n))-\E(H(\hat{p}(n)))<-\epsilon+\underbrace{H(p)-\E(H(\hat{p}(n)))-B(n)}_{\leq0}\right) \\
 & \stackrel{(a)}{\leq}\P\left(H(\hat{p}(n))-\E(H(\hat{p}(n)))<-\epsilon\right)\\
 & \stackrel{(b)}{\leq}\exp\left[-\frac{n}{2}\left(\frac{\epsilon}{\log(n)}\right)^{2}\right].\label{eq: lower coinfidence bound on plug-in entropy-1}
\end{align}
Combining (\ref{eq: upper coinfidence bound on plug-in entropy-1})
and (\ref{eq: lower coinfidence bound on plug-in entropy-1}) shows
that 
\begin{align}
\P\left(\left|H(\hat{p}(n))-H(p)\right|>B(n)+\epsilon\right) \leq  2\exp\left[-\frac{n}{2}\cdot\frac{{\epsilon}^{2}}{\log^{2}(n)}\right]\label{eq: concentration of entropy with bias}
\end{align}
for every $n\geq2$ and $\epsilon>0$. Setting the RHS of (\ref{eq: concentration of entropy with bias})
to $\delta$ and simplifying leads to the claimed result.
\end{proof}
The proof of Theorem \ref{thm: UCB-Bias regret} requires the following
lemma, which lower bounds the number of samples required for a sufficiently
low upper confidence interval. 
\begin{lem}
\label{lem: large number of samples implies UCB smaller than gap}
Let an alphabet ${\cal Y}$ be given, let a gap $\Delta\in(0,\log|{\cal Y}|]$
be given, and let $\delta=t^{-\alpha}$. Then, for any $\beta\in(0,1)$,
if $n\geq\Gamma_{\text{\emph{bias}}}(\alpha,\beta,{\cal Y},\Delta,t)$ then
$\ucb_{\text{\emph{bias}}}(t^{-\alpha},n)\leq\Delta/2$.
\end{lem}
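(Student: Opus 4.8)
The plan is to decompose the target bound $\Delta/2$ into two pieces matched to the two summands of $\ucb_{\text{bias}}$, exploiting the free parameter $\beta\in(0,1)$: I will show that the first term inside the maximum defining $\Gamma_{\text{bias}}$ guarantees $B(n)\le\beta\Delta/2$, while the second term guarantees that the stochastic (square-root) term is at most $(1-\beta)\Delta/2$. Adding the two then yields $\ucb_{\text{bias}}(t^{-\alpha},n)\le\Delta/2$.

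First I would handle the bias term. Since $B(n)=\log\bigl(1+\tfrac{|{\cal Y}|-1}{n}\bigr)$ is decreasing in $n$, the inequality $B(n)\le\beta\Delta/2$ is, after exponentiating, equivalent to $\tfrac{|{\cal Y}|-1}{n}\le e^{\beta\Delta/2}-1$, i.e. to $n\ge\tfrac{|{\cal Y}|-1}{e^{\beta\Delta/2}-1}$. This is exactly the first term inside the maximum defining $\Gamma_{\text{bias}}$, so any $n\ge\Gamma_{\text{bias}}$ already forces $B(n)\le\beta\Delta/2$.

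Next I would handle the stochastic term. Writing $\log(2/\delta)=\log(2t^{\alpha})$ since $\delta=t^{-\alpha}$, the requirement $\sqrt{2\log^{2}(n)\log(2t^{\alpha})/n}\le(1-\beta)\Delta/2$ becomes, after squaring and rearranging, $\tfrac{\log^{2}(n)}{n}\le y$ with $y\dfn\tfrac{(1-\beta)^{2}\Delta^{2}}{8\log(2t^{\alpha})}$. Here I would invoke Lemma~\ref{lem: inverting polylog over linear} with $r=2$ and $c_{2}=15$, which states that $\tfrac{\log^{2}(n)}{n}\le y$ holds as soon as $n\ge15\,\Lambda_{2}(1/y)$. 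Since $1/y=\tfrac{8\log(2t^{\alpha})}{(1-\beta)^{2}\Delta^{2}}$, this threshold is precisely the second term inside the maximum defining $\Gamma_{\text{bias}}$.

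Combining the two, whenever $n\ge\Gamma_{\text{bias}}(\alpha,\beta,{\cal Y},\Delta,t)$ both thresholds hold simultaneously, and hence $\ucb_{\text{bias}}(t^{-\alpha},n)=B(n)+\sqrt{\tfrac{2\log^{2}(n)}{n}\log(2t^{\alpha})}\le\beta\Delta/2+(1-\beta)\Delta/2=\Delta/2$. The decomposition and the exponentiation step are routine; the genuine content is the inversion of the polylogarithmic-over-linear inequality $\tfrac{\log^{2}(n)}{n}\le y$, which is precisely the obstacle already isolated and resolved in Lemma~\ref{lem: inverting polylog over linear}. The only hypotheses to check are $r=2\in[1,2]$ and $y>0$, the latter being immediate from $\Delta>0$ and $\log(2t^{\alpha})>0$; no further restriction on $y$ is needed, as the lemma covers all positive $y$.
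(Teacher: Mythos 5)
Your proposal is correct and follows essentially the same route as the paper's own proof: split the target into $\beta\Delta/2$ for the bias term and $(1-\beta)\Delta/2$ for the square-root term, reduce the former to $n\ge\frac{|{\cal Y}|-1}{e^{\beta\Delta/2}-1}$ by exponentiating, and reduce the latter to $\frac{\log^{2}(n)}{n}\le\frac{(1-\beta)^{2}\Delta^{2}}{8\log(2t^{\alpha})}$, inverted via Lemma \ref{lem: inverting polylog over linear} with $c_{2}=15$. The two arguments are essentially identical.
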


\begin{proof}
We may assume that $n>1$. Let $\beta\in[0,1]$ be given. Then, $\ucb_{\text{bias}}(t^{-\alpha},n)\leq\Delta/2$
if both 
\begin{equation}
B(n)\leq\beta\cdot\Delta/2,\label{eq: smaller than gap first condition bias}
\end{equation}
and 
\begin{equation}
\sqrt{\frac{2\log^{2}(n)}{n}\log\left(\frac{2}{\delta}\right)}\leq(1-\beta)\cdot\Delta/2,\label{eq: smaller than gap second condition bias}
\end{equation}
holds. The first condition (\ref{eq: smaller than gap first condition bias})
is equivalent to 
\[
n\geq\frac{|{\cal Y}|-1}{e^{\beta\cdot\Delta/2}-1},
\]
and the second condition (\ref{eq: smaller than gap second condition bias})
is equivalent to 
\[
\frac{\log^{2}(n)}{n}\leq\frac{(1-\beta)^{2}\Delta^{2}}{8\log(2t^{\alpha})}.
\]
According to Lemma \ref{lem: inverting polylog over linear} this holds if 
\[
n\geq\frac{120\cdot\log(2t^{\alpha})\cdot\log^{2}\left(\frac{8\log(2t^{\alpha})}{(1-\beta)^{2}\Delta^{2}}\right)}{(1-\beta)^{2}\Delta^{2}}=15\cdot\Lambda_{2}\left(\frac{8\cdot\log(2t^{\alpha})}{(1-\beta)^{2}\Delta^{2}}\right)
\]
(recall the notation (\ref{eq: linear-times-polylog})). Simplifying
both expressions and optimizing over $\beta\in[0,1]$ concludes the proof. 
\end{proof}
With this result at hand, we may prove Theorem \ref{thm: UCB-Bias regret}. 
\begin{proof}[Proof of Thm. \ref{thm: UCB-Bias regret}]
The proof follows the analysis of \cite[Proof of Thm. 2.1]{bubeck2012regret},
with required modifications to entropy rewards structure. At round
$t$, the player chooses a suboptimal $i$ arm  with $\Delta_{i}>0$ if 
\begin{multline}
\hat{H}(\boldsymbol{X}_{i^{*}}(t-1),N_{i^{*}}(t-1)))+\ucb_{\text{bias}}(\delta_{\alpha}(t),N_{i^{*}}(t-1))\\
\leq\hat{H}(\boldsymbol{X}_{i}(t-1),N_{i}(t-1)))+\ucb_{\text{bias}}(\delta_{\alpha}(t),N_{i}(t-1)).
\end{multline}
For this to occur at least one of the following events must occur
too (sufficient conditions): \renewcommand{\labelenumi}{\Roman{enumi}.}
\renewcommand{\theenumi}{\Roman{enumi}}
\begin{enumerate}
\item \label{item:first_error} The entropy of the best arm is significantly
underestimated: 
\[
\hat{H}(\boldsymbol{X}_{i^{*}}(t-1),N_{i^{*}}(t-1)))+\ucb_{\text{bias}}(\delta_{\alpha}(t),N_{i^{*}}(t-1))\leq H_{i^{*}}.
\]
\item \label{item:second_error} The entropy of arm $i$ is significantly
overestimated:
\[
\hat{H}(\boldsymbol{X}_{i}(t-1),N_{i}(t-1)))>H_{i}+\ucb_{\text{bias}}(\delta_{\alpha}(t),N_{i}(t-1)).
\]
\item \label{item:third_error} The upper confidence interval is significantly
larger than the gap
\[
\ucb_{\text{bias}}(\delta_{\alpha}(t),N_{i}(t-1))>\Delta_{i}/2.
\]
\end{enumerate}
If all three events \ref{item:first_error}-\ref{item:third_error}
are false, then 
\begin{align}
 & \hat{H}(\boldsymbol{X}_{i^{*}}(t-1),N_{i^{*}}(t-1)))+\ucb_{\text{bias}}(\delta_{\alpha}(t),N_{i^{*}}(t-1))\nonumber \\
 & >H_{i^{*}}=H_{i}+\Delta_{i}\\
 & \geq H_{i}+2\cdot\ucb_{\text{bias}}(\delta_{\alpha}(t),N_{i}(t-1))\\
 & \geq\hat{H}(\boldsymbol{X}_{i}(t-1),N_{i}(t-1)))+\ucb_{\text{bias}}(\delta_{\alpha}(t),N_{i}(t-1)),
\end{align}
which contradicts the assumption that Algorithm \ref{alg:A-UCB-general}
chooses $I_{t}=i$ at the $t$th round. 

Next, we upper bound the expected pseudo-regret (\ref{eq: expected pseudo-regret})
of Algorithm \ref{alg:A-UCB-general} with the entropy estimator and
confidence bound stated in the theorem. To that end, we upper bound
the expected number of times a sub-optimal arm $i$ is played, i.e.,
$\E(N_{i}(t))$ as follows.  Note that if $N_{i}(t)\geq\Gamma_{\text{bias}}(\alpha,\beta,{\cal X}_{i},\Delta_{i},t)$
then event \ref{item:third_error} does not occur. so, 
\begin{align}
\E(N_{i}(t)) & =\E\left(\sum_{\tau=1}^{t}\I[I(\tau)=i]\right)\nonumber \\
 & \leq\Gamma_{\text{bias}}(\alpha,\beta,{\cal X}_{i},\Delta_{i},t)+\sum_{\tau=\Gamma_{\text{bias}}(\alpha,\beta,{\cal X}_{i},\Delta_{i},t)+1}^{t}\E\left(\I[\text{\ref{item:first_error} or \ref{item:second_error} is true a round }\tau]\right). \\
 & \leq\Gamma_{\text{bias}}(\alpha,\beta,{\cal X}_{i},\Delta_{i},t)+\sum_{\tau=1}^{t}\left[\P\left(\text{\ref{item:first_error} is true at round }\tau\right)+\P\left(\text{\ref{item:second_error} is true at round }\tau\right)\right].\label{eq: regret analysis - upper bound on the expected number of plays}
\end{align}
For any $\tau\leq t$, the first probability in (\ref{eq: regret analysis - upper bound on the expected number of plays})
is upper bounded as
\begin{align}
 & \P\left(\text{\ref{item:first_error} is true at round }\tau\right)\nonumber\\
 & \trre[\leq,a]\sum_{n=1}^{\tau}\P\left(\hat{H}(\{X_{i}(\ell)\}_{\ell\in[n]},n))+\ucb_{\text{bias}}(\delta_{\alpha}(\tau)),n)\leq H_{i}\right)\\
 & \trre[\leq,b]\tau\cdot\delta_{\alpha}(\tau)=\frac{1}{\tau^{\alpha-1}},
\end{align}
where $(a)$ follows from the union bound, and $(b)$ from the definition
of the upper confidence deviation $\ucb(\delta,n)$. The second probability
in (\ref{eq: regret analysis - upper bound on the expected number of plays})
is similarly upper bounded. Inserting these bounds back to the sum
in (\ref{eq: regret analysis - upper bound on the expected number of plays})
it then follows that 
\begin{align}
 & \sum_{\tau=1}^{t}\left[\P\left(\text{\ref{item:first_error} is true at round }\tau\right)+\P\left(\text{\ref{item:second_error} is true at round }\tau\right)\right]\nonumber \\
 & \leq2\sum_{\tau=1}^{t}\frac{1}{\tau^{\alpha-1}}\leq2\sum_{\tau=1}^{\infty}\frac{1}{\tau^{\alpha-1}} \\
 & \leq2\left[1+\int_{1}^{\infty}\frac{1}{\tau^{\alpha-1}}\d \tau\right]=\frac{2(\alpha-1)}{\alpha-2}.
\end{align}
Substituting the upper bounds in the last two displays back to (\ref{eq: regret analysis - upper bound on the expected number of plays}),
and using the resulting bound in $R(t)=\sum_{i\in[K]:\Delta_{i}>0}\E(N_{i}(t))\Delta_{i}$
then concludes the proof. \renewcommand{\labelenumi}{\arabic{enumi}.}
\renewcommand{\theenumi}{\arabic{enumi}}
\end{proof}

\section{Proofs for Section \ref{subsec:The-Bernoulli-Case}}
\label{append:proofs subsec The-Bernoulli-Case}

The proof of Proposition \ref{prop: Bernoulii plug-in estimator}
is based on a standard concentration result on the empirical mean
of a Bernoulli source.
\begin{lem}
\label{lem: empirical mean concentration Ber}In the setting of Proposition
\ref{prop: Bernoulii plug-in estimator}, each of the following events
holds with probability larger than $1-\delta$: 
\begin{equation}
\left|p-\hat{p}(n)\right|\leq\sqrt{\frac{3p\log(\frac{2}{\delta})}{n}},\label{eq: absolute deviation of empirical mean}
\end{equation}
\begin{equation}
p\leq2\hat{p}(n)+\frac{12\log(\frac{1}{\delta})}{n},\label{eq: upper scale bound on Bernoulli parameter}
\end{equation}
and 
\begin{equation}
\hat{p}(n)\leq2p+\frac{3\log(\frac{1}{\delta})}{n}.\label{eq: upper scale bound on Bernoulli parameter estimator}
\end{equation}
\end{lem}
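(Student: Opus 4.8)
The plan is to derive all three inequalities from a single concentration tool, namely Bernstein's inequality applied to the IID sum $n\hat{p}(n)=\sum_{\ell\in[n]}\I\{Y_\ell=1\}$. Writing $Z_\ell:=\I\{Y_\ell=1\}-p$, the summands are mean-zero and bounded, $|Z_\ell|\le1$, with variance $\V(Z_\ell)=p(1-p)\le p$. Bernstein's inequality then bounds, for every deviation $s>0$, each of the one-sided tail probabilities $\P(\hat{p}(n)-p\ge s)$ and $\P(p-\hat{p}(n)\ge s)$ by
\[
\exp\left(-\frac{ns^2}{2p(1-p)+\tfrac{2}{3}s}\right)\le\exp\left(-\frac{ns^2}{2p+\tfrac{2}{3}s}\right).
\]
Each of the three claims is then obtained by choosing $s$ so that the target event is exactly the complement of a one- or two-sided deviation, and checking that the resulting Bernstein exponent dominates $\log(1/\delta)$ (or $\log(2/\delta)$).

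For the sharp two-sided bound \eqref{eq: absolute deviation of empirical mean} I would take $s=\sqrt{3p\log(2/\delta)/n}$ and union-bound the two tails, incurring the factor $2$. With $L':=\log(2/\delta)$ one has $ns^2=3pL'$, so the failure probability is at most $\delta$ as soon as $3pL'/(2p+\tfrac23 s)\ge L'$, i.e.\ $3p\ge 2p+\tfrac23 s$, i.e.\ $s\le\tfrac32 p$. Squaring turns this into the mild requirement $np\ge\tfrac43\log(2/\delta)$, which is precisely the regime in which the variance term $2p$ dominates the linear correction in Bernstein's denominator and the $\sqrt{p}$-type deviation is the correct scale.

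For the two ``within-a-factor-two'' scale bounds \eqref{eq: upper scale bound on Bernoulli parameter} and \eqref{eq: upper scale bound on Bernoulli parameter estimator} I would proceed unconditionally. For \eqref{eq: upper scale bound on Bernoulli parameter estimator}, note that $\hat{p}(n)>2p+\tfrac{3\log(1/\delta)}{n}$ is exactly the event $\hat{p}(n)-p>s$ with $s=p+\tfrac{3\log(1/\delta)}{n}$; substituting into the upper-tail bound and requiring the exponent to exceed $\log(1/\delta)$ reduces, after clearing denominators, to $np^2+\tfrac{10}{3}p\log(1/\delta)+\tfrac{7}{n}\log^2(1/\delta)\ge0$, which holds for every $p$ and $n$. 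Symmetrically, $p>2\hat{p}(n)+\tfrac{12\log(1/\delta)}{n}$ is the lower-tail event $p-\hat{p}(n)>s$ with $s=\tfrac{p}{2}+\tfrac{6\log(1/\delta)}{n}$, and the exponent condition collapses to $\tfrac{1}{4}np^2+\tfrac{11}{3}p\log(1/\delta)+\tfrac{32}{n}\log^2(1/\delta)\ge0$, again always true. The additive slack proportional to $\log(1/\delta)/n$ is exactly what forces these quadratics to be nonnegative for all $p$, so no lower bound on $np$ is needed (and when the relevant target is negative the bad event is empty, giving the inequality deterministically).

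I expect the only genuine subtlety to lie in the validity regime of the sharp bound \eqref{eq: absolute deviation of empirical mean}: because it carries no additive slack, the variance term in Bernstein's denominator must dominate the linear term, which forces $np\gtrsim\log(1/\delta)$ (equivalently, a relative deviation of order $O(1)$). For vanishingly small $p$ this fails and the inequality ceases to be meaningful, so it should be read as holding in this regime, as is the case wherever it is invoked in the proof of Proposition \ref{prop: Bernoulii plug-in estimator}. Matching the precise constants $3$ and $12$ requires no optimization beyond the choices of $s$ above and verifying that the displayed quadratics are manifestly nonnegative.
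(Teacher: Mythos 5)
Your proposal is correct and reaches all three inequalities, but it is organized differently from the paper. The paper's proof relies on the relative (multiplicative) Chernoff bound $\P[|\hat{p}(n)-p|\geq\xi p]\leq 2e^{-\xi^{2}pn/3}$ for $\xi\in[0,1]$: it obtains \eqref{eq: absolute deviation of empirical mean} by setting $\xi=\sqrt{3\log(2/\delta)/(pn)}$, and it obtains \eqref{eq: upper scale bound on Bernoulli parameter} and \eqref{eq: upper scale bound on Bernoulli parameter estimator} by a case split on whether $p$ exceeds $12\log(1/\delta)/n$ (resp.\ $3\log(1/\delta)/n$), which yields a bound of the form $p\leq 2\hat{p}(n)\vee\frac{12\log(1/\delta)}{n}$ that is then loosened to the additive form. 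You instead run everything through Bernstein's inequality and, for the two scale bounds, verify directly that the required exponent inequality reduces to a quadratic in $p$ that is nonnegative for all $p$ and $n$; your discriminant computations ($np^{2}+\frac{10}{3}pL+\frac{7}{n}L^{2}\geq0$ and $\frac{1}{4}np^{2}+\frac{11}{3}pL+\frac{32}{n}L^{2}\geq0$) check out, so you get \eqref{eq: upper scale bound on Bernoulli parameter} and \eqref{eq: upper scale bound on Bernoulli parameter estimator} unconditionally and without the max-then-loosen step --- arguably a cleaner argument for these two claims. For \eqref{eq: absolute deviation of empirical mean} your derivation needs $np\geq\frac{4}{3}\log(2/\delta)$ so that the variance term dominates Bernstein's denominator; you flag this honestly, and you should note that the paper's own proof carries essentially the same implicit restriction, since its choice $\xi=\sqrt{3\log(2/\delta)/(pn)}$ must satisfy $\xi\leq1$, i.e.\ $np\geq3\log(2/\delta)$, a condition the paper does not verify either. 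So on this point your treatment is no weaker than the paper's, and your explicit acknowledgement of the regime of validity is a genuine improvement in transparency, though strictly speaking neither argument establishes \eqref{eq: absolute deviation of empirical mean} for arbitrarily small $p$.
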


\begin{proof}
We will use the relative (multiplicative) Chernoff bound multiple
times. This bound states that \cite[Thm. 4.4]{mitzenmacher2017probability} 
\begin{equation}
\P\left[\left|\hat{p}(n)-p\right|\geq\xi p\right]=\P\left[\hat{p}(n)-p\geq\xi p\right]+\P\left[\hat{p}(n)-p\leq-\xi p\right]\leq2e^{-\frac{\xi^{2}pn}{3}},\label{eq: relative chernoff}
\end{equation}
for any $\xi\in[0,1]$ (and it holds for the pair of one-sided deviations
each without the $2$ pre-factor). Setting $\xi=\sqrt{\frac{3\log(\frac{2}{\delta})}{pn}}$
in (\ref{eq: relative chernoff}) immediately leads to (\ref{eq: absolute deviation of empirical mean}).
Next, if $p>\frac{12\log(\frac{1}{\delta})}{n}$ then 
\[
\P\left[p\geq2\hat{p}(n)\right]=\P\left[\hat{p}(n)-p\leq-\frac{1}{2}p\right]\trre[\leq,a]e^{-\frac{pn}{12}}\trre[\leq,b]\delta,
\]
where $(a)$ is by setting $\xi=1/2$ in the one-sided version of
(\ref{eq: relative chernoff}), and $(b)$ utilizes the assumption
on $p$. Thus, with probability larger than $1-\delta$ it holds that
\[
p\leq2\hat{p}(n)\vee\frac{12\log(\frac{1}{\delta})}{n},
\]
which can be loosened to (\ref{eq: upper scale bound on Bernoulli parameter}).
Finally, If $p>\frac{3\log(\frac{1}{\delta})}{n}$ then
\[
\P\left[\hat{p}(n)>2p\right]=\P\left[\hat{p}(n)-p\geq\xi p\right]\trre[\leq,a]e^{-\frac{pn}{3}}\trre[\leq,b]\delta,
\]
where $(a)$ is by setting $\xi=1$ in the one-sided version of (\ref{eq: relative chernoff}),
and $(b)$ utilizes the assumption on $p$. Thus, with probability
larger than $1-\delta$ it holds that 
\[
\hat{p}(n)\leq2p\vee\frac{3\log(\frac{1}{\delta})}{n},
\]
which can be loosened to (\ref{eq: upper scale bound on Bernoulli parameter estimator}).
\end{proof}
The concentration of the empirical probability of the source then
leads to a confidence bound on the entropy, as next shown in the proof
of Proposition \ref{prop: Bernoulii plug-in estimator}. 
\begin{proof}[Proof of Proposition \ref{prop: Bernoulii plug-in estimator}]
If $\dtv(p,\hat{p}(n))\leq\frac{1}{2}$ then \cite[Lemma 2.7]{csiszar2011information}
implies that
\begin{align}
\left|h_{b}(\hat{p}(n))-h_{b}(p)\right| & \leq\sqrt{\frac{12p\log(\frac{2}{\delta})}{n}}\log\left(\sqrt{\frac{4n}{12p\log(\frac{2}{\delta})}}\right)\\
 & =-2\cdot\Lambda_{1}\left(\frac{\dtv(p,\hat{p}(n))}{2}\right),
\end{align}
and we note that $-\Lambda_{1}(s)$ is monotonic increasing for $s\in[0,e^{-1}]$.
For a pair of Bernoulli distributions $p$ and $q$ it holds that
\[
\dtv(p,q)=2|p(1)-q(1)|,
\]
and so by (\ref{eq: absolute deviation of empirical mean}) and (\ref{eq: upper scale bound on Bernoulli parameter})
from Lemma \ref{lem: empirical mean concentration Ber} it holds that
\begin{equation}
\dtv(p,q)\leq\sqrt{\frac{12p\log(\frac{2}{\delta})}{n}},\label{eq: high probability upper bound on TV Ber}
\end{equation}
and 
\begin{equation}
p\leq2\hat{p}(n)+\frac{12\log(\frac{1}{\delta})}{n},\label{eq: relation among Ber prob}
\end{equation}
simultaneously hold with probability larger than $1-2\delta$. To
be in the monotonic increasing regime of $-\Lambda_{1}(s)$ for any
$\hat{p}(n)$, we require that the upper bound on the total variation
distance in (\ref{eq: high probability upper bound on TV Ber}), when
substituted with the upper bound on $p$ in (\ref{eq: relation among Ber prob}),
is less than $e^{-1}$, to wit 
\[
\sqrt{\frac{12\left[2\hat{p}(n)+\frac{12\log(\frac{1}{\delta})}{n}\right]\log(\frac{2}{\delta})}{n}}\leq e^{-1}.
\]
This can be easily seen to be satisfied by the assumption $n\geq200\cdot\log(\frac{2}{\delta})$.
Now, if $2\hat{p}(n)\geq\frac{12\log(\frac{1}{\delta})}{n}$ then
$p\leq4\hat{p}(n)$ and so by the assumption of $n$ and the resulting
monotonicity,
\[
\left|h_{b}(\hat{p}(n))-h_{b}(p)\right|\leq\sqrt{\frac{12\hat{p}(n)\log(\frac{2}{\delta})}{n}}\log\left(\frac{n}{\hat{p}(n)\log(\frac{2}{\delta})}\right)
\]
(after slightly deteriorating the constants to obtain a succinct expression).
Otherwise, if $\frac{12\log(\frac{1}{\delta})}{n}\geq2\hat{p}(n)$
then $p\leq\frac{24\log(\frac{1}{\delta})}{n}$ and so by the assumption
of $n$ and the resulting monotonicity,
\[
\left|h_{b}(\hat{p}(n))-h_{b}(p)\right|\leq\frac{18\log(\frac{2}{\delta})\log(n)}{n}
\]
(after, again, slightly deteriorating the constants). To account for
both cases, we sum the two deviation terms.
Finally, to obtain (\ref{eq: confidence bound Ber}), we replace $\delta$
with $2\delta$.
\end{proof}
Next, we turn to the proof of Theorem \ref{thm: UCB-Bernoulli regret},
which is based on a lemma analogous to Lemma \ref{lem: large number of samples implies UCB smaller than gap}.
To this end, we further denote a simplified version of $\Gamma_{\text{ber}}(\cdot)$
from (\ref{eq: u ber}), defined as
\begin{equation}
\tilde{\Gamma}_{\text{ber}}(\alpha,\beta,q,\Delta,t)\dfn\max\left\{ 2\cdot\Lambda_{1}\left(\frac{36\alpha\log(t)}{(1-\beta)\Delta}\right),\frac{960q\alpha\log(t)}{\beta^{2}\cdot\Delta^{2}}\cdot\log^{2}\left(\frac{48}{\beta^{2}\cdot\Delta^{2}}\right)\right\} .\label{eq: u BER simplified}
\end{equation}

\begin{lem}
\label{lem: large number of samples implies UCB smaller than gap Bernoulli}With
$\delta\equiv\delta_{\alpha}(t)=4t^{-\alpha}$, $\:\beta\in(0,1)$ and
$\alpha>2$, if $n\geq\tilde{\Gamma}_{\text{\emph{ber}}}(\alpha,\beta,q,\Delta,t)$
then $\ucb_{\text{\emph{ber}}}(q,\delta_{\alpha}(t),n)\leq\Delta/2$ where
$\ucb_{\text{\emph{ber}}}(\cdot)$ is as defined in (\ref{eq: confidence bound Ber}). 
\end{lem}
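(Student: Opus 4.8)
The plan is to mirror the proof of Lemma~\ref{lem: large number of samples implies UCB smaller than gap}: split the target bound $\Delta/2$ across the two summands of $\ucb_{\text{ber}}$ using the parameter $\beta$, control each summand separately, and verify that the two resulting sample-size thresholds are exactly the two terms whose maximum defines $\tilde{\Gamma}_{\text{ber}}$ in \eqref{eq: u BER simplified}. Concretely, write $\ucb_{\text{ber}}(q,\delta,n)=T_1+T_2$ with $T_1=\sqrt{12q\log(6/\delta)/n}\,\log(n/(q\log(6/\delta)))$ and $T_2=18\log(6/\delta)\log(n)/n$; it then suffices to show $T_1\leq\beta\Delta/2$ and $T_2\leq(1-\beta)\Delta/2$. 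Throughout, the confidence level enters only through $\log(6/\delta)$, which for the prescribed schedule $\delta_\alpha(t)$ equals $\alpha\log(t)$, and this is what makes the thresholds come out in terms of $\alpha\log t$.

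I would handle the simpler summand $T_2$ first. The inequality $T_2\leq(1-\beta)\Delta/2$ rearranges to $\log(n)/n\leq(1-\beta)\Delta/(36\log(6/\delta))$, which is precisely of the form treated by Lemma~\ref{lem: inverting polylog over linear} with $r=1$ and $c_1=2$. Invoking that lemma and substituting $\log(6/\delta)=\alpha\log(t)$ shows the inequality holds as soon as $n\geq 2\Lambda_1\!\left(36\alpha\log(t)/((1-\beta)\Delta)\right)$, which is exactly the first term inside the maximum in \eqref{eq: u BER simplified}.

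The main summand $T_1$ is where the work lies. Requiring $T_1\leq\beta\Delta/2$ and squaring gives $12q\log(6/\delta)\log^2(n/(q\log(6/\delta)))/n\leq\beta^2\Delta^2/4$. The key step is the change of variable $m\dfn n/(q\log(6/\delta))$, under which the left-hand side collapses to $12\log^2(m)/m$, so the requirement becomes $\log^2(m)/m\leq\beta^2\Delta^2/48$. This is again of the form of Lemma~\ref{lem: inverting polylog over linear}, now with $r=2$ and $c_2=15$, so it holds once $m\geq 15\Lambda_2(48/(\beta^2\Delta^2))$. Translating back via $n=q\log(6/\delta)\cdot m=q\alpha\log(t)\cdot m$ yields the threshold $n\geq(720/(\beta^2\Delta^2))\,q\alpha\log(t)\log^2(48/(\beta^2\Delta^2))$, and since $720\leq 960$ this is implied by $n$ exceeding the second term in \eqref{eq: u BER simplified}. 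Taking the maximum of the two thresholds gives $\tilde{\Gamma}_{\text{ber}}$ and completes the argument.

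I expect the change of variable $m=n/(q\log(6/\delta))$ in the treatment of $T_1$ to be the step requiring most care: one must verify that the $n$-dependent logarithm inside $T_1$ recombines cleanly with the $1/\sqrt{n}$ prefactor into the single-variable expression $\log^2(m)/m$, that $m>1$ (equivalently $n>q\log(6/\delta)$) throughout the regime $n\geq\tilde{\Gamma}_{\text{ber}}$ so that the logarithm is positive, and that the eventual monotonicity of $\log^2(m)/m$ (already established inside Lemma~\ref{lem: inverting polylog over linear}) legitimately transfers the threshold on $m$ back to a threshold on $n$. The remaining constant-chasing is routine.
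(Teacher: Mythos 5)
Your proposal is correct and follows essentially the same route as the paper's proof: the same $\beta$-split of $\Delta/2$ across the two summands of $\ucb_{\text{ber}}$, the same reduction of the first summand to $\log^{2}(m)/m\leq\beta^{2}\Delta^{2}/48$ via the substitution $m=n/(q\log(6/\delta))$, and the same two applications of Lemma \ref{lem: inverting polylog over linear} with $c_{2}=15$ and $c_{1}=2$ (your constant $720$ is indeed dominated by the paper's $960$). The only cosmetic difference is that you make the change of variable and the positivity/monotonicity caveats explicit, where the paper handles them by assuming $n\geq e$ up front.
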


\begin{proof}
We may assume that $n\geq e$;  this can easily be achieved by playing each arm for three rounds at the beginning of Algorithm \ref{alg:A-UCB-general}. Let $\beta\in[0,1]$ be given. Then,
$\ucb_{\text{ber}}(q,4t^{-\alpha},n)\leq\Delta/2$ if both 
\[
\sqrt{\frac{12q\alpha\log(t)}{n}}\log\left(\frac{n}{q\alpha\log(t)}\right)\leq\beta\cdot\Delta/2,
\]
and 
\[
\frac{18\alpha\log(t)\log(n)}{n}\leq(1-\beta)\cdot\Delta/2,
\]
hold. The first condition is satisfied if 
\[
\frac{\log^{2}\left(\frac{n}{q\log(\frac{6}{\delta})}\right)}{\frac{n}{q\log(\frac{6}{\delta})}}\leq\frac{\beta^{2}\cdot\Delta^{2}}{48},
\]
for which Lemma \ref{lem: inverting polylog over linear} implies
that this condition is satisfied if
\[
n\geq\frac{960q\alpha\log(t)}{\beta^{2}\cdot\Delta^{2}}\cdot\log^{2}\left(\frac{48}{\beta^{2}\cdot\Delta^{2}}\right).
\]
The second condition is satisfied if 
\[
\frac{\log(n)}{n}\leq\frac{(1-\beta)\Delta}{36\alpha\log(t)},
\]
for which Lemma \ref{lem: inverting polylog over linear} implies
that this condition is satisfied if
\[
n\geq\frac{72\alpha\log(t)}{(1-\beta)\Delta}\log\left(\frac{36\alpha\log(t)}{(1-\beta)\Delta}\right)=2\cdot\Lambda_{1}\left(\frac{36\alpha\log(t)}{(1-\beta)\Delta}\right)
\]
(recall the notation (\ref{eq: linear-times-polylog})). The claim
of the lemma then follows from the definition of $\tilde{\Gamma}_{\text{ber}}(\cdot)$
in (\ref{eq: u BER simplified}). 
\end{proof}
We may now prove Theorem \ref{thm: UCB-Bernoulli regret}. 
\begin{proof}[Proof of Theorem \ref{thm: UCB-Bernoulli regret}]
The proof is similar to the proof of Theorem \ref{thm: UCB-Bias regret},
and so we only highlight the main differences. In what follows it
will be convenient to interchangeably use both $\ucb(\boldsymbol{Y},\delta,n)$
and $\ucb_{\text{ber}}(\hat{p}(\boldsymbol{Y},n),\delta,n)$ to denote
the (same) upper confidence bound used by the algorithm. At round
$t$, the player chooses a sub-optimal $i$ arm if $\Delta_{i}>0$
and 
\begin{multline}
\hat{H}(\boldsymbol{X}_{i^{*}}(t-1),N_{i^{*}}(t-1)))+\ucb(\boldsymbol{X}_{i^{*}}(t-1),\delta_{\alpha}(t),N_{i^{*}}(t-1))\\
\leq\hat{H}(\boldsymbol{X}_{i}(t-1),N_{i}(t-1)))+\ucb(\boldsymbol{X}_{i}(t-1),\delta_{\alpha}(t),N_{i}(t-1)).
\end{multline}
For this to occur at least one of the following events must occur
too (sufficient conditions): 
\renewcommand{\labelenumi}{\Roman{enumi}'.}
\renewcommand{\theenumi}{\Roman{enumi}'}
\begin{enumerate}
\item \label{item:first_error Ber}Either the entropy of the best arm is
significantly underestimated
\[
\hat{H}(\boldsymbol{X}_{i^{*}}(t-1),N_{i^{*}}(t-1)))+\ucb(\boldsymbol{X}_{i^{*}}(t-1),\delta_{\alpha}(t),N_{i^{*}}(t-1))\leq H_{i^{*}},
\]
or 
\[
\hat{p}\left(\boldsymbol{X}_{i^{*}}(t-1),N_{i^{*}}(t-1)\right)-\frac{1}{2}p_{i^{*}}\leq-\frac{6\log(1/\delta_{\alpha}(t))}{N_{i^{*}}(t-1)}.
\]
\item \label{item:second_error Ber}Either the entropy of arm $i$ is significantly
overestimated
\[
\hat{H}(\boldsymbol{X}_{i}(t-1),N_{i}(t-1)))>H_{i}+\ucb(\boldsymbol{X}_{i}(t-1),\delta_{\alpha}(t),N_{i}(t-1)),
\]
or 
\[
\hat{p}\left(\boldsymbol{X}_{i}(t-1),N_{i}(t-1)\right)-2p_{i}\geq\frac{3\log(1/\delta_{\alpha}(t))}{N_{i}(t-1)}.
\]
\item \label{item:third_error Ber}The upper confidence interval, which
is based on an overestimation of $\hat{p}(\boldsymbol{X}_{i}(t-1),N_{i}(t-1))$
is significantly larger than the gap
\[
\ucb_{\text{ber}}\left(2p_{i}+\frac{3\log(1/\delta_{\alpha}(t))}{N_{i}(t-1)},\delta_{\alpha}(t),N_{i}(t-1)\right)>\frac{\Delta_{i}}{2},
\]
or 
\[
N_{i}(t-1)\leq200\alpha\log(t).
\]
\end{enumerate}
As in the proof of Theorem \ref{thm: UCB-Bias regret}, if all three
events \ref{item:first_error Ber}-\ref{item:third_error Ber} are
false, then 
\begin{align}
 & \hat{H}(\boldsymbol{X}_{i^{*}}(t-1),N_{i^{*}}(t-1)))+\ucb(\boldsymbol{X}_{i^{*}}(t-1),\delta_{\alpha}(t),N_{i^{*}}(t-1)) \nonumber\\
 & \geq H_{i^{*}}=H_{i}+\Delta_{i}\\
 & \geq H_{i}+2\ucb_{\text{ber}}\left(2p_{i}+\frac{3\log(1/\delta_{\alpha}(t))}{N_{i}(t-1)},\delta_{\alpha}(t),N_{i}(t-1)\right)\\
 & \trre[\geq,*]H_{i}+2\ucb\left(\boldsymbol{X}_{i}(t-1),\delta_{\alpha}(t),N_{i}(t-1)\right)\\
 & \geq\hat{H}(\boldsymbol{X}_{i}(t-1),N_{i}(t-1)))+\ucb\left(\boldsymbol{X}_{i}(t-1),\delta_{\alpha}(t),N_{i}(t-1)\right),
\end{align}
where in $(*)$ we have used the current assumption that $N_{i}(t-1)\geq200\alpha\log(t)$,
which assures that $\ucb_{\text{ber}}(q,\delta_{\alpha}(t),N_{i}(t-1))$
is a monotonically non-decreasing function of $q$. Thus, in this
case Algorithm \ref{alg:A-UCB-general} will not choose $I_{t}=i$
at the $t$th round; a contradiction. 

By Lemma \ref{lem: large number of samples implies UCB smaller than gap Bernoulli},
if
\[
N_{i}(t-1)\geq\tilde{\Gamma}_{\text{ber}}\left(\alpha,\beta,2p_{i}+\frac{3\log(1/\delta_{\alpha}(t))}{N_{i}(t-1)},\Delta_{i},t\right),
\]
then the first part of the event \ref{item:third_error Ber} does not
occur. By the definition of $\tilde{\Gamma}_{\text{ber}}(\cdot)$
in (\ref{eq: u BER simplified}), and by setting $\delta_{\alpha}(t)=4t^{-\alpha}$,
the RHS in the last equation is upper bounded as 
\begin{multline}
\max\Bigg\{2\cdot\Lambda_{1}\left(\frac{36\alpha\log(t)}{(1-\beta)\Delta_{i}}\right), \\
\frac{2560p_{i}\alpha\log(t)}{\beta^{2}\cdot\Delta_{i}^{2}}\cdot\log^{2}\left(\frac{48}{\beta^{2}\cdot\Delta_{i}^{2}}\right)+\frac{3840\alpha\log(t)}{\beta^{2}\cdot\Delta_{i}^{2}N_{i}(t-1)}\cdot\log^{2}\left(\frac{48}{\beta^{2}\cdot\Delta_{i}^{2}}\right)\Bigg\}.
\end{multline}
This can be guaranteed by requiring that $N_{i}(t-1)$ is larger than
each of the first two terms, as well as larger than twice of each
of the additive components of the third term. To conclude, a sufficient
condition for the event \ref{item:third_error Ber} not to occur is
that 
\begin{multline}
N_{i}(t-1)\geq\\
\max\left\{ 2\cdot\Lambda_{1}\left(\frac{36\alpha\log(t)}{(1-\beta)\Delta_{i}}\right),\frac{5120p_{i}\alpha\log(t)}{\beta^{2}\cdot\Delta_{i}^{2}}\cdot\log^{2}\left(\frac{48}{\beta^{2}\cdot\Delta_{i}^{2}}\right),\frac{88\sqrt{\alpha\log(t)}}{\beta\cdot\Delta_{i}}\cdot\log\left(\frac{48}{\beta^{2}\cdot\Delta_{i}^{2}}\right)\right\} \\
=\Gamma_{\text{ber}}(\alpha,\beta,p_{i},\Delta_{i},t).\label{eq: a preliminary version of u Ber}
\end{multline}
The second part of event \ref{item:third_error Ber} does not occur
if $N_{i}(t-1)\geq200\alpha\log(t)$, which is already covered by
the condition in (\ref{eq: a preliminary version of u Ber}) if we
increase the pre-constant of the second term to $6$, which is the
definition of $\Gamma_{\text{ber}}(\cdot)$ used in (\ref{eq: u ber}). 

The analysis then follows as in the proof of Theorem \ref{thm: UCB-Bias regret},
by using Lemma \ref{lem: empirical mean concentration Ber} and Proposition
\ref{prop: Bernoulii plug-in estimator} to bound the probabilities
of the events in \ref{item:first_error Ber} and \ref{item:second_error}.
Note that the condition $N_{i}(t-1)\geq200\cdot\log(\frac{4}{\delta})=200\alpha\log(t)$
required for the confidence bound to hold with high probability is
already satisfied by (\ref{eq: a preliminary version of u Ber}).
\renewcommand{\labelenumi}{\arabic{enumi}.}
\renewcommand{\theenumi}{\arabic{enumi}}
\end{proof}

\begin{proof}[Proof of Proposition \ref{prop: Bernoulii plug-in estimator close to half}]
By Taylor approximation at the point $p$, for any $q\in[0,\frac{1}{2}]$
\begin{equation}
h_{b}(q)=h_{b}(p)+h_{b}'(p)(q-p)+\frac{h_{b}''(\xi)}{2}\left(q-p\right)^{2},\label{eq: Taylor approximation for binary entropy}
\end{equation}
where $\xi\in[p,q]\cup[q,p]$. From Lemma \ref{lem: empirical mean concentration Ber},
it holds with probability larger than $1-2\delta$ that both $p\leq2\hat{p}(n)+\frac{12\log(\frac{1}{\delta})}{n}$
and $\left|p-\hat{p}(n)\right|\leq\sqrt{\frac{3p\log(\frac{2}{\delta})}{n}}$.
Under this event, since $n\geq60\log(\frac{2}{\delta})$ was assumed,
it holds that $\hat{p}(n)\geq\frac{1}{10}$. For $q\in[\frac{2}{5},\frac{1}{2}]$
it can be easily verified that
\[
\left|h_{b}'(q)\right|=\left|\log\frac{1-q}{q}\right|\leq5\left(\frac{1}{2}-q\right),
\]
and for any $q\in[\frac{1}{10},\frac{1}{2}]$ it holds that $|h_{b}''(q)|\leq12$.
Hence, by (\ref{eq: Taylor approximation for binary entropy}), and
under the high probability event
\begin{align}
\left|h_{b}(\hat{p}(n))-h_{b}(p)\right| & \leq5\left|\frac{1}{2}-p\right|\left|\hat{p}(n)-p\right|+6\left(\hat{p}(n)-p\right)^{2}\\
 & \leq7\left|\frac{1}{2}-p\right|\sqrt{\frac{\log(\frac{2}{\delta})}{n}}+\frac{9\log(\frac{2}{\delta})}{n}.
\end{align}
The proof of (\ref{eq: entropy confidene bound close to half population})
is completed by replacing $\delta$ with $2\delta$. The proof of
(\ref{eq: entropy confidene bound close to half empirical}) is similar,
with a Taylor approximation for $p$ around $\hat{p}(n)$. 
\end{proof}

\section{Proofs for Section \ref{subsec:The-General-Alphabet}}\label{append: subsec The-General-Alphabet}

The proof of Proposition \ref{prop: TV plug-in estimator} relies
on a confidence interval bound for the entropy which is based on an empirical
version of $\zeta(p)$. We begin with the following bound. 
\begin{lem}
\label{lemma:total_variation_empirical_convergence} Consider the
setting of Proposition \ref{prop: TV plug-in estimator}. Then, for
any $\delta\in(0,1)$ 
\begin{equation}
\dtv(p,\hat{p}(n))\leq\sqrt{\frac{4\zeta(p)|{\cal Y}|+\log\left(\frac{1}{\delta}\right)}{n}},\label{eq: concentration of TV general}
\end{equation}
with probability larger than $1-\delta$. 
\end{lem}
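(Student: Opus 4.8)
The plan is to split $\dtv(p,\hat{p}(n))$ into its expectation and a fluctuation term, and to arrange the argument so that the effective-alphabet factor $\zeta(p)|{\cal Y}|$ is produced entirely by the expectation bound, while $\log(1/\delta)$ enters only through the concentration of the fluctuation around that mean. Concretely, I would first establish $\E[\dtv(p,\hat{p}(n))]\le\sqrt{\zeta(p)|{\cal Y}|/n}$, and then show that $\dtv(p,\hat{p}(n))$ concentrates around this mean at the correct rate.

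For the expectation bound I would argue coordinatewise. Since $n\hat{p}(n,y)$ is $\mathrm{Binomial}(n,p(y))$, Jensen's inequality gives $\E|\hat{p}(n,y)-p(y)|\le\sqrt{\V(\hat{p}(n,y))}=\sqrt{p(y)(1-p(y))/n}$. Summing over $y$ and applying \CS{} across the alphabet yields $\sum_{y}\sqrt{p(y)(1-p(y))}\le\sqrt{|{\cal Y}|\sum_{y}p(y)(1-p(y))}=\sqrt{|{\cal Y}|\,\zeta(p)}$, where the final equality is simply $\sum_{y}p(y)(1-p(y))=1-\sum_{y}p^{2}(y)=\zeta(p)$. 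This is the conceptual crux of the whole section: a naive bound would replace $\zeta(p)$ by $1$, losing exactly the adaptivity to low-entropy (near-vertex) sources that motivates the total-variation UCD.

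For the concentration I would view $\dtv(p,\hat{p}(n))$ as a function of the $n$ i.i.d.\ samples and exploit its bounded-difference property: changing a single sample alters precisely two empirical masses by $1/n$ each, so $\dtv$ changes by at most $2/n$. McDiarmid's inequality then gives $\P[\dtv(p,\hat{p}(n))\ge\E[\dtv(p,\hat{p}(n))]+t]\le\exp(-nt^{2}/2)$, i.e.\ a fluctuation of order $\sqrt{2\log(1/\delta)/n}$ at confidence $1-\delta$.

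The remaining step is to fold the two contributions into the single radical $\sqrt{(4\zeta(p)|{\cal Y}|+\log(1/\delta))/n}$, and I expect this packaging to be the main obstacle. Simply adding $\sqrt{\zeta(p)|{\cal Y}|/n}$ and $\sqrt{2\log(1/\delta)/n}$ and squaring produces a cross term that a crude $\sqrt{x}+\sqrt{y}\le\sqrt{2(x+y)}$ step will not absorb into the stated constants. To land the factor $4$ on $\zeta(p)|{\cal Y}|$ together with a clean additive $\log(1/\delta)$, I would instead use a variance-sensitive concentration in which the variance proxy is $\E[\dtv^{2}]\le\zeta(p)|{\cal Y}|/n$; this follows from the same \CS{} step, since $\dtv^{2}\le|{\cal Y}|\,\|\hat{p}(n)-p\|_{2}^{2}$ and $\E\|\hat{p}(n)-p\|_{2}^{2}=\zeta(p)/n$. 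Negative association of the multinomial counts justifies running a Bernstein/Chernoff argument with this variance rather than the worst-case bounded-difference variance. Finally, the deterministic backstop $\dtv(p,\hat{p}(n))\le 2$ covers the small-$n$ or large-$\delta$ regime in which the radical would otherwise exceed $2$ and the statement is vacuous.
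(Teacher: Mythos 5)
Your decomposition and your first two steps are exactly the paper's: the mean of $\dtv(p,\hat{p}(n))$ is bounded coordinatewise by Jensen plus \CS{} across the alphabet, yielding $\sum_{y}\sqrt{p(y)(1-p(y))/n}\leq\sqrt{\zeta(p)|{\cal Y}|/n}$ (the paper deliberately carries an extra factor of $2$ inside the radical), and the fluctuation around the mean is controlled by McDiarmid's bounded-difference inequality. Where you go astray is the final step. The obstacle you anticipate is not there: the paper closes the proof with nothing more than $\sqrt{a}+\sqrt{b}\leq\sqrt{2(a+b)}$, applied to $a=2\zeta(p)|{\cal Y}|/n$ and $b=\log(1/\delta)/(2n)$, which gives precisely $\sqrt{(4\zeta(p)|{\cal Y}|+\log(1/\delta))/n}$; the doubling is absorbed by the loose factor $2$ in the mean bound and by the $1/2$ in the McDiarmid radius, so no cross term ever needs to be controlled. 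Your instinct that the constants are delicate does have a kernel of truth: with the paper's unnormalized definition $\dtv(p,q)=\sum_{y}|p(y)-q(y)|$, the bounded-difference constant is $2/n$ (as you compute), not the $1/n$ the paper uses, and the same packaging then yields $\sqrt{(2\zeta(p)|{\cal Y}|+4\log(1/\delta))/n}$ rather than the stated constants. That is a constants-level discrepancy in the paper, not a reason to abandon the elementary argument.

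The replacement you propose does not work as sketched, so this is a genuine gap. A Bernstein/Chernoff argument for $\dtv(p,\hat{p}(n))=\sum_{y}|p(y)-\hat{p}(n,y)|$ would need the summands to be independent or negatively associated, but $|p(y)-\hat{p}(n,y)|$ is not a monotone function of the count of symbol $y$, so negative association of the multinomial counts does not transfer to these summands; moreover $\E[\dtv^{2}(p,\hat{p}(n))]$ is not the variance proxy such an inequality calls for. More fundamentally, a variance-sensitive tail bound with proxy $\sigma^{2}=\zeta(p)|{\cal Y}|/n$ produces a deviation of order $\sqrt{\sigma^{2}\log(1/\delta)}$ plus a term linear in $\log(1/\delta)/n$, in which $\zeta(p)|{\cal Y}|$ and $\log(1/\delta)$ enter multiplicatively under the radical; this is not the additive form $\sqrt{(4\zeta(p)|{\cal Y}|+\log(1/\delta))/n}$ you are trying to prove, and it does not imply it once $\zeta(p)|{\cal Y}|>1$. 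Drop the detour: bound the mean, apply McDiarmid around the mean, and combine the two terms with $\sqrt{a}+\sqrt{b}\leq\sqrt{2(a+b)}$.
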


\begin{proof}
The total variation $\dtv(p,\hat{p}(n))$ satisfies a bounded difference
inequality with constant $1/n$ as a function of $(Y_{1},\ldots,Y_{n})$,
and so by McDiarmid's inequality \cite[Thm. 3.11]{van2014probability}
\[
\P\left[\left|\dtv(p,\hat{p}(n))-\E\left[\dtv(p,\hat{p}(n))\right]\right|\geq\epsilon\right]\leq e^{-2n\epsilon^{2}}.
\]
Recall that $\hat{p}(n,y)=\frac{1}{n}\sum_{\ell=1}^{n}\I\{Y_{i}=y\}$.
We next upper bound the expected value $\E[\dtv(p,\hat{p}(n))]$ as
follows: 
\begin{align}
\E\left[\dtv(p,\hat{p}(n))\right] & =\sum_{y\in{\cal Y}}\E\left[\left|p(y)-\hat{p}(n,y)\right|\right]\\
 & \leq\sum_{y\in{\cal Y}}\sqrt{\E\left[\left(p(y)-\hat{p}(n,y)\right)^{2}\right]}\\
 & =\sum_{y\in{\cal Y}}\sqrt{\frac{2}{n}p(y)(1-p(y))}\\
 & \leq|{\cal Y}|\sqrt{\frac{1}{|{\cal Y}|}\sum_{y\in{\cal X}}\frac{2}{n}p(y)(1-p(y))}\\
 & =\sqrt{\frac{2|{\cal Y}|}{n}}\sqrt{\sum_{y\in{\cal Y}}p(y)(1-p(y))}\\
 & =\sqrt{\frac{2\zeta(p)|{\cal Y}|}{n}},
\end{align}
where the two inequalities follow from Jensen's inequality. Setting
$e^{-2n\epsilon^{2}}=\delta$ directly leads to 
\[
\dtv(p,\hat{p}(n))\leq\sqrt{\frac{2\zeta(p)|{\cal Y}|}{n}}+\sqrt{\frac{1}{2n}\log\left(\frac{1}{\delta}\right)},
\]
which is further slightly loosened to the claim of the lemma using
$\sqrt{a}+\sqrt{b}\leq\sqrt{2(a+b)}$ for $a,b\in\mathbb{R}_{+}$. 
\end{proof}
Clearly, while $\zeta(p)$ controls the size confidence interval of
$\dtv(p,\hat{p}(n))$, it is a distribution-dependent quantity which
is unknown to the player, and thus required to be estimated from the
data. In this respect, the concentration of $\zeta(p)$ to its estimated
plug-in value is roughly on the same order of that of the total variation (in fact, it can be proved to be faster). Specifically, the following holds:
\begin{lem} 
\label{lem: zeta concentration is dominated by TV concentration}
Let the plug-in estimator of $\zeta(p)$ be given by $\hat{\zeta}(n)\equiv\hat{\zeta}(\boldsymbol{Y},n)\dfn1-\sum_{y\in{\cal Y}}\hat{p}^{2}(n,y)$. 
Then, under
the setting of Lemma \ref{lemma:total_variation_empirical_convergence}, for any $\delta\in(0,1)$ 
\begin{equation}
\hat{\zeta}(n)-\sqrt{\frac{18\log\left(\frac{1}{\delta}\right)}{n}}-\frac{1}{n}\leq\zeta(p)\leq\hat{\zeta}(n)+\sqrt{\frac{18\log\left(\frac{1}{\delta}\right)}{n}},\label{eq: distribution independent concentration of zeta}
\end{equation}
with probability larger than $1-\delta$.
\end{lem}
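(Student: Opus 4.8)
The plan is to decompose $\hat{\zeta}(n)-\zeta(p)$ into a deterministic bias and a zero-mean fluctuation, control each separately, and recombine. First I would compute the bias exactly. Since $\hat{p}(n,y)=\frac{1}{n}\sum_{\ell}\I\{Y_{\ell}=y\}$ is an average of i.i.d.\ Bernoulli$(p(y))$ indicators, one has $\E[\hat{p}^{2}(n,y)]=p^{2}(y)+\frac{1}{n}p(y)(1-p(y))$. Summing over $y$ and using $\sum_{y}p(y)(1-p(y))=\zeta(p)$ gives $\E[\hat{\zeta}(n)]=\zeta(p)(1-\frac{1}{n})$, hence $0\le\zeta(p)-\E[\hat{\zeta}(n)]=\frac{\zeta(p)}{n}\le\frac{1}{n}$. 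This identity is the origin of the additive $\frac{1}{n}$ in the lower bound of the statement, and it shows the plug-in estimator is biased low, so the two directions of the interval will be asymmetric.

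Next I would bound the fluctuation $\hat{\zeta}(n)-\E[\hat{\zeta}(n)]$ via McDiarmid's bounded-difference inequality, viewing $\hat{\zeta}(n)$ as a function of $(Y_{1},\dots,Y_{n})$. Changing a single sample from symbol $a$ to symbol $b$ alters only $\hat{p}(n,a)$ and $\hat{p}(n,b)$, each by $\frac{1}{n}$, so a direct computation gives a change in $\sum_{y}\hat{p}^{2}(n,y)$ equal to $\frac{2}{n}(\hat{p}(n,b)-\hat{p}(n,a))+\frac{2}{n^{2}}$, whose magnitude is at most $\frac{2}{n}+\frac{2}{n^{2}}\le\frac{3}{n}$ for $n\ge2$. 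Thus each bounded-difference constant is at most $\frac{3}{n}$ and $\sum_{\ell=1}^{n}c_{\ell}^{2}\le\frac{9}{n}$, so McDiarmid yields $\P[\,|\hat{\zeta}(n)-\E\hat{\zeta}(n)|\ge\epsilon\,]\le2\exp(-\frac{2n\epsilon^{2}}{9})$. Setting the right-hand side to $\delta$ shows that $|\hat{\zeta}(n)-\E\hat{\zeta}(n)|\le\epsilon'$ with $\epsilon'\dfn\sqrt{\frac{9\log(2/\delta)}{2n}}$ on an event of probability at least $1-\delta$.

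Finally I would recombine on this event. Writing $\zeta(p)-\hat{\zeta}(n)=\frac{\zeta(p)}{n}-(\hat{\zeta}(n)-\E\hat{\zeta}(n))$ and using $0\le\frac{\zeta(p)}{n}\le\frac{1}{n}$ together with $|\hat{\zeta}(n)-\E\hat{\zeta}(n)|\le\epsilon'$ immediately produces $\hat{\zeta}(n)-\epsilon'\le\zeta(p)\le\hat{\zeta}(n)+\epsilon'+\frac{1}{n}$. The stated inequalities then follow by loosening: the clean upper bound is obtained by absorbing the $\frac{1}{n}$ bias and the $\log2$ factor into a doubling of $\epsilon'$, i.e.\ by verifying $\epsilon'+\frac{1}{n}\le\sqrt{\frac{18\log(1/\delta)}{n}}$ (which holds because $\sqrt{18}=2\sqrt{4.5}$ leaves room for both, at least in the relevant small-$\delta$ regime), while the lower bound is weakened harmlessly by subtracting the extra $\frac{1}{n}$.

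The main obstacle I anticipate is the bookkeeping in this last step rather than any deep inequality: the bias is genuinely one-sided, so one must check that the single constant $18$ simultaneously swallows the additive $\frac{1}{n}$ bias, the factor of two produced by the two-sided McDiarmid's $\log(2/\delta)$, and still delivers the asymmetric form with probability exactly $1-\delta$. The bounded-difference computation itself is routine, but it must retain the $\frac{2}{n^{2}}$ correction term in order to certify an honest constant $c_{\ell}\le\frac{3}{n}$.
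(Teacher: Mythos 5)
Your proof is correct and follows the same route as the paper's: McDiarmid's bounded-difference inequality for the fluctuation of $\hat{\zeta}(n)$ combined with an exact computation of its bias. In fact your bias identity $\E[\hat{\zeta}(n)]=\zeta(p)\left(1-\tfrac{1}{n}\right)$ is the correct one (the paper's proof asserts $\E[\hat{\zeta}(n)]=\zeta(p)+\tfrac{\zeta(p)}{n}$, a sign slip), so your extra loosening step --- absorbing the $\tfrac{1}{n}$ into the $\sqrt{18\log(1/\delta)/n}$ term on the upper side --- is exactly what is needed to recover the asymmetric interval as stated; the only residual caveat, which the paper's own proof shares by dropping the factor $2$ in the two-sided McDiarmid bound, is that the constant $18$ only swallows everything for $\delta$ bounded away from $1$.
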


\begin{proof}
Since $|(p(\boldsymbol{Y},n)\pm\frac{1}{n})^{2}-p(\boldsymbol{Y},n)^{2}|\leq\frac{3}{n}$
for any $p(\boldsymbol{Y},n)\in[0,1]$, the plug-in estimator $\hat{\zeta}(n)\equiv\hat{\zeta}(\boldsymbol{Y},n)$
satisfies a bounded difference inequality with constant $6/n$ as
a function of $(Y_{1},\ldots,Y_{n})$, and so by McDiarmid's inequality
\cite[Thm. 3.11]{van2014probability} 
\[
\P\left[\left|\hat{\zeta}(\boldsymbol{Y},n)-\E\left[\hat{\zeta}(\boldsymbol{Y},n)\right]\right|\geq\epsilon\right]\leq e^{-\frac{n\epsilon^{2}}{18}}.
\]
The plug-in estimator $\hat{\zeta}(n)$ is biased, and easily seen
to satisfy $\E\left[\hat{\zeta}(n)\right]=\zeta(p)+\frac{\zeta(p)}{n}$.
The result follows since $\zeta(p)\in[0,1]$. 
\end{proof}
We combine Lemma \ref{lemma:total_variation_empirical_convergence}
and Lemma \ref{lem: zeta concentration is dominated by TV concentration}
to obtain a confidence interval bound which can be computed by the
player according to its empirical data. 
\begin{lem}
\label{lem: total variation empirical concentration}Under the setting
of Lemma \ref{lemma:total_variation_empirical_convergence}, any $\delta\in(0,1/e)$
it holds that 
\begin{equation}
\dtv(p,\hat{p}(n))\leq\sqrt{\frac{4\hat{\zeta}(n)|{\cal Y}|}{n}}+\sqrt{\frac{\log\left(\frac{2}{\delta}\right)}{n}}+\frac{5|{\cal Y}|^{3/4}}{n^{3/4}}\log^{1/4}\left(\frac{2}{\delta}\right),\label{eq: dtv empirical bound general}
\end{equation}
with probability larger than $1-\delta$. 
\end{lem}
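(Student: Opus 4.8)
The plan is to combine the distribution-dependent total variation bound of Lemma \ref{lemma:total_variation_empirical_convergence} with the concentration of the plug-in estimator $\hat{\zeta}(n)$ around $\zeta(p)$ furnished by Lemma \ref{lem: zeta concentration is dominated by TV concentration}, in order to replace the unknown $\zeta(p)$ appearing in the former by its empirical surrogate $\hat{\zeta}(n)$. The only subtlety is that both bounds are high-probability statements, so I would invoke each at confidence level $\delta/2$ and take a union bound; this is what produces the $\log(\frac{2}{\delta})$ factors in the claimed inequality, and yields a joint event of probability larger than $1-\delta$ on which both estimates hold simultaneously.

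First I would apply Lemma \ref{lemma:total_variation_empirical_convergence} with parameter $\delta/2$ to obtain
\[
\dtv(p,\hat{p}(n))\leq\sqrt{\frac{4\zeta(p)|{\cal Y}|+\log\left(\frac{2}{\delta}\right)}{n}},
\]
and apply the upper bound of Lemma \ref{lem: zeta concentration is dominated by TV concentration}, also with parameter $\delta/2$, to obtain $\zeta(p)\leq\hat{\zeta}(n)+\sqrt{18\log(\frac{2}{\delta})/n}$. Note that only the upper tail of the $\zeta$-concentration is needed, so the additive $\frac{1}{n}$ term appearing in the lower bound of that lemma plays no role. Substituting the bound on $\zeta(p)$ into the total variation bound gives, on the joint event,
\[
\dtv(p,\hat{p}(n))\leq\sqrt{\frac{4|{\cal Y}|\hat{\zeta}(n)}{n}+\frac{4|{\cal Y}|}{n}\sqrt{\frac{18\log(\frac{2}{\delta})}{n}}+\frac{\log(\frac{2}{\delta})}{n}}.
\]

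The remaining step is to split this single square root into the three advertised summands using the elementary subadditivity inequality $\sqrt{a+b+c}\leq\sqrt{a}+\sqrt{b}+\sqrt{c}$ for nonnegative $a,b,c$. Taking $a$ and $c$ to be the first and third terms under the root reproduces exactly $\sqrt{4\hat{\zeta}(n)|{\cal Y}|/n}$ and $\sqrt{\log(\frac{2}{\delta})/n}$, matching the first and second terms of the claim. For the cross term $b=\frac{4|{\cal Y}|}{n}\sqrt{18\log(\frac{2}{\delta})/n}$ a direct computation gives $\sqrt{b}=2\cdot 18^{1/4}\,|{\cal Y}|^{1/2}\log^{1/4}(\frac{2}{\delta})/n^{3/4}$. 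Since $2\cdot 18^{1/4}<5$ and $|{\cal Y}|^{1/2}\leq|{\cal Y}|^{3/4}$ (as $|{\cal Y}|\geq1$), this is at most $5|{\cal Y}|^{3/4}\log^{1/4}(\frac{2}{\delta})/n^{3/4}$, which is precisely the third term in the claim.

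I do not expect a serious obstacle: the argument reduces to a union bound, a substitution, and the subadditivity of the square root. The only mildly delicate point worth stating carefully is the constant bookkeeping in the cross term $\sqrt{b}$ — in particular the exponent inflation from $|{\cal Y}|^{1/2}$ to $|{\cal Y}|^{3/4}$ that yields the succinct final constant $5$, together with the numerical check $2\cdot 18^{1/4}<5$. The restriction $\delta\in(0,1/e)$ is a mild technical condition (it guarantees $\log(\frac{2}{\delta})>1$, keeping the logarithmic factors bounded away from zero) and does not otherwise affect the derivation.
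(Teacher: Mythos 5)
Your proposal is correct and follows essentially the same route as the paper's proof: combine Lemma \ref{lemma:total_variation_empirical_convergence} with the upper tail of Lemma \ref{lem: zeta concentration is dominated by TV concentration} via a union bound, substitute, and split the square root by subadditivity, with the same constant bookkeeping ($2\cdot 18^{1/4}=288^{1/4}<5$ and $|{\cal Y}|^{1/2}\leq|{\cal Y}|^{3/4}$). The only cosmetic difference is that you invoke each lemma at level $\delta/2$ up front, whereas the paper carries $\delta$ through and rescales at the end.
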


\begin{proof}
By combining Lemma \ref{lemma:total_variation_empirical_convergence}
and Lemma \ref{lem: zeta concentration is dominated by TV concentration},
and a union bound, it holds with probability larger than $1-2\delta$  that
\begin{align}
\dtv(p,\hat{p}(n)) & \leq\sqrt{\frac{4\zeta(p)|{\cal Y}|+\log\left(\frac{1}{\delta}\right)}{n}}\\
 & \leq\sqrt{\frac{4\left[\hat{\zeta}(n)+\sqrt{\frac{18\log\left(\frac{1}{\delta}\right)}{n}}\right]|{\cal Y}|+\log\left(\frac{1}{\delta}\right)}{n}}\\
 & \leq\sqrt{\frac{4\hat{\zeta}(n)|{\cal Y}|}{n}}+\left[\frac{288|{\cal Y}|^{2}\log\left(\frac{1}{\delta}\right)}{n^{3}}\right]^{1/4}+\sqrt{\frac{\log\left(\frac{1}{\delta}\right)}{n}},
\end{align}
where the last inequality follows from $\sqrt{a+b}\leq\sqrt{a}+\sqrt{b}$
for $a,b\in\mathbb{R}_{+}$. The proof is completed by substituting
$\delta$ with $2\delta$.
\end{proof}
With these results at hand we may prove Proposition \ref{prop: TV plug-in estimator}. 
\begin{proof}[Proof of Proposition \ref{prop: TV plug-in estimator}]
As in the proof of Proposition \ref{prop: Bernoulii plug-in estimator}
if $\dtv(p,\hat{p}(n))\leq\frac{1}{2}$ then 
\begin{equation}
\left|H(\hat{p}(n))-H(p)\right|\leq-|{\cal Y}|\cdot\Lambda_{1}\left(\frac{\dtv(p,\hat{p}(n))}{|{\cal Y}|}\right).\label{eq: empirical bound on entropy difference proof}
\end{equation}
From Lemma \ref{lem: total variation empirical concentration}
\[
\dtv(p,\hat{p}(n))\leq\sqrt{\frac{4\hat{\zeta}(n)|{\cal Y}|}{n}}+\sqrt{\frac{\log\left(\frac{2}{\delta}\right)}{n}}+\frac{5|{\cal Y}|^{1/2}}{n^{3/4}}\log^{1/4}\left(\frac{2}{\delta}\right)\dfn a_{1}+a_{2}+a_{3},
\]
with probability larger than $1-\delta$, where $\{a_{i}\}_{i\in[3]}$
were implicitly defined. To be in the monotonic increasing regime
of $-\Lambda_{1}(s)$ of $s\in[0,e^{-1}]$, we require that this upper
bound is less than $|{\cal Y}|e^{-1}$. This can be satisfied if 
\[
\frac{\dtv(p,\hat{p}(n))}{|{\cal Y}|}\leq\sqrt{\frac{4\hat{\zeta}(n)}{n|{\cal Y}|}}+\sqrt{\frac{\log\left(\frac{2}{\delta}\right)}{n|{\cal Y}|^{2}}}+\frac{5}{n^{3/4}|{\cal Y}|^{1/2}}\log^{1/4}\left(\frac{2}{\delta}\right)\leq\frac{1}{e}.
\]
A simple sufficient condition for this can be obtained by bounding
$\hat{\zeta}(n)\leq1$ and $|{\cal Y}|\geq2$, and requiring that
each of the three terms is less than a third of $1/e$. This holds if
$n\geq112\cdot\log\left(\frac{2}{\delta}\right)$\textbf{ }and $\delta\leq0.2$.\textbf{ }

Now, since monotonicity is satisfied, we may replace $\dtv(p,\hat{p}(n))$
with the high probability upper bound (\ref{eq: dtv empirical bound general})
of Lemma \ref{lem: total variation empirical concentration}. We may
consider three cases, according to which of the terms, which we denoted by $\{a_i\}_{i=1}^3$, is the largest.
\begin{itemize}
\item If $\max_{i\in[3]}a_{i}=a_{1}$ then the upper bound (\ref{eq: dtv empirical bound general})
is less than $3a_{1}=\sqrt{\frac{36\hat{\zeta}(n)|{\cal Y}|}{n}}$.
By the monotonicity property, (\ref{eq: empirical bound on entropy difference proof})
results
\[
\left|H(\hat{p}(n))-H(p)\right|\leq\sqrt{\frac{36\hat{\zeta}(n)|{\cal Y}|}{n}}\log\left(\sqrt{\frac{n|{\cal Y}|}{36\hat{\zeta}(n)}}\right)\leq3\sqrt{\frac{\hat{\zeta}(n)|{\cal Y}|}{n}}\log\left(\frac{n|{\cal Y}|}{36\hat{\zeta}(n)}\right).
\]
\item If $\max_{i\in[3]}a_{i}=a_{2}$ then the upper bound (\ref{eq: dtv empirical bound general})
is less than $3a_{2}=\sqrt{\frac{9\log\left(\frac{2}{\delta}\right)}{n}}$.
By the monotonicity property, (\ref{eq: empirical bound on entropy difference proof})
results
\[
\left|H(\hat{p}(n))-H(p)\right|\leq\sqrt{\frac{9\log\left(\frac{2}{\delta}\right)}{n}}\log\left(\sqrt{\frac{n|{\cal Y}|^{2}}{9\log\left(\frac{2}{\delta}\right)}}\right)\leq\frac{3}{2}\sqrt{\frac{\log\left(\frac{2}{\delta}\right)}{n}}\log\left(\frac{n|{\cal Y}|^{2}}{9}\right).
\]
\item If $\max_{i\in[3]}a_{i}=a_{3}$ then the upper bound (\ref{eq: dtv empirical bound general})
is less than $3a_{3}=\frac{15|{\cal Y}|^{3/4}}{n^{3/4}}\log^{1/4}\left(\frac{2}{\delta}\right)$.
By the monotonicity property, (\ref{eq: empirical bound on entropy difference proof})
results
\begin{align}
\left|H(\hat{p}(n))-H(p)\right| & \leq\frac{15|{\cal Y}|^{1/2}\log^{1/4}\left(\frac{2}{\delta}\right)}{n^{3/4}}\log\left(\frac{n^{3/4}|{\cal Y}|^{1/2}}{15\log^{1/4}\left(\frac{2}{\delta}\right)}\right)\\
 & \leq\frac{2|{\cal Y}|^{1/2}\log^{1/4}\left(\frac{2}{\delta}\right)\log\left(n|{\cal Y}|^{2/3}\right)}{n^{3/4}}.
\end{align}
\end{itemize}
To agree with all three cases, we sum the three deviation terms, and
this completes the proof. 
\end{proof}
We next turn to the proof of Theorem \ref{thm: UCB-TV regret}, which
is based on a lemma analogous to Lemma \ref{lem: large number of samples implies UCB smaller than gap Bernoulli}.
To this end, we further denote a simplified version of $\Gamma_{\text{tv}}(\cdot)$
from (\ref{eq: u tv}) defined as
\begin{multline}
\tilde{\Gamma}_{\text{tv}}(\alpha,\zeta,\Delta,t)\dfn\\
\max\left\{ 144\frac{\zeta}{|{\cal Y}|}\Lambda_{1}^{2}\left(\frac{2|{\cal Y}|}{3\Delta}\right),\;\frac{135}{|{\cal Y}|^{2}}\Lambda_{2}\left(\frac{9|{\cal Y}|^{2}\alpha\log(t)}{\Delta^{2}}\right),\;\frac{3}{|{\cal Y}|^{2/3}}\Lambda_{4/3}\left(\frac{27|{\cal Y}|^{4/3}\alpha^{1/3}\log^{1/3}(t)}{\Delta^{4/3}}\right)\right\} .\label{eq: u TV simplified}
\end{multline}

\begin{lem}
\label{lem: large number of samples implies UCB smaller than gap general}For
$\delta\equiv\delta_{\alpha}(t)=2t^{-\alpha}$ and $\alpha>2$, if
$n\geq\tilde{\Gamma}_{\text{\emph{tv}}}(\alpha,\zeta,\Delta,t)$ then $\ucb_{\text{\emph{tv}}}(\zeta,\delta,{\cal Y},n)\leq\Delta/2$,
where $\ucb_{\text{\emph{tv}}}(\zeta,\delta,{\cal Y},n)$ is as defined in
(\ref{eq: confidence bound TV}).
\end{lem}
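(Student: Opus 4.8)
The plan is to reduce the single scalar inequality $\ucb_{\text{tv}}(\zeta,{\cal Y},\delta,n)\le\Delta/2$ to three independent sufficient conditions, one per summand of the definition \eqref{eq: confidence bound TV}, by demanding that each of the three summands be at most $\Delta/6$; since $\Delta/6+\Delta/6+\Delta/6=\Delta/2$, their conjunction implies the claim. At the outset I would substitute $\delta\equiv\delta_\alpha(t)=2t^{-\alpha}$, which is exactly the value making $\log(2/\delta)=\log(t^{\alpha})=\alpha\log(t)$, so that every $\log(2/\delta)$ in \eqref{eq: confidence bound TV} is replaced by $\alpha\log(t)$. This mirrors the structure of the analogous Lemma \ref{lem: large number of samples implies UCB smaller than gap Bernoulli} for the Bernoulli case, except that here there is no convex-combination parameter $\beta$; the three thresholds are simply allocated equally.

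For each summand the mechanism is identical: isolate the power of $1/n$, apply a monomial change of variables that collapses the condition to the canonical form $\log^{r}(x)/x\le y$, and then invert it via Lemma \ref{lem: inverting polylog over linear} with the matching constant. The second summand, $\tfrac{3}{2}\sqrt{\alpha\log(t)/n}\,\log(n|{\cal Y}|^{2}/9)\le\Delta/6$, is squared and rewritten with $x=n|{\cal Y}|^{2}/9$ into $\log^{2}(x)/x\le\Delta^{2}/(9\alpha\log(t)|{\cal Y}|^{2})$; inverting with $r=2$ (so $c_{2}=15$) and undoing the substitution yields precisely $n\ge\frac{135}{|{\cal Y}|^{2}}\Lambda_{2}\!\left(\frac{9|{\cal Y}|^{2}\alpha\log(t)}{\Delta^{2}}\right)$. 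The third summand is handled by raising the inequality to the power $4/3$ and setting $x=n|{\cal Y}|^{2/3}$, which produces $\log^{4/3}(x)/x\le y$; inverting with $r=4/3$ (so $c_{4/3}=3$) delivers the $\Lambda_{4/3}$ threshold of \eqref{eq: u TV simplified}.

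The main obstacle is the first summand, $3\sqrt{\zeta|{\cal Y}|/n}\,\log\!\left(n|{\cal Y}|/(36\zeta)\right)\le\Delta/6$, because the data-dependent functional $\zeta$ appears simultaneously in the square-root prefactor and inside the logarithm. The device that resolves this is the substitution $m=n|{\cal Y}|/(36\zeta)$: under it the prefactor becomes $\sqrt{\zeta|{\cal Y}|/n}=|{\cal Y}|/(6\sqrt{m})$, and the whole summand collapses to $\tfrac{|{\cal Y}|}{2}\cdot\log(m)/\sqrt{m}$, in which $\zeta$ no longer appears explicitly. Writing $v=\sqrt{m}$ then reduces the condition to $\log(v)/v\le\Delta/(6|{\cal Y}|)$, whose inversion threshold is $\zeta$-free; applying Lemma \ref{lem: inverting polylog over linear} with $r=1$ (so $c_{1}=2$) gives $v\ge 2\Lambda_{1}(\cdot)$, and squaring followed by $n=\tfrac{36\zeta}{|{\cal Y}|}m$ reinstates the factor $\zeta$ and produces a bound of the clean form $\frac{144\zeta}{|{\cal Y}|}\Lambda_{1}^{2}(\cdot)$, matching the prefactor of the first term of \eqref{eq: u TV simplified}. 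Taking the maximum of the three resulting lower bounds on $n$ gives exactly $\tilde{\Gamma}_{\text{tv}}(\alpha,\zeta,\Delta,t)$, which completes the argument. Beyond choosing the correct substitution for the first term, the only remaining effort is the bookkeeping of constants and keeping track of the three different exponents $r\in\{1,\tfrac{4}{3},2\}$; I expect this to be routine, with the first term's coupling of $\zeta$ to both factors being the sole genuinely delicate point.
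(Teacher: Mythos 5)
Your proposal is correct and follows essentially the same route as the paper's proof: the same equal split of the bound into three conditions of size $\Delta/6$, the same changes of variables ($m=n|{\cal Y}|/(36\zeta)$, $x=n|{\cal Y}|^{2}/9$, $x=n|{\cal Y}|^{2/3}$) reducing each summand to the canonical form $\log^{r}(x)/x\leq y$, and the same inversion via Lemma \ref{lem: inverting polylog over linear} with $r\in\{1,4/3,2\}$ and constants $c_{1}=2$, $c_{4/3}=3$, $c_{2}=15$. The only (immaterial) difference is in the first term, where your bookkeeping yields the threshold $\Delta/(6|{\cal Y}|)$ for $\log(v)/v$ rather than the $3\Delta/(2|{\cal Y}|)$ displayed in the paper, which changes the argument of $\Lambda_{1}$ by a constant factor --- a constants-level discrepancy that the paper itself explicitly tolerates.
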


\begin{proof}
We may assume\footnote{This assumption can be easily achieved if the player plays each arm 3 times at the beginning of Algorithm \ref{alg:A-UCB-general}.} that $n\geq e$. Then, $\ucb_{\text{tv}}(\zeta,{\cal Y},\delta,n)\leq\Delta/2$
if all three conditions hold\footnote{Since there are three terms involved, we do not over-complicate the
analysis with additional parameter $\beta$ (see the proof of Lemma \ref{lem: large number of samples implies UCB smaller than gap Bernoulli}). }
\[
3\sqrt{\frac{\zeta|{\cal Y}|}{n}}\log\left(\frac{n|{\cal Y}|}{36\zeta}\right)\leq\Delta/6,
\]
and 
\[
\frac{3}{2}\sqrt{\frac{\log\left(\frac{2}{\delta}\right)}{n}}\log\left(\frac{n|{\cal Y}|^{2}}{9}\right)\leq\Delta/6,
\]
as well as
\[
\frac{2|{\cal Y}|^{1/2}\log^{1/4}\left(\frac{2}{\delta}\right)\log\left(n|{\cal Y}|^{2/3}\right)}{n^{3/4}}\leq\Delta/6.
\]
For the first condition, we write it equivalently as
\[
\frac{\log\left(\sqrt{\frac{n|{\cal Y}|}{36\zeta}}\right)}{\sqrt{\frac{n|{\cal Y}|}{36\zeta}}}\leq\frac{3\Delta}{2|{\cal Y}|}.
\]
Lemma \ref{lem: inverting polylog over linear} then implies that this condition is satisfied
if
\[
n\geq144\frac{\zeta}{|{\cal Y}|}\Lambda_{1}^{2}\left(\frac{2|{\cal Y}|}{3\Delta}\right).
\]
For the second condition, we write it equivalently as
\[
\frac{\log^{2}\left(\frac{n|{\cal Y}|^{2}}{9}\right)}{\frac{n|{\cal Y}|^{2}}{9}}\leq\frac{\Delta^{2}}{9|{\cal Y}|^{2}\log\left(\frac{2}{\delta}\right)}.
\]
Lemma \ref{lem: inverting polylog over linear} implies that this condition is satisfied if
\[
n\geq\frac{135}{|{\cal Y}|^{2}}\Lambda_{2}\left(\frac{9|{\cal Y}|^{2}\log\left(\frac{2}{\delta}\right)}{\Delta^{2}}\right).
\]
For the last condition, we first require a slightly stronger condition
(in terms of the numerical constant) 
\[
\frac{\log^{4/3}\left(n|{\cal Y}|^{2/3}\right)}{n|{\cal Y}|^{2/3}}\leq\frac{\Delta^{4/3}}{27|{\cal Y}|^{4/3}\log^{1/3}\left(\frac{2}{\delta}\right)}.
\]
Lemma \ref{lem: inverting polylog over linear} implies that this condition is satisfied if
\[
n\geq\frac{3}{|{\cal Y}|^{2/3}}\Lambda_{4/3}\left(\frac{27|{\cal Y}|^{4/3}\log^{1/3}\left(\frac{2}{\delta}\right)}{\Delta^{4/3}}\right).
\]
The claim of the lemma then follows from the definition of $\tilde{\Gamma}_{\text{tv}}(\cdot)$
in (\ref{eq: u TV simplified}). 
\end{proof}
We may now prove Theorem \ref{thm: UCB-TV regret}. 
\begin{proof}[Proof of Theorem \ref{thm: UCB-TV regret}]
 The proof begins as the proof of Theorem \ref{thm: UCB-Bernoulli regret}.
We then define the events: \renewcommand{\labelenumi}{\Roman{enumi}''.}
\renewcommand{\theenumi}{\Roman{enumi}''}
\begin{enumerate}
\item \label{item:first_error general} Either the entropy of the best arm
is significantly underestimated
\[
\hat{H}(\boldsymbol{X}_{i^{*}}(t-1),N_{i^{*}}(t-1)))+\ucb(\boldsymbol{X}_{i^{*}}(t-1),\delta_{\alpha}(t),N_{i^{*}}(t-1))\leq H_{i^{*}},
\]
or 
\[
\hat{\zeta}\left(\boldsymbol{X}_{i^{*}}(t-1),N_{i^{*}}(t-1)\right)-\zeta(p_{i^{*}})\leq-\sqrt{\frac{18\log\left(\frac{1}{\delta}\right)}{N_{i^{*}}(t-1)}}.
\]
\item \label{item:second_error general} Either the entropy of arm $i$
is significantly overestimated
\[
\hat{H}(\boldsymbol{X}_{i}(t-1),N_{i}(t-1)))>H_{i}+\ucb(\boldsymbol{X}_{i}(t-1),\delta_{\alpha}(t),N_{i}(t-1)),
\]
or 
\[
\hat{\zeta}\left(\boldsymbol{X}_{i}(t-1),N_{i}(t-1)\right)-\zeta(p_{i})\geq\sqrt{\frac{18\log\left(\frac{1}{\delta}\right)}{N_{i}(t-1)}}+\frac{1}{N_{i}(t-1)}.
\]
\item \label{item:third_error general} The upper confidence interval, which
is based on an overestimation of $\hat{\zeta}(\boldsymbol{X}_{i}(t-1),N_{i}(t-1))$
is significantly larger than the gap:
\[
\ucb_{\text{tv}}\left(\zeta(p_{i})+\sqrt{\frac{18\log\left(\frac{1}{\delta}\right)}{N_{i}(t-1)}},\delta_{\alpha}(t),N_{i}(t-1)\right)>\frac{\Delta_{i}}{2},
\]
or 
\begin{equation}
N_{i}(t-1)\leq\max\left\{ 30\cdot\log\left(\frac{2}{\delta}\right),119\zeta(p_{i})|{\cal Y}|\right\} .\label{eq: third condition general - number of samples}
\end{equation}
\end{enumerate}
As in the proof of Theorem \ref{thm: UCB-Bias regret}, if all three
events \ref{item:first_error general}-\ref{item:third_error general}
are false, then 
\begin{align}
 & \hat{H}(\boldsymbol{X}_{i^{*}}(t-1),N_{i^{*}}(t-1)))+\ucb(\boldsymbol{X}_{i^{*}}(t-1),\delta_{\alpha}(t),N_{i^{*}}(t-1)) \nonumber\\
 & \geq H_{i^{*}}=H_{i}+\Delta_{i}\\
 & \geq H_{i}+2\cdot\ucb_{\text{tv}}\left(\zeta(p_{i})+\sqrt{\frac{18\log\left(\frac{1}{\delta}\right)}{N_{i}(t-1)}},\delta_{\alpha}(t),N_{i}(t-1)\right)\\
 & \trre[\geq,*]H_{i}+2\cdot\ucb\left(\boldsymbol{X}_{i}(t-1),\delta_{\alpha}(t),N_{i}(t-1)\right)\\
 & \geq\hat{H}(\boldsymbol{X}_{i}(t-1),N_{i}(t-1)))+\ucb\left(\boldsymbol{X}_{i}(t-1),\delta_{\alpha}(t),N_{i}(t-1)\right),
\end{align}
where in $(*)$ we have used the current assumption that (\ref{eq: third condition general - number of samples})
does not hold, which assures that $\ucb_{\text{tv}}(\zeta,\delta_{\alpha}(t),N_{i}(t-1))$
is a monotonically non-decreasing function of $\zeta$. Thus, in this
case Algorithm \ref{alg:A-UCB-general} will not choose $I_{t}=i$
at the $t$th round; a contradiction. 

By Lemma \ref{lem: large number of samples implies UCB smaller than gap general}
if
\[
N_{i}(t-1)\geq\tilde{\Gamma}_{\text{tv}}\left(\alpha,\zeta(p_{i})+\sqrt{\frac{18\log\left(\frac{1}{\delta}\right)}{N_{i}(t-1)}},\Delta_{i},t\right),
\]
then event \ref{item:third_error general} does not occur. By the
definition of $\tilde{\Gamma}_{\text{tv}}(\cdot)$ in (\ref{eq: u TV simplified}),
and by setting $\delta_{\alpha}(t)=t^{-\alpha}$, the RHS in the last equation is upper bounded as \textbf{
\begin{multline}
\max\Bigg\{144\frac{\zeta(p_{i})}{|{\cal Y}|}\Lambda_{1}^{2}\left(\frac{2|{\cal Y}|}{3\Delta_{i}}\right)+576\frac{\sqrt{18\log\left(\frac{1}{\delta}\right)}}{|{\cal Y}|\sqrt{N_{i}(t-1)}}\Lambda_{1}^{2}\left(\frac{2|{\cal Y}|}{3\Delta_{i}}\right)\\
,\;\frac{135}{|{\cal Y}|^{2}}\Lambda_{2}\left(\frac{9|{\cal Y}|^{2}\alpha\log(t)}{\Delta_{i}^{2}}\right),\;\frac{3}{|{\cal Y}|^{2/3}}\Lambda_{4/3}\left(\frac{27|{\cal Y}|^{4/3}\alpha^{1/3}\log^{1/3}(t)}{\Delta_{i}^{4/3}}\right)\Bigg\}.
\end{multline}
}This can be guaranteed by requiring that $N_{i}(t-1)$ is larger
than twice of each of the additive components of the first term, as
well as larger than each of the second and third terms. To conclude,
a sufficient condition for the event \ref{item:third_error general}
not to occur is that \textbf{
\begin{multline}
N_{i}(t-1)\geq
\max\Bigg\{288\frac{\zeta(p_{i})}{|{\cal Y}|}\Lambda_{1}^{2}\left(\frac{2|{\cal Y}|}{3\Delta_{i}}\right),\;36230\frac{\alpha^{1/3}\log^{1/3}(t)}{|{\cal Y}|^{2/3}}\Lambda_{1}^{4/3}\left(\frac{2|{\cal Y}|}{3\Delta_{i}}\right)\\
,\;\frac{135}{|{\cal Y}|^{2}}\Lambda_{2}\left(\frac{9|{\cal Y}|^{2}\alpha\log(t)}{\Delta_{i}^{2}}\right),\;\frac{3}{|{\cal Y}|^{2/3}}\Lambda_{4/3}\left(\frac{27|{\cal Y}|^{4/3}\alpha^{1/3}\log^{1/3}(t)}{\Delta_{i}^{4/3}}\right)\Bigg\}.\label{eq: a preliminary version of u tv}
\end{multline}
}This condition, along with the second part of event \ref{item:third_error general}
is then used to define $\Gamma_{\text{ber}}(\alpha,\zeta(p_{i}),\Delta_{i},t)$
in (\ref{eq: u tv}). The analysis then follows as in the proof of
Theorem \ref{thm: UCB-Bernoulli regret}, by using Lemma \ref{lem: total variation empirical concentration}
and Proposition \ref{prop: TV plug-in estimator} to bound the probabilities
of the events in \ref{item:first_error general} and \ref{item:second_error general}.\textbf{
}\renewcommand{\labelenumi}{\arabic{enumi}.}
\renewcommand{\theenumi}{\arabic{enumi}}
\end{proof}

\section{Proofs for Section \ref{sec:unknown_support}}\label{append: proofs sec unknown_support}
The proof of Proposition \ref{prop:concent_ineq_support} relies on the following two lemmas which we utilize in upper bounding the probability $\P(|\hat{S}(\boldsymbol{Y},n)-S(P)|>\tau)$.

\begin{lem}\label{lem:bound_exp_support_est}
Let $n\geq 1$, 
\[S(P)\cdot\left[1-\exp\left(-\frac{n}{\kappa}\right)\right]\leq \E(\hat{S}(\boldsymbol{Y},n))\leq S(P).\]
\end{lem}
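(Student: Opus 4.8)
The plan is to expand the plug-in estimator as a sum of indicators and pass to the expectation via linearity. Writing $\hat{S}(\boldsymbol{Y},n)=\sum_{a\in\mathbb{N}_+}\I[N_a(n)>0]$, and observing that $\I[N_a(n)>0]=0$ deterministically whenever $p(a)=0$, I would restrict the sum to $a\in\mathcal{S}(P)$ and compute
\[
\E(\hat{S}(\boldsymbol{Y},n))=\sum_{a\in\mathcal{S}(P)}\P[N_a(n)>0]=\sum_{a\in\mathcal{S}(P)}\left(1-(1-p(a))^n\right),
\]
where the last equality uses that $N_a(n)=0$ exactly when none of the $n$ IID samples equals $a$, an event of probability $(1-p(a))^n$.

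For the upper bound, I would simply note that each summand satisfies $1-(1-p(a))^n\leq1$, and since there are $S(P)=|\mathcal{S}(P)|$ terms, the sum is at most $S(P)$. For the lower bound, I would invoke the standing assumption $\min_{a}\{P(a):P(a)>0\}\geq\tfrac{1}{\kappa}$, which gives $p(a)\geq\tfrac{1}{\kappa}$ for every $a\in\mathcal{S}(P)$, hence $1-p(a)\leq1-\tfrac{1}{\kappa}$. Combining this with the elementary inequality $1-x\leq e^{-x}$ yields $(1-p(a))^n\leq(1-\tfrac{1}{\kappa})^n\leq e^{-n/\kappa}$, so each summand obeys $1-(1-p(a))^n\geq1-e^{-n/\kappa}$. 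Summing over the $S(P)$ support elements then delivers the claimed lower bound $S(P)\cdot(1-e^{-n/\kappa})$.

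There is no substantive obstacle here: the result is a direct consequence of the coupon-collector-type computation of the expected number of observed symbols, together with the minimal-probability assumption that bounds each per-symbol miss probability uniformly by $e^{-n/\kappa}$. The only point requiring a moment of care is ensuring the lower bound on $p(a)$ is applied only over the genuine support $\mathcal{S}(P)$ (outside of which the indicators vanish identically), so that the uniform bound $p(a)\geq\kappa^{-1}$ is legitimately in force for every term that actually contributes to the sum.
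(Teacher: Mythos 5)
Your proof is correct and follows essentially the same route as the paper's: both expand $\E(\hat{S}(\boldsymbol{Y},n))$ as $\sum_{a\in\mathcal{S}(P)}\left(1-(1-p(a))^{n}\right)$, obtain the upper bound from the trivial fact that each summand is at most $1$, and obtain the lower bound from $p(a)\geq\kappa^{-1}$ combined with $1-x\leq e^{-x}$. No gaps.
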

\begin{proof}
Without loss of generality, let us order the symbols in the support $|\mathcal{S}(P)|$ and denote them by $s_1,\ldots,s_{S(P)}$.
First, we note that
 \begin{align}
 \E(\hat{S}(\boldsymbol{Y},n)) &=\E\left(\sum_{j=1}^{S(P)}\mathbbm{1}[N_{s_j}(n)>0]>0\right)\\
 &=\sum_{j=1}^{S(P)}\left(\P\left(\mathbbm{1}[N_{s_j}(n)>0]>0\right)\right)\\
 &=\sum_{j=1}^{S(P)}\left(1-\P\left(\mathbbm{1}[N_{s_j}(n)>0]=0\right)\right).
 \end{align}
 
 Next, we lower  and upper bound the probability $\P\left(\mathbbm{1}[N_{s_j}(n)>0]=0\right)$.
 For the lower bound we use the trivial bound $\P\left(\mathbbm{1}[N_{s_j}(n)>0]=0\right)\geq0$.
 For the upper bound we use the assumption that $P(s_j)\geq\frac{1}{\kappa}$ to deduce
 \begin{flalign}
  \P\left(\mathbbm{1}[N_{s_j}(n)>0]=0\right) = (1-P(s_j))^n\leq \left(1-\frac{1}{\kappa}\right)^n\leq \exp\left(-\frac{n}{\kappa}\right),
 \end{flalign}
where the last inequality follows since $\log(1-x)\leq -x$.
\end{proof}

\begin{lem}\label{lem:concent_ineq_suppor_size}
Let $n\geq 1$, then 
\[\P\left(|\hat{S}(\boldsymbol{Y},n)-E(\hat{S}(\boldsymbol{Y},n))|\geq \epsilon\right)\leq \exp\left(-2\epsilon^2\right).\]
\end{lem}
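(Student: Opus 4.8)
The plan is to derive the concentration of $\hat S(\boldsymbol Y,n)$ about its mean from McDiarmid's bounded-difference inequality, in the same spirit as the total-variation functional was handled in Lemma~\ref{lemma:total_variation_empirical_convergence}. I would regard $\hat S(\boldsymbol Y,n)=\sum_{a\in\mathbb N_+}\mathbbm 1[N_a(n)>0]$, the number of \emph{distinct} symbols observed, as a function $f(Y_1,\dots,Y_n)$ of the $n$ independent samples, and study how much $f$ can move when a single coordinate is perturbed.

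The key, and essentially only nontrivial, step is to verify the bounded-difference property: fixing all samples except the $i$-th and replacing $Y_i$ by an arbitrary alternative value changes the number of distinct observed symbols by at most $1$. This follows from a short case analysis. Deleting the old value $Y_i=a$ can remove $a$ from the observed set only when $Y_i$ was its unique occurrence, lowering $f$ by at most $1$; inserting the new value $Y_i=b$ can add $b$ only when $b$ did not already appear among the remaining samples, raising $f$ by at most $1$. Examining the four combinations of these two events (each of $a,b$ occurring or not elsewhere) shows that the net change always lies in $\{-1,0,1\}$, so $f$ is Lipschitz in Hamming distance with the bounded-difference constant governing the exponent.

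With this property in hand, McDiarmid's inequality \cite[Thm.~3.11]{van2014probability} applied to $f$ yields the stated sub-Gaussian tail $\P(|\hat S(\boldsymbol Y,n)-\E(\hat S(\boldsymbol Y,n))|\ge\epsilon)\le\exp(-2\epsilon^2)$. I would note that only the lower-tail direction $\hat S(\boldsymbol Y,n)\ge\E(\hat S(\boldsymbol Y,n))-\epsilon$ is actually used downstream: combined with the mean lower bound $\E(\hat S(\boldsymbol Y,n))\ge S(P)\,[1-e^{-n/\kappa}]$ of Lemma~\ref{lem:bound_exp_support_est}, the deterministic upper bound $\hat S(\boldsymbol Y,n)\le S(P)$, and the choice $\epsilon=\sqrt{\tfrac12\log(1/\delta)}$, it produces the confidence interval of Proposition~\ref{prop:concent_ineq_support}. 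The main obstacle is entirely the bounded-difference verification above; once the Lipschitz-in-Hamming behavior of $\hat S$ is pinned down, the tail bound is a direct invocation of McDiarmid and needs no further estimation.
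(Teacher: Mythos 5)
Your approach is exactly the paper's: the published proof is a one-line appeal to McDiarmid's inequality together with the observation that changing a single sample changes $\hat{S}(\boldsymbol{Y},n)$ by at most one, and you reproduce this, additionally spelling out the case analysis behind the bounded-difference property that the paper leaves implicit. In that sense the proposal matches the intended argument.

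One caveat worth flagging, which applies equally to the paper's own proof: McDiarmid's inequality with bounded-difference constants $c_i=1$ over $n$ independent coordinates gives $\P\left(|\hat{S}(\boldsymbol{Y},n)-\E(\hat{S}(\boldsymbol{Y},n))|\geq\epsilon\right)\leq 2\exp\left(-2\epsilon^{2}/\sum_{i=1}^{n}c_i^{2}\right)=2\exp\left(-2\epsilon^{2}/n\right)$, not $\exp\left(-2\epsilon^{2}\right)$. The stated exponent would require $\sum_i c_i^2=1$, i.e.\ difference constants of order $1/\sqrt{n}$, which the distinct-symbol count does not satisfy. So the final step of your derivation (and of the paper's) does not actually produce the claimed tail; the deviation term in Proposition \ref{prop:concent_ineq_support} should correspondingly read $\sqrt{\tfrac{n}{2}\log(\tfrac{1}{\delta})}$ (up to the two-sided factor of $2$) if one insists on plain McDiarmid. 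Your bounded-difference verification is correct and is the only nontrivial content of the lemma; the gap is purely in the normalization of the exponent, inherited from the statement you were asked to prove.
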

\begin{IEEEproof}
This follows directly from  McDiarmid's inequality and since changing the outcome of one sample changes $\hat{S}(\boldsymbol{Y},n)$ by at most one.
\end{IEEEproof}
We proceed to prove Lemma \ref{prop:concent_ineq_support}.
\begin{IEEEproof}[Proof of Proposition \ref{prop:concent_ineq_support}]
We wish to bound the probability $\P(|\hat{S}(\boldsymbol{Y},n)-S(P)|>\tau)$.
By the triangle inequality
\begin{flalign}\label{eq:prob_triangle_support_dev}
 \P(|\hat{S}(\boldsymbol{Y},n)-S(P)|>\tau)\leq \P\left(|\hat{S}(\boldsymbol{Y},n)-\E(\hat{S}(\boldsymbol{Y},n))|+|\E(\hat{S}(\boldsymbol{Y},n))-S(P)|>\tau\right).
\end{flalign}
Thus, we next upper bound the gap $|\E(\hat{S}(\boldsymbol{Y},n))-S(P)|$ and the probability $\P(|\hat{S}(\boldsymbol{Y},n)-\E(\hat{S}(\boldsymbol{Y},n))|>\tilde{\tau})$. By Lemma \ref{lem:bound_exp_support_est}
\[|\E(\hat{S}(\boldsymbol{Y},n))-S(P)|\leq S(P)\exp\left(-\frac{n}{\kappa}\right).\]
Additionally, we can upper bound the probability $\P[|\hat{S}(\boldsymbol{Y},n)-E(\hat{S}(\boldsymbol{Y},n))|>\tau-S(P)\exp\left(-\frac{n}{\kappa}\right)]$
using Lemma \ref{lem:concent_ineq_suppor_size}. We then substitute $\tau=\sqrt{\frac{1}{2}\log\left(\frac{1}{\delta}\right)}+S(P)\exp\left(-\frac{n}{\kappa}\right)$ to conclude the proof.
\end{IEEEproof}

We are now ready to prove  Proposition \ref{prop:UCB_error_prob_etropy_est_unkown_support}.
\begin{IEEEproof}[Proof of Proposition \ref{prop:UCB_error_prob_etropy_est_unkown_support}]

 From Proposition 12, it holds that the event
\[
{\cal E}\dfn\left\{ \frac{\hat{S}(\boldsymbol{Y},n)+\sqrt{\frac{1}{2}\log\left(\frac{1}{\delta}\right)}}{1-e^{-\frac{n}{\kappa}}}\geq S(P)\right\} 
\]
occurs with probability $\P[{\cal E}]\geq1-\delta$. Thus, 
\begin{align}
 & \P\left[\left|H(\hat{p}(n))-H(p)\right|\geq B_{\text{SE}}(\boldsymbol{Y},\delta,n)+\sqrt{\frac{2\log^{2}(n)}{n}\log\left(\frac{2}{\delta}\right)}\right] \nonumber\\
 & =\P\left[\left|H(\hat{p}(n))-H(p)\right|\geq B_{\text{SE}}(\boldsymbol{Y},\delta,n)+\sqrt{\frac{2\log^{2}(n)}{n}\log\left(\frac{2}{\delta}\right)},{\cal E}\right] \nonumber\\
 & \hphantom{==}+\P\left[\left|H(\hat{p}(n))-H(p)\right|\geq B_{\text{SE}}(\boldsymbol{Y},\delta,n)+\sqrt{\frac{2\log^{2}(n)}{n}\log\left(\frac{2}{\delta}\right)},{\cal E}^{c}\right]\\
 & \leq\P\left[\left|H(\hat{p}(n))-H(p)\right|\geq\frac{1}{2}\log\left(1+\frac{S(P)-1}{n}\right)+\sqrt{\frac{2\log^{2}(n)}{n}\log\left(\frac{2}{\delta}\right)},{\cal E}\right]+\P\left[{\cal E}^{c}\right]\\
 & \leq\P\left[\left|H(\hat{p}(n))-H(p)\right|\geq\frac{1}{2}\log\left(1+\frac{S(P)-1}{n}\right)+\sqrt{\frac{2\log^{2}(n)}{n}\log\left(\frac{2}{\delta}\right)}\right]+\delta\\
 & \leq2\delta.
\end{align}

\end{IEEEproof}

Next, we present an upper bound on the number maximal number of rounds, i.e., $n$, such that $\ucb_{\text{SE}}(\boldsymbol{Y},t^{-\alpha},n)\leq\Delta/2$.

\begin{lem}\label{lemma:lower_bound_arm_sample_support}
Let a support ${\cal S}$ such that $|{\cal S}|=S\leq\kappa$ be given, let a gap $\Delta\in(0,\log (S)]$
be given, and let $\delta=t^{-\alpha}$. Then, for any $\beta\in(0,1)$,
if $n\geq\Gamma_{\text{\emph{SE}}}(\alpha,\beta,S,\kappa,\Delta,t)$ then
$\ucb_{\text{\emph{SE}}}(\boldsymbol{Y},t^{-\alpha},n)\leq\Delta/2$ for every $\boldsymbol{Y}\in {\cal S}^n$.
\end{lem}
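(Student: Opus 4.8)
The plan is to mirror the proof of Lemma~\ref{lem: large number of samples implies UCB smaller than gap} for the bias-corrected estimator, splitting the confidence deviation into its two additive pieces. Since $\ucb_{\text{SE}}(\boldsymbol{Y},t^{-\alpha},n)=B_{\text{SE}}(\boldsymbol{Y},t^{-\alpha},n)+\sqrt{\frac{2\log^{2}(n)}{n}\log(2t^{\alpha})}$, I would introduce the parameter $\beta\in(0,1)$ and require
\begin{equation*}
B_{\text{SE}}(\boldsymbol{Y},t^{-\alpha},n)\leq\frac{\beta\Delta}{2},\qquad\sqrt{\frac{2\log^{2}(n)}{n}\log\left(2t^{\alpha}\right)}\leq\frac{(1-\beta)\Delta}{2},
\end{equation*}
so that their sum is at most $\Delta/2$, which is the desired conclusion.

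The second inequality is verbatim the one already handled in Lemma~\ref{lem: large number of samples implies UCB smaller than gap}: it is equivalent to $\frac{\log^{2}(n)}{n}\leq\frac{(1-\beta)^{2}\Delta^{2}}{8\log(2t^{\alpha})}$, and Lemma~\ref{lem: inverting polylog over linear} (with $r=2$, $c_{2}=15$) shows it holds once $n\geq15\cdot\Lambda_{2}\!\left(\frac{8\log(2t^{\alpha})}{(1-\beta)^{2}\Delta^{2}}\right)$, which is precisely the second term in the definition of $\Gamma_{\text{SE}}$ in \eqref{eq: u bias SE}. Hence the entire nontrivial work lies in the bias term.

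For the bias term I would exploit the fact that $\boldsymbol{Y}\in{\cal S}^{n}$ forces the plug-in support estimate to obey $\hat{S}(\boldsymbol{Y},n)\leq S$ \emph{deterministically}, which is what lets the bound hold pointwise over $\boldsymbol{Y}$ rather than only with high probability. Substituting $\delta=t^{-\alpha}$ (so $\sqrt{\tfrac12\log(1/\delta)}=\sqrt{\alpha\log(t)/2}$) and writing $A\dfn S+\sqrt{\alpha\log(t)/2}$ and $c\dfn e^{\beta\Delta/2}-1$, monotonicity of $\log$ gives $B_{\text{SE}}(\boldsymbol{Y},t^{-\alpha},n)\leq\log\!\big(1+\tfrac1n[A(1-e^{-n/\kappa})^{-1}-1]\big)$, so that $B_{\text{SE}}\leq\beta\Delta/2$ is implied by the clean requirement $A(1-e^{-n/\kappa})^{-1}\leq nc+1$. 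The genuinely new factor, compared with Section~\ref{sec: Bias corrected UCB}, is the support-estimation correction $(1-e^{-n/\kappa})^{-1}$; I would tame it with the elementary bound $e^{x}\geq1+x$, which gives $(1-e^{-n/\kappa})^{-1}\leq1+\kappa/n$. The requirement then reduces to the quadratic inequality $n^{2}c-(A-1)n-A\kappa\geq0$ in $n$, whose positive root, after the standard splittings $\sqrt{a+b}\leq\sqrt{a}+\sqrt{b}$ and $\sqrt{S+\sqrt{\alpha\log(t)/2}}\leq\sqrt{S}+(\alpha\log(t)/2)^{1/4}$, is dominated by $\frac{2\sqrt{\kappa}\,(\sqrt{S}+(\alpha\log(t)/2)^{1/4})}{c\vee\sqrt{c}}$ — exactly the first term of \eqref{eq: u bias SE}.

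The main obstacle is this last manipulation of the bias term, namely the accounting of the correction factor $(1-e^{-n/\kappa})^{-1}$ and the emergence of the $c\vee\sqrt{c}$ denominator. Heuristically the quadratic's root behaves like $\sqrt{\kappa A/c}$ when $\kappa/n$ dominates (the $\sqrt{c}$ branch, i.e.\ small gaps) and like $A/c$ when the correction factor is negligible (the $c$ branch, i.e.\ large gaps); the operation $c\vee\sqrt{c}$ is what merges these two scalings into a single threshold, and one must choose which slackness to absorb into the constants so that the $\sqrt{\kappa}$-scaling (rather than the weaker $\kappa$-scaling of Section~\ref{sec: Bias corrected UCB}) survives. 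The key point to verify carefully is that the resulting threshold indeed lands in the targeted finite-time regime $\sqrt{\alpha\log(t)/2}<S<\kappa$ with $n=o(\kappa\log\kappa)$, so that $\kappa/n$ is the dominant contribution; the remaining bookkeeping — monotonicity of the thresholds and the two invocations of Lemma~\ref{lem: inverting polylog over linear} — is routine.
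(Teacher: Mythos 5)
Your plan coincides with the paper's proof in every essential respect: the split into $B_{\text{SE}}\leq\beta\Delta/2$ and $\sqrt{2\log^{2}(n)\log(2t^{\alpha})/n}\leq(1-\beta)\Delta/2$, the observation that $\boldsymbol{Y}\in{\cal S}^{n}$ forces $\hat{S}(\boldsymbol{Y},n)\leq S$ deterministically, the exponentiation of the bias condition to $A\left(1-e^{-n/\kappa}\right)^{-1}\leq nc+1$ with $A=S+\sqrt{\alpha\log(t)/2}$ and $c=e^{\beta\Delta/2}-1$, and the verbatim reuse of Lemma \ref{lem: large number of samples implies UCB smaller than gap} for the second term. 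The one genuine (and slightly cleaner) divergence is how you tame the correction factor: you use $\left(1-e^{-y}\right)^{-1}\leq 1+1/y$ uniformly and solve a single quadratic, whereas the paper splits into $n\leq\kappa$ (using $1-e^{-y}\geq y/2$, yielding $n\geq\sqrt{2\kappa A/c}$) and $n\geq\kappa$ (using $1-e^{-n/\kappa}\geq 1/2$, yielding $n\geq 2A/c$) and takes the maximum; your quadratic root $\frac{A}{c}+\sqrt{\frac{A\kappa}{c}}$ reproduces exactly these two scalings, so the routes are equivalent up to constants. The one step that does not go through as written is the final domination: to absorb the $A/c$ branch into the $\sqrt{\kappa}\bigl(\sqrt{S}+(\alpha\log(t)/2)^{1/4}\bigr)$ form you must invoke $A\leq 2\kappa$ (from $S\leq\kappa$ and the standing assumption $\sqrt{\alpha\log(t)/2}<\kappa$), and the resulting denominator is $\min\{c,\sqrt{c}\}$, not $c\vee\sqrt{c}$ --- dividing by the maximum produces a \emph{smaller} threshold than the argument delivers (take $c=4$ and $A=\kappa$: then $\frac{A}{c}+\sqrt{A\kappa/c}=\frac{3}{4}\kappa$ exceeds $2\sqrt{\kappa A}/(c\vee\sqrt{c})=\frac{1}{2}\kappa$). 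The paper's own proof indeed concludes with $\min\{e^{\beta\Delta/2}-1,\sqrt{e^{\beta\Delta/2}-1}\}$, so the $\vee$ in \eqref{eq: u bias SE} is evidently a typo for $\wedge$; your write-up should land on the $\min$ as well, after which everything closes.
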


\begin{IEEEproof}
By the definition of $\ucb_{\text{SE}}(\boldsymbol{Y},\delta,n)$ we have that if
\begin{flalign}
B_{\text{SE}}(\boldsymbol{Y},\delta,n)\leq \beta\cdot\frac{\Delta}{2},
\label{eq:first_cond_support_est_regret}
\end{flalign}
and
\begin{flalign}
\sqrt{\frac{2\log^{2}(n)}{n}\log\left(\frac{2}{\delta}\right)}\leq(1-\beta)\cdot\frac{\Delta}{2},
\label{eq:second_cond_support_est_regret}
\end{flalign}
then $\ucb_{\text{SE}}(\boldsymbol{Y},\delta,n)\leq\Delta_i/2$.

First, we analyze the first condition, i.e., \eqref{eq:first_cond_support_est_regret}. 
Since $\hat{S}(\boldsymbol{Y},n)\leq S$,  it is fulfilled whenever
\begin{flalign}
\log\left(1+\frac{1}{n}\left[\left(S+\sqrt{\frac{1}{2}\log\left(\frac{1}{\delta}\right)}\right)\cdot\left(1-e^{-\frac{n}{\kappa}}\right)^{-1}-1\right]\right)\leq\beta\cdot\frac{\Delta}{2}.
\end{flalign}
Further algebra yields that $B_{\text{SE}}(\boldsymbol{Y},\delta,n)\leq \beta\cdot\frac{\Delta}{2}$ for all $n$ such that
\begin{flalign}\label{eq:UCB_SE_first_cond_I_S}
 \frac{1}{e^{\beta\Delta/2}-1}\left(S+\sqrt{\frac{1}{2}\log\left(\frac{1}{\delta}\right)}\right)\leq \left(n+ \frac{1}{e^{\beta\Delta/2}-1}\right)\left(1-e^{-\frac{n}{\kappa}}\right).
\end{flalign}
Next, utilizing the bound $1-e^{-x}\geq\frac{x}{2}$ which holds for every $x\in [0,1]$, we replace the condition \eqref{eq:UCB_SE_first_cond_I_S} with the following stricter condition 
\begin{flalign}\label{eq:UCB_SE_first_cond_I_exp_bound}
 \frac{1}{e^{\beta\Delta/2}-1}\left(S+\sqrt{\frac{1}{2}\log\left(\frac{1}{\delta}\right)}\right)\leq \left(n+ \frac{1}{e^{\beta\Delta/2}-1}\right)\frac{n}{2\kappa},\quad \text{ and }\quad \frac{n}{\kappa}\leq 1.
\end{flalign}
Substituting $\delta = t^{-\alpha}$ and replacing the RHS  of \eqref{eq:UCB_SE_first_cond_I_exp_bound}, i.e. $\left(n+ \frac{1}{e^{\beta\Delta/2}-1}\right)\frac{n}{2\kappa}$, with the stricter condition $\frac{n^2}{2\kappa}$ on $n$ yields
that \eqref{eq:UCB_SE_first_cond_I_exp_bound} is fulfilled whenever
\begin{flalign}\label{eq:UCB_SE_first_cond_I_exp_bound_final}
 n\geq \sqrt{\frac{2\kappa\left(S+\sqrt{\frac{\alpha}{2}\log(t)}\right)}{e^{\beta\Delta/2}-1}},\quad \text{ and }\quad \frac{n}{\kappa}\leq 1.
\end{flalign}

Note that since the RHS of \eqref{eq:UCB_SE_first_cond_I_S} is monotonically increasing with $n$, it is sufficient to find $n_0$ that fulfills \eqref{eq:UCB_SE_first_cond_I_exp_bound_final} to conclude that every $n\geq n_0$ fulfills \eqref{eq:UCB_SE_first_cond_I_S} too.

Now, we consider the case where $n\geq\kappa$. In this scenario $1-e^{-\frac{n}{\kappa}}\geq 1-e^{-1}\geq 1/2$.
Plugging this bound, we have that \eqref{eq:UCB_SE_first_cond_I_S} is fulfilled in this case whenever $n\geq 2\cdot\frac{S+\sqrt{\alpha\log(t)/2}}{e^{\beta\Delta/2}-1}$.

Overall, we have that \eqref{eq:UCB_SE_first_cond_I_S} is fulfilled for
\begin{flalign}
    n\geq 2\sqrt{\max\left\{\kappa,\frac{S+\sqrt{\alpha\log(t)/2}}{e^{\beta\Delta/2}-1}\right\}\cdot\frac{S+\sqrt{\alpha\log(t)/2}}{e^{\beta\Delta/2}-1}}.
\end{flalign}
Since $\sqrt{\alpha\log(t)/2}<\kappa$, we can simplify the bound and conclude that \eqref{eq:UCB_SE_first_cond_I_S} is fulfilled by all $n$ such that
\begin{flalign}
n\geq\frac{2\sqrt{\kappa}\left(\sqrt{S}+(\alpha\log(t)/2)^{\frac{1}{4}}\right)}{\min\{e^{\beta\Delta/2}-1,\sqrt{e^{\beta\Delta/2}-1}\}}.
\end{flalign}

Now, we focus on the second condition, i.e., \eqref{eq:second_cond_support_est_regret}. Observe that the second condition is exactly \eqref{eq: smaller than gap second condition bias} which is investigated in the proof of Lemma \ref{lem: large number of samples implies UCB smaller than gap}. Recall the notation (\ref{eq: linear-times-polylog}), it follows that the second condition holds for all $n$ such that
\[
n\geq15\cdot\Lambda_{2}\left(\frac{8\cdot\log(2t^{\alpha})}{(1-\beta)^{2}\Delta^{2}}\right).
\]

\end{IEEEproof}

We are now ready to prove Theorem \ref{thm:upper_regret_uknown_support}.

\begin{IEEEproof}[Proof of Theorem \ref{thm:upper_regret_uknown_support}]
The structure of this proof is similar to that of Theorem \ref{thm: UCB-Bias regret}, thus we only highlight the main differences, namely, our reliance on Proposition \ref{prop:concent_ineq_support} and Lemma \ref{lemma:lower_bound_arm_sample_support} which results in the regret terms $\Gamma_{\text{SE}}(\alpha,\beta,S_i,\kappa,\Delta_i,t)$.  

At each round
$t$, the player chooses a sub-optimal $i$ arm if $\Delta_{i}>0$
and at least one of the following events occurs:
\renewcommand{\labelenumi}{\Roman{enumi}''.}
\renewcommand{\theenumi}{\Roman{enumi}''}
\begin{enumerate}
\item \label{item:first_error_sup} The entropy of the best arm is significantly
underestimated: 
\[
\hat{H}_{\text{SE}}(\boldsymbol{X}_{i^{*}}(t-1),\delta_{\alpha}(t),N_{i^{*}}(t-1)))+\ucb_{\text{SE}}(\boldsymbol{X}_{i}(t-1),\delta_{\alpha}(t),N_{i}(t-1)))\leq H_{i^{*}}.
\]
\item \label{item:second_error_sup} The entropy of arm $i$ is significantly
overestimated:
\[
\hat{H}_{\text{SE}}(\boldsymbol{X}_{i}(t-1),\delta_{\alpha}(t),N_{i}(t-1)))>H_{i}+\ucb_{\text{SE}}(\boldsymbol{X}_{i}(t-1),\delta_{\alpha}(t),N_{i}(t-1))).
\]
\item \label{item:third_error_sup} The upper confidence interval is significantly
larger than the gap
\[
\ucb_{\text{SE}}(\boldsymbol{X}_{i}(t-1),\delta_{\alpha}(t),N_{i}(t-1)))>\Delta_{i}/2.
\]
\end{enumerate}
As in the proof of Theorem \ref{thm: UCB-Bias regret}, if all three events \ref{item:first_error_sup}-\ref{item:third_error_sup}
are false, then 
\begin{align}
 & \hat{H}_{\text{SE}}(\boldsymbol{X}_{i^{*}}(t-1),N_{i^{*}}(t-1)))+\ucb_{\text{SE}}(\boldsymbol{X}_{i^{*}}(t-1),\delta_{\alpha}(t),N_{i^{*}}(t-1))\nonumber \\
 & >H_{i^{*}}=H_{i}+\Delta_{i} \\
 & \geq H_{i}+2\cdot\ucb_{\text{SE}}(\boldsymbol{X}_{i}(t-1),\delta_{\alpha}(t),N_{i}(t-1)) \\
 & \geq\hat{H}_{\text{SE}}(\boldsymbol{X}_{i}(t-1),N_{i}(t-1)))+\ucb_{\text{SE}}(\boldsymbol{X}_{i}(t-1),\delta_{\alpha}(t),N_{i}(t-1)),
\end{align}
which contradicts the assumption that Algorithm \ref{alg:A-UCB-general}
chooses $I_{t}=i$ at the $t$th round. 

By Lemma \ref{lemma:lower_bound_arm_sample_support},
if $N_i(t-1)\geq \Gamma_{\text{SE}}(\alpha,\beta,S_i,\kappa,\Delta,t)$ then
$\ucb_{\text{SE}}(\boldsymbol{X}_{i}(t-1),t^{-\alpha},N_i(t-1))\leq\Delta_i/2$ for every $\boldsymbol{X}_{i}(t-1)\in {\cal S}_i^{N_i(t-1)}$, that is, the event \ref{item:third_error_sup} does not occur.
The analysis then follows as in the proof of Theorem \ref{thm: UCB-Bias regret}. Specifically, we upper bound the probability that the events \ref{item:first_error_sup} and \ref{item:second_error_sup} occur
based on Proposition \ref{prop:UCB_error_prob_etropy_est_unkown_support} as follows:
\begin{align}
 & \sum_{\tau=1}^{t}\left[\P\left(\text{\ref{item:first_error_sup} is true at round }\tau\right)+\P\left(\text{\ref{item:second_error_sup} is true at round }\tau\right)\right]\nonumber \\
 & \leq2\sum_{\tau=1}^{t}\frac{2}{\tau^{\alpha-1}}\leq \frac{4(\alpha-1)}{\alpha-2}.
\end{align}

\end{IEEEproof}

\bibliographystyle{IEEEtran}

\end{document}